\numberwithin{equation}{section}
\theoremstyle{definition}
\newtheorem{definition}{Definition}[section]
\theoremstyle{plain}
\newtheorem{Theorem}[definition]{Theorem}
\newtheorem{Proposition}[definition]{Proposition}
\newtheorem{Lemma}[definition]{Lemma}
\newtheorem{Corollary}[definition]{Corollary}
\theoremstyle{remark}
\newtheorem{remark}[definition]{Remark}
\newcommand\red[1]{{\color{red} {#1}}}
\newcommand{\R}{\mathbb R}
\newcommand{\eps}{\varepsilon}
\newcommand{\Ric}{\mathrm{Ric}}
\newcommand{\sse}{\subseteq}
\newcommand{\go}{\mathring{g}}
\newcommand{\Mo}{\mathring{M}}
\newcommand{\fo}{\mathring{f}}
\newcommand{\enumlabelformat}{\roman}
\newcommand{\enumlabelfont}[1]{#1}
\newlength{\thelabelsep}
\setlist{labelsep=\thelabelsep}
\setlist[enumerate]{font=\enumlabelfont,label=(\enumlabelformat*),leftmargin=2.5em}
\setlist[itemize]{leftmargin=2.5em,label=$-$}
\newcounter{inlineenum}
\renewcommand{\theinlineenum}{\enumlabelformat{inlineenum}}
\newcommand\beq{\begin{equation}}
\newcommand\eeq{\end{equation}}
\newcommand\ben{\begin{enumerate}}
\newcommand\een{\end{enumerate}}
\newcommand\bit{\begin{itemize}}
\newcommand\eit{\end{itemize}}
\newcommand{\Io}{\mathring{I}}
\renewcommand\a{\alpha}
\newcommand\s{\sigma}
\renewcommand\d{\partial}
\newcommand\e{\epsilon}
\newcommand\g{\gamma}
\newcounter{mnotecount}
\title{Rigidity of asymptotically  $AdS_2 \times S^2$ spacetimes}
\author[*]{Gregory J. Galloway}
\author[$\dag$]{Melanie Graf}
\affil[*]{Department of Mathematics, 

University of Miami, Coral Gables, FL, USA}
\affil[$\dag$]{Faculty of Mathematics,

University of Vienna, Vienna, Austria}
\begin{document}

\date{\today}

\maketitle

\begin{abstract}

The spacetime $AdS_2 \times S^2$ is well known to arise as the `near horizon' geometry of the extremal Reissner-Nordstrom solution, and for that reason it has been studied in connection with the AdS/CFT correspondence.  Here we consider asymptotically $AdS_2 \times S^2$ spacetimes that obey the null energy condition (or a certain averaged version thereof).
Supporting a conjectural viewpoint of Juan Maldacena, we show that any such spacetime must have a special geometry similar in various respects to $AdS_2 \times S^2$, and under certain circumstances must be isometric to  $AdS_2 \times S^2$.

\end{abstract}

\section{Introduction}

An interesting feature of the spacetime $AdS_2 \times S^2$ is that it arises as the `near horizon' geometry of the extremal Reissner-Nordstrom solution; see e.g. \cite{KL13}.  For this reason, this spacetime (sometimes referred to as the Robinson-Bertotti solution) has been studied in various works in connection with the AdS/CFT correspondence \cite{Mald98}; see e.g. \cite{MMS} and references therein. More recently a class of horizon free supersymmetric solutions to Einstein-Maxwell theory having $AdS_2 \times S^2$ asymptotics has been constructed by Lunin \cite{Lunin}. 
However, on the basis of an example considered in \cite[Section 2.2]{MMS}, and also a result in \cite{GSWW} (Theorem 2.1), Maldacena has suggested that any asymptotically 
$AdS_2 \times S^2$ spacetime that obeys the null energy condition (NEC), or more generally the average null energy condition (ANEC), should be quite special.  
In fact he has suggested the conjecture that any such spacetime should be isometric to $AdS_2  \times S^2$ \cite{MaldPC}.
In particular, consistent with the example in \cite{MMS} mentioned above, $4$-dimensional spacetimes that satisfy the ANEC {\it strictly} could not have $AdS_2 \times S^2$ asymptotics.   All of this suggests that the examples constructed in \cite{Lunin} cannot be globally regular, which, in fact, has since been confirmed by Lunin \cite{MaldPC}.

In this paper we obtain some results on the {\it rigidity} of asymptotically $AdS_2 \times S^2$ spacetimes satisfying the NEC, which support the conjectural picture put forth by Maldacena. While precise statements are postponed to Section \ref{sec:main}, our main result may be paraphrased as follows.

\begin{Theorem}\label{thm:main}
Let $(M,g)$ be an asymptotically $AdS_2 \times S^2$ spacetime (see Definition \ref{def:assumptions}) that satisfies the null energy condition (NEC), $\Ric(X,X) \ge 0$ for all null vectors $X$. Then the following holds.
\ben
\item $(M,g)$ is foliated by smooth totally geodesic null hypersurfaces $N_u \approx \R \times S^2$, $u \in \R$.
\item By time-dualizing, one obtains a second foliation by smooth totally geodesic null hypersurfaces $\hat{N}_v \approx \R \times S^2$, $v \in \R$, transverse to the foliation  $\{N_u\}_{u \in \R}$.  By considering the intersections of the $N_u$'s and $\hat{N}_v$'s, this double null foliation gives rise  to a foliation of $(M,g)$ by totally geodesic isometric round (i.e. constant curvature) 2-spheres S(u,v).  
\een
\end{Theorem}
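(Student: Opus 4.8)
The plan is to work in the conformal completion $(\bar M,\bar g)$ furnished by Definition~\ref{def:assumptions}. Modeled on $AdS_2\times S^2$, whose conformal boundary is a disjoint union of two timelike cylinders $\mathscr{I}_L,\mathscr{I}_R\cong\R\times S^2$, the boundary $\mathscr{I}=\partial\bar M$ has two components $\mathscr{I}_L,\mathscr{I}_R$, each carrying a canonical foliation by round $2$-spheres, say $\{\Sigma_u^L\}_{u\in\R}$ and $\{\Sigma_w^R\}_{w\in\R}$, ordered so that increasing the parameter moves the cross-section to the future. For each $u$ I would set $N_u:=\partial I^-(\Sigma_u^L;\bar M)\cap M$, a priori only a closed achronal $C^0$ hypersurface; in the strip model of $AdS_2\times S^2$ this is the null hypersurface $\{t+x=u\}$, which also equals $\partial I^+(\Sigma^R_{\varphi(u)};\bar M)\cap M$ for a matching right cross-section $\Sigma^R_{\varphi(u)}$, and whose null generators are exactly the null geodesics joining $\Sigma_u^L$ to $\Sigma^R_{\varphi(u)}$. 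The strategy is then: (i) establish that this double description persists and that $N_u$ is a smooth null hypersurface with complete generators running from $\mathscr{I}_L$ to $\mathscr{I}_R$; (ii) use the NEC to conclude each $N_u$ is totally geodesic with round, mutually isometric $S^2$ cross-sections; (iii) assemble the $N_u$ into a foliation and, by time-reversal, produce the transverse family and the $2$-sphere foliation.

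For (i) I would first argue, using the near-$\mathscr{I}$ form of $g$ together with a limit-curve argument in $\bar M$, that the past-inextendible null geodesics generating $\partial I^-(\Sigma_u^L)$ are complete, stay achronal, and acquire a past endpoint on $\mathscr{I}_R$, exactly as in the model (and symmetrically from the $\mathscr{I}_R$ side). The key regularity input is then the maximum principle for null hypersurfaces (Andersson--Galloway--Howard / Galloway): $\partial I^-(\Sigma_u^L)$ and $\partial I^+(\Sigma^R_{\varphi(u)})$ both contain the common null-geodesic congruence, are tangent along it, lie to the appropriate sides of one another, and have null mean curvatures of the right signs (nonpositive for the past boundary of a cross-section, nonnegative for the future boundary of one), so they coincide and are $C^\infty$, with null mean curvature $\theta\equiv 0$ on the overlap. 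For (ii): on each generator $\theta$ vanishes somewhere, the Raychaudhuri equation gives $\theta'\le-\Ric(K,K)-|\sigma|^2\le 0$ for the generator field $K$, and a zero of $\theta$ followed by $\theta<0$ (or preceded by $\theta>0$) would yield a focal point at finite affine parameter --- impossible, since the generators run all the way to $\mathscr{I}$ without leaving $N_u$ --- whence $\theta\equiv 0$ and therefore $\Ric(K,K)\equiv|\sigma|^2\equiv 0$, i.e.\ the null second fundamental form of $N_u$ vanishes identically: $N_u$ is totally geodesic. Vanishing of the null Weingarten map makes $K$ a Killing field of the degenerate induced metric, so all cross-sections of $N_u$ are mutually isometric; matching the near-$\mathscr{I}$ cross-sections, which are asymptotically round $S^2$'s, forces $N_u\cong\R\times S^2$ with round, mutually isometric cross-sections.

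For (iii): since the $\Sigma_u^L$ foliate $\mathscr{I}_L$ in causal order, $u<u'$ gives $\Sigma_u^L\subset I^-(\Sigma_{u'}^L)$ and hence $I^-(\Sigma_u^L)\subseteq I^-(\Sigma_{u'}^L)$; a point $p\in N_u\cap N_{u'}$ would lie on a generator of $\partial I^-(\Sigma_u^L)$ with future endpoint on $\Sigma_u^L\subset I^-(\Sigma_{u'}^L)$, forcing $p\in I^-(\Sigma_{u'}^L)$ and contradicting $p\in N_{u'}=\partial I^-(\Sigma_{u'}^L)$, so the $N_u$ are pairwise disjoint; and as $\bigcup_u I^-(\Sigma_u^L)=M$ while $\bigcap_u I^-(\Sigma_u^L)=\emptyset$, every point of $M$ lies on exactly one $N_u$, giving a smooth foliation $\{N_u\}$ (smoothness from smooth dependence on $u$, or by exhibiting the leaf function $M\to\R$ as a submersion). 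Reversing the time orientation turns $\partial I^-$ of a left cross-section into $\partial I^+$ of a left cross-section, i.e.\ runs the same construction with the ``time-reversed'' congruence and yields a second smooth totally geodesic null foliation $\{\hat N_v\}$ (the $\{t-x=\mathrm{const}\}$ family in the model), transverse to $\{N_u\}$ because the two null congruences are everywhere transverse in the model and the asymptotic conditions propagate this. Each point of $M$ then lies on a unique $N_u$ and a unique $\hat N_v$, and $S(u,v):=N_u\cap\hat N_v$ is a spacelike surface that is simultaneously a cross-section of the totally geodesic $N_u$ and of the totally geodesic $\hat N_v$, hence a round $2$-sphere isometric to all the others; moreover, at a point of $S(u,v)$ the null normals of $N_u$ and $\hat N_v$ span the $g$-normal bundle and, by total geodesy, $\nabla_X Y$ remains tangent to each of $N_u$ and $\hat N_v$ whenever $X,Y$ are tangent to $S(u,v)$, so $(\nabla_X Y)^{\perp}=0$ and $S(u,v)$ is totally geodesic in $(M,g)$.

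I expect the main obstacle to be the global causal analysis in step (i): showing that $\partial I^-(\Sigma_u^L;\bar M)$ is a complete, \emph{everywhere}-smooth, achronal null hypersurface that coincides with $\partial I^+(\Sigma^R_{\varphi(u)};\bar M)$. Asymptotically $AdS_2\times S^2$ spacetimes are not globally hyperbolic, so there is no Cauchy surface to fall back on; one must instead exploit the causal structure induced on $\mathscr{I}$ and control the generating null geodesics all the way across the bulk, excluding both focal points and premature exit from the achronal boundary. Once that is secured, the Raychaudhuri/maximum-principle step (ii) and the foliation assembly in step (iii) are essentially standard.
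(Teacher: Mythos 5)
Your proposal rests on a premise that Definition~\ref{def:assumptions} does not actually provide: you assume the definition ``furnishes'' a conformal completion $(\bar M,\bar g)$ with a timelike boundary $\mathscr{I}=\mathscr{I}_L\sqcup\mathscr{I}_R\cong(\R\times S^2)\sqcup(\R\times S^2)$. It does not. Definition~\ref{def:assumptions} is a decay-rate condition on $h=g-\go$ in the coordinates $(t,x,\omega)$, not a conformal-completion hypothesis; and the paper explicitly says (Introduction, paragraph on approaches to the asymptotics) that the conformal-boundary formulation is being \emph{postponed to a subsequent paper}, in part because the appropriate notion here is a \emph{singular} timelike conformal boundary --- since $AdS_2\times S^2$ does not admit an ordinary smooth conformal compactification. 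So defining $N_u:=\partial I^-(\Sigma_u^L;\bar M)\cap M$ presupposes objects you have not built and cannot read off the given definition. The actual paper sidesteps this by working entirely within $M$: null lines $\eta_u$ with $\lim_{s\to-\infty}t(\eta_u(s))=u$ are produced by limit-curve arguments (Prop.~\ref{prop:existenceofnulllines}), and $N_u$ is then $\partial J^+(\eta_u)$, with smoothness, total geodesy and $\partial J^+(\eta_u)=\partial J^-(\eta_u)$ coming from the null splitting theorem.

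A second, independent gap is in your foliation-completeness step. You argue that because $\bigcup_u I^-(\Sigma_u^L)=M$ and $\bigcap_u I^-(\Sigma_u^L)=\emptyset$, ``every point of $M$ lies on exactly one $N_u$.'' This inference does not follow: a monotone nested family of pasts can exhaust $M$ while the associated boundaries still miss points, if the map $u\mapsto N_u$ has jumps. The paper's Theorem~\ref{thm:noholes} proves precisely the absence of such jumps by studying the function $f(u)=t(N_u\cap\sigma)$ along a fixed $t$-line $\sigma$, showing $f$ is strictly increasing and continuous (via a null-line limit argument at any would-be discontinuity), hence surjective onto $\R$. Some argument of this kind --- continuity of the leaf parameter, not just monotone exhaustion --- is required, and your proposal omits it.

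Beyond those two gaps, your step (ii) (Raychaudhuri plus a null maximum principle to get $\theta\equiv0$ and vanishing shear) is essentially the content of the null splitting theorem the paper invokes, and your step (iii) (total geodesy of $S(u,v)$ from total geodesy of $N_u,\hat N_v$, isometry of cross-sections from vanishing null Weingarten map) matches the paper's Prop.~\ref{prop:isometric}. Your roundness argument (``asymptotically round cross-sections $\Rightarrow$ all cross-sections round'') is the right idea but is done more carefully in the paper (Prop.~\ref{prop:gaussCurvature}, Thm.~\ref{thm:round2spheres}), where one pushes a sphere $S_{u_0,v}$ out to large $|x|$, estimates its Gauss curvature against $1$, and then transports that back by the isometry. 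You correctly flag the global causal analysis as the hard step; that is where the bulk of Section~2.4 and Propositions~\ref{prop:existenceofnulllines}--\ref{prop:sametlimit} live, and it has to be done with the decay estimates rather than by appealing to a boundary you have not constructed.
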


\medskip
The properties (i) and (ii) are, of course, basic features of $AdS_2 \times S^2$.  One of the main results leading to the proof of Theorem \ref{thm:main}, Proposition  \ref{prop:existenceofnulllines},
together with a known result concerning the existence of null conjugate points \cite{Tipler78, EK94} confirms that there do not exist any asymptotically $AdS_2 \times S^2$ spacetimes obeying the strict ANEC.

While we have stated Theorem \ref{thm:main}  and Theorem \ref{thm:main2} below
with respect to the NEC, in fact both results remain valid under a weaker curvature condition (which, however, is stronger than the ANEC): It is sufficient to assume that along all future or past complete {\it null rays} $\eta: [0,\infty) \to M$, one has,
\beq\label{eq:averic}
\int_0^{\infty} \Ric(\eta'(s),\eta'(s)) ds \ge 0  \,.
\eeq
In order to simplify a bit the presentation of the proofs of Theorems \ref{thm:main} and \ref{thm:main2}, we postpone to an appendix a discussion of the changes needed to prove these theorems under the curvature condition \eqref{eq:averic}.

Theorem \ref{thm:main}  falls short of showing that $(M,g)$ splits as a metric product along the totally geodesic $2$-spheres.  A necessary condition for this is that the distribution of timelike $2$-planes orthogonal to the $2$-spheres be integrable.  
However, a recent example of Paul Tod \cite{Tod} shows that even if $(M,g)$ did admit such a product structure, it need not be isometric to $AdS_2 \times S^2$.
In general, some additional condition is needed to show that $(M,g)$ is isometric to $AdS_2 \times S^2$.  Such a condition is considered in the next result.
 
 Although  not itself an Einstein manifold, $AdS_2 \times S^2$ is a product of Einstein manifolds, and as such its Ricci tensor is covariant constant, $\nabla \Ric = 0$.  Under this added assumption we obtain the following.

\begin{Theorem}\label{thm:main2}
Let $(M,g)$ be an asymptotically $AdS_2 \times S^2$ spacetime that satisfies the NEC.  If the Ricci tensor is covariant constant then $(M,g)$ is globally isometric to $AdS_2 \times S^2$.
\end{Theorem}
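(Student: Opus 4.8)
\medskip
\noindent\emph{Proof plan.}
The plan is to use the structure furnished by Theorem~\ref{thm:main} to pin down the Ricci operator $\widehat{\Ric}:=g^{-1}\Ric$. Write $TM=H\oplus V$, where $V$ is the spacelike distribution tangent to the totally geodesic round spheres $S(u,v)$ and $H=V^{\perp}$ is the timelike distribution spanned, at each point, by the affinely parametrized null generators $\ell$ of $N_u$ and $\hat\ell$ of $\hat N_v$ through that point, normalized so that $g(\ell,\hat\ell)=-1$; note that $\ell^{\perp}=\langle\ell\rangle\oplus V$ and $\hat\ell^{\perp}=\langle\hat\ell\rangle\oplus V$. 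The goal is to show that $\widehat{\Ric}$ is block diagonal with respect to $H\oplus V$ and equals a constant multiple of the identity on each block. Granting this, $\nabla\Ric=0$ forces $H$ and $V$ to be parallel distributions, the de Rham--Wu splitting theorem yields a \emph{local} isometric product of a Lorentzian and a Riemannian surface of constant curvature, and the asymptotic hypotheses together with Theorem~\ref{thm:main} then upgrade this to a \emph{global} isometry with $AdS_2\times S^2$.

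First I would record the pointwise Ricci data coming from Theorem~\ref{thm:main}. Since $N_u$ is a totally geodesic null hypersurface, its generators form a geodesic, hypersurface-orthogonal (hence twist-free), shear-free and expansion-free null congruence, so the Raychaudhuri equation gives $\Ric(\ell,\ell)=0$, and likewise $\Ric(\hat\ell,\hat\ell)=0$; thus $\Ric|_{H\times H}$ has matrix $\bigl(\begin{smallmatrix}0&c\\ c&0\end{smallmatrix}\bigr)$ in the basis $(\ell,\hat\ell)$, with $c=\Ric(\ell,\hat\ell)$. Next I would show that the mixed components vanish, i.e.\ $\Ric(\ell,X)=\Ric(\hat\ell,X)=0$ for $X\in V$. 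Parallel transport $P(s)$ along a generator $\gamma$ (with $\gamma'=\ell$, $\nabla_\ell\ell=0$) preserves $\ell^{\perp}$ and $\langle\ell\rangle$, hence descends to an isomorphism $V_{\gamma(0)}\to V_{\gamma(s)}$; writing $P(s)X=\phi(s)\ell+X_V(s)$ with $X_V(s)\in V_{\gamma(s)}$ and using $\nabla\Ric=0$, $\nabla_\ell\ell=0$ and $\Ric(\ell,\ell)=0$, the quantity $\Ric(\ell,X_V(s))=\Ric(\ell,P(s)X)$ is constant along $\gamma$. Since $\gamma$ is complete (Theorem~\ref{thm:main}(i)) and exits into the asymptotic region, where $\Ric$ converges to that of $AdS_2\times S^2$ --- for which the corresponding component vanishes --- that constant is $0$. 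Hence $\widehat{\Ric}$ preserves both $H$ and $V$, and combining with the displayed matrix above, $\widehat{\Ric}|_H=-c\,\mathrm{Id}_H$. Finally, because the leaves $S(u,v)$ are totally geodesic in $M$ and $\nabla\Ric=0$, the self-adjoint operator $\widehat{\Ric}|_V$ is parallel for the induced (round) metric on $S(u,v)$; since the holonomy of a round $2$-sphere is all of $SO(2)$, this forces $\widehat{\Ric}|_V=\mu\,\mathrm{Id}_V$ with $\mu$ constant on each sphere, hence globally constant (being an eigenvalue of the parallel tensor $\widehat{\Ric}$), and likewise $c$ is constant.

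It then remains to identify the constants and to globalize. Comparing $\widehat{\Ric}=(-c)\,\mathrm{Id}_H\oplus\mu\,\mathrm{Id}_V$ with $AdS_2\times S^2$ in the asymptotic region and propagating the limiting values inward via $\nabla\Ric=0$ gives $c=\mu=1/r_0^2$, where $r_0$ is the common radius of the $AdS_2$ and $S^2$ factors; in particular the eigenvalues $-1/r_0^2$ and $1/r_0^2$ are distinct, so $H$ and $V$ are parallel, nondegenerate, complementary distributions. De Rham--Wu then yields, locally, $M\cong L^2\times\Sigma^2$ with $(L^2,g_L)$ Lorentzian of constant curvature $-1/r_0^2$ and $(\Sigma^2,g_\Sigma)$ Riemannian of constant curvature $1/r_0^2$. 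The $\Sigma$-leaves are the compact spheres $S(u,v)$, so $(\Sigma^2,g_\Sigma)$ is the round sphere of radius $r_0$; the $L$-leaves carry, by Theorem~\ref{thm:main}, two transverse foliations by complete null geodesics together with the prescribed asymptotics, which pin $(L^2,g_L)$ down to $AdS_2$. Since $(M,g)$ is simply connected and complete --- as part of, or a consequence of, Definition~\ref{def:assumptions} together with the global foliated structure of Theorem~\ref{thm:main} --- the local splitting is in fact global, and $(M,g)$ is globally isometric to $AdS_2\times S^2$.

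The step I expect to be most delicate is the use of the asymptotic hypotheses: first to make the mixed Ricci components $\Ric(\ell,X)$ vanish --- total geodesy alone does not suffice, since in Newman--Penrose terms the relevant identity only gives $\Phi_{01}=\Psi_1$, consistent with Tod's example \cite{Tod}, so one genuinely must combine $\nabla\Ric=0$, completeness of the generators, and convergence of $g$ to the model at infinity --- and second to fix the two Ricci eigenvalues to be $\pm 1/r_0^2$. Making these arguments rigorous requires tracking the geometrically determined null frame along a generator $\gamma$ as $\gamma$ runs out into the asymptotic region, which is the technical heart of the matter. A secondary obstacle is passing from the de Rham--Wu \emph{local} splitting to a \emph{global} isometry, for which one exploits the explicit global foliations of Theorem~\ref{thm:main} rather than de Rham's theorem in the abstract.
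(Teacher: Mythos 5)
Your proposal takes a genuinely different route from the paper's. The paper does not attempt to compute the block structure of $\Ric$ by hand. Instead it first proves a stand-alone local structure theorem (Theorem \ref{thm:localsplitting}): for any four-dimensional spacetime with $\nabla\Ric=0$, $R=0$ and $\Ric$ non-degenerate, Senovilla's holonomy lemma \cite[Lemma~3.1]{Seno08} plus the exclusion of a covariantly constant null vector field (which would force $\Ric$ degenerate) show the local holonomy is non-degenerately reducible, and Wu's theorem then yields a local metric product $L^2\times P^2$; the equations $\Ric_L=\lambda_L g_L$, $\Ric_P=\lambda_P g_P$ with $\lambda_L+\lambda_P=0$ follow automatically from indecomposability of each factor. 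The asymptotics enter only to verify the hypotheses $R=0$ and $\Ric$ non-degenerate (Remark \ref{rem:localsplitting} and Corollary \ref{cor:localsplitting}); the delicate frame-tracking along complete generators is not needed. The foliations from Theorem \ref{thm:main} are then used only to (a) identify $T_qL=\mathrm{span}\{n_q,\hat n_q\}$ and $T_qP=T_qS_{u_q,v_q}$ via $\Ric(n,n)=0$, (b) fix $\lambda_P=1$ from the round spheres, and (c) globalize via Ponge--Reckziegel (not de Rham--Wu), using completeness of the spherical leaves, null completeness of the generators, and simple connectedness. What the paper's route buys is precisely the avoidance of your ``technical heart'': by invoking the holonomy classification one never has to prove that the mixed components $\Ric(\ell,X)$ vanish, nor that $\widehat\Ric|_V$ is a multiple of the identity. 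What your route would buy, if carried through, is a more hands-on account of how the product structure emerges from the double null foliation.

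That said, there is a real gap in your outline at exactly the place you flag. The claim that $\Ric(\ell,X_V(s))$ tends to $0$ as $s\to\infty$ is left unproved, and it is not a formal consequence of the listed ingredients: you must show (i) that the coordinate components of the affinely parametrized tangent $\ell(s)$ and the parallel-transported spacelike vectors decay/stay bounded at rates compatible with the coordinate bounds \eqref{eq:curvatureasymptotics}, and (ii) that the angular part of $\ell$ decays (which in the paper is Lemma \ref{lem:angularvelocitiesfornulllines} and already uses the NEC). Neither $\nabla\Ric=0$ nor total geodesy by themselves give you this, as your own Newman--Penrose remark and Tod's example show. Moreover, to invoke de Rham--Wu globally you need more than ``simply connected and complete''; the paper does not have (and does not claim) geodesic completeness of $M$, which is why it reaches for Ponge--Reckziegel \cite[Cor.~2]{PR}, whose hypotheses (two complementary totally geodesic foliations, one of which has complete leaves, plus simple connectedness) are exactly what Theorem \ref{thm:round2spheres} and Lemma \ref{lem:maxnulliscomplete} deliver. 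As written, your globalization step is too optimistic; one also still needs the final observation that a null-geodesically complete nonempty open subset of $AdS_2$ is all of $AdS_2$, which the paper supplies and your sketch elides.
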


\noindent
{\it Remark:}  $AdS_2 \times S^2$ has vanishing scalar curvature, $R = 0$. If one could establish a local metric splitting along the totally geodesic 2-spheres, then adding this condition to the assumptions of Theorem \ref{thm:main}, would be sufficient  to give that $(M,g)$ is isometric to $AdS_2 \times S^2$ (cf. section 4).  This condition, in particular, rules out examples like that of Tod.

\medskip
We would like to say a word about the approach to the asymptotics taken here.  One possible approach to the asymptotics, which will be considered in a subsequent paper, is to introduce a notion of a `singular' timelike conformal boundary.  In fact, $AdS_2 \times S^2$ admits, in a fairly natural way, such a boundary.   The more customary analytic approach taken in the present paper, is to require that the spacetime metric $g$ asymptote at a suitable rate, with respect to a natural coordinate system, to the $AdS_2 \times S^2$ metric $\go$ on approach to infinity.  This approach to the asymptotics gives strong control over the causal structure and allows one to obtain  rather fine geometric properties needed to establish Theorem \ref{thm:main}.

In Section 2 we give the formal definition of an asymptotically $AdS_2 \times S^2$ spacetime, and derive some consequences of the assumed asymptotics.  In Section 3 we establish the existence of a foliation by totally geodesic null  hypersurfaces, and a foliation by totally geodesic isometric round $2$-spheres, thereby establishing Theorem~\ref{thm:main}.  In Section 4 we present a proof of Theorem \ref{thm:main2}. 

\medskip

\noindent
\textsc{Acknowledgements.}  The authors are very grateful to Juan Maldacena for bringing this problem to their attention and for many valuable comments. The authors would also like to thank Eric Ling for his interest in this work and for helpful comments. GJG's research was partially supported by the NSF under the grant DMS-171080. MG's research was supported by project P28770 of the Austrian Science Fund FWF and a scholarship from the Austrian Marshall Plan Foundation to visit the University of Miami.

\section{Asymptotically $AdS_2 \times S^2$ spacetimes}

In this section we describe in a precise manner what it means for a spacetime $(M,g)$ to be asymptotically $AdS_2 \times S^2$, and we obtain some consequences of these assumed  asymptotics.

\subsection{Exact $AdS_2 \times S^2$ space} \label{sec:go}

Let $\Mo=\R \times \R \times S^2$. We set 
\[\go =-\cosh^2(x) dt^2 +dx^2 +d\Omega^2.\] 

For future reference, the non-zero Christoffels for this metric are
\[\mathring{\Gamma}^t_{tx}=\tanh(x),\, \mathring{\Gamma}^x_{tt}=\cosh(x)\sinh(x),\, \mathring{\Gamma}^\phi_{\phi\theta}=\cot(\theta),\, \mathring{\Gamma}^\theta_{\phi\phi}=\sin(\theta)\cos(\theta),\] the Riemann tensor can be expressed as
\[\mathring{R}=\mathring{R}_{AdS_2}+\mathring{R}_{S^2}\]
and the same holds for $\mathring{\Ric}$. Explicitly one has
\[\mathring{\Ric}_{tt}=\cosh(x)^2,\, \mathring{\Ric}_{xx}=-1,\, \mathring{\Ric}_{\theta\theta}=1,\, \mathring{\Ric}_{\phi\phi}=\sin^2(\theta).\]The scalar curvature vanishes.
Note that while $AdS_2\times S^2$ is not an Einstein manifold, one can still nicely express $\mathring{\Ric}$ in terms of the metric: $\mathring{\Ric}=-\go+2d\Omega^2$.

\subsection{The metrics $\go_\alpha$}
To get a better handle on the asymptotics of $g$ we will further define a family of metrics $\go_\alpha$ ($\alpha \in \R_+$) on $\Mo $ via
\[\go_\alpha =-\alpha \cosh^2(x) dt^2 +dx^2 +d\Omega^2. \]
The importance of these metrics for the asymptotics lies in Lemma \ref{lem:betaalpha}, stating that, in essence, there exist $\beta >1$ and $\alpha <1$ such that far out $\go_{\alpha}\prec g \prec \go_{\beta}$ and $\beta, \alpha \to 1$ as one approaches infinity.
(Recall, for Lorentzian metrics $g_1$ and $g_2$, $g_1 \prec g_2$ means that the null cones of $g_2$ are wider than those of $g_1$ in the sense that for any vector $X \ne 0$, if $g_1(X,X) \le 0$ then $g_2(X,X) < 0$).

Along the null curves of $\go_\a$ with $\theta = \theta_0$, $\phi = \phi_0$, one has
$$
dt = \pm \frac1{\sqrt{\a} \cosh x} dx \,. 
$$
Integrating gives the following. 

\begin{Lemma}[Null curves for $\go_\alpha$]\label{lem:nulllinseforgbeta}
	The curves $s\mapsto (f_\alpha(s,t_0,x_0), s+x_0,\phi_0,\theta_0)$ and $s\mapsto (-f_\alpha(s,-t_0,x_0), s+x_0,\phi_0,\theta_0)$, where
$$
f_\alpha(s,t_0,x_0):=\frac{2}{\sqrt{\alpha}} \big( \tan^{-1}(e^{s+x_0})-\tan^{-1}(e^{x_0}) \big) +t_0 \,,
$$
are future, resp.\ past, directed achronal null curves in $(\Mo, \go_{\alpha})$ passing through the point $(t_0,x_0,\phi_0,\theta_0)$.
\end{Lemma}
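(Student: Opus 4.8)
The plan is to verify by direct computation that each listed curve passes through $(t_0,x_0,\phi_0,\theta_0)$, is null, and has the asserted time orientation, and then to reduce the achronality assertion to the achronality of a straight null line in a flat strip.

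For the first family put $\gamma(s):=\big(f_\a(s,t_0,x_0),\,s+x_0,\,\phi_0,\,\theta_0\big)$. At $s=0$ the two arctangent terms in $f_\a$ cancel, so $\gamma(0)=(t_0,x_0,\phi_0,\theta_0)$, and likewise $-f_\a(0,-t_0,x_0)=t_0$ for the second family. Differentiating, the angular components are constant, the $x$-component has derivative $1$, and from $\frac{d}{ds}\tan^{-1}(e^{s+x_0})=\frac{e^{s+x_0}}{1+e^{2(s+x_0)}}=\frac{1}{2\cosh(s+x_0)}$ one gets $\frac{d}{ds}f_\a(s,t_0,x_0)=\frac{1}{\sqrt\a\,\cosh(s+x_0)}$. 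Evaluating $\go_\a$ on $\gamma'$ with $x=s+x_0$, the $d\Omega^2$-term drops out and $-\a\cosh^2 x\cdot\frac{1}{\a\cosh^2 x}+1=0$, so $\gamma$ is null; the second family is identical up to signs. Since the $t$-component of $\gamma'$ equals $\frac{1}{\sqrt\a\,\cosh x}>0$, the first family is future-directed (with $\partial_t$ future-directed), and since it is $-\frac{1}{\sqrt\a\,\cosh x}<0$ for the second, that family is past-directed.

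For achronality, observe first that $\gamma$ lies in the slice $\Sigma_0:=\R^2\times\{(\theta_0,\phi_0)\}$, on which $\go_\a$ restricts to $h:=-\a\cosh^2 x\,dt^2+dx^2$; since the $d\Omega^2$-part of $\go_\a$ is positive semidefinite, $h(\pr_*X,\pr_*X)\le\go_\a(X,X)$ for every tangent vector $X$, where $\pr:\Mo\to\R^2$ is the projection. Hence the $\pr$-image of any $\go_\a$-timelike curve joining two points of $\gamma$ is an $h$-timelike curve joining the corresponding points of $\R^2$, so it suffices to prove $\gamma$ achronal in $(\R^2,h)$. Substituting $y:=2\tan^{-1}(e^{x})-\frac\pi2$ (so $dy=\frac{dx}{\cosh x}$ and $y\in(-\frac\pi2,\frac\pi2)$) and then $\tilde y:=y/\sqrt\a$ yields $h=\a\cosh^2 x\,(-dt^2+d\tilde y^2)$, exhibiting $(\R^2,h)$ as globally conformal to the flat strip $\R_t\times(-\frac{\pi}{2\sqrt\a},\frac{\pi}{2\sqrt\a})_{\tilde y}$. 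Since the chronology relation is conformally invariant, it is enough to check achronality in this strip, where $\gamma$ satisfies $\frac{dt}{ds}=\frac{d\tilde y}{ds}$ (resp.\ $-\frac{d\tilde y}{ds}$ for the second family), i.e.\ traces a straight null line $t\mp\tilde y=\mathrm{const}$. If $p,q$ lie on such a line then $|t(q)-t(p)|=|\tilde y(q)-\tilde y(p)|$, whereas $q\in I^+(p)$ or $p\in I^+(q)$ would force the strict inequality $|t(q)-t(p)|>|\tilde y(q)-\tilde y(p)|$; hence $\gamma$ is achronal.

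The computations are routine; the real content is the achronality statement, since null curves in a Lorentzian manifold need not be achronal. The two ingredients there — the reduction to the $2$-dimensional $\go_\a$-slice (made possible by $h(\pr_*X,\pr_*X)\le\go_\a(X,X)$, so that a chronology-witnessing curve projects to a genuinely timelike one) and the conformal flattening that identifies $\gamma$ with a straight null line — are both straightforward, so I do not anticipate a genuine obstacle beyond careful bookkeeping.
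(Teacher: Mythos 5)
Your proof is correct. The paper gives essentially no proof of this lemma — it just records the null ODE $dt = \pm \frac{1}{\sqrt{\alpha}\cosh x}\, dx$ and says ``integrating gives the following,'' leaving the achronality claim unargued. Your computation of $f_\alpha$, its derivative, and the nullity/orientation check matches the paper's implicit derivation exactly.

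Where you add genuine content is the achronality argument, which is the only non-routine part of the lemma and the only part that actually requires thought (as you correctly note, a null curve in a Lorentzian manifold need not be achronal). Your two-step reduction — projecting to the $AdS_2$ factor via the inequality $h(\pr_*X,\pr_*X)\le\go_\alpha(X,X)$, then conformally flattening $(\R^2,h)$ to the strip $\R_t\times\big(-\tfrac{\pi}{2\sqrt\alpha},\tfrac{\pi}{2\sqrt\alpha}\big)$ by the substitution $y=2\tan^{-1}(e^x)-\tfrac{\pi}{2}$, $\tilde y=y/\sqrt\alpha$ — is clean and correct; the curve then becomes a straight null line $t\mp\tilde y=\mathrm{const}$, whose achronality in the flat strip is the reverse triangle inequality. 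In short: your proof supplies a complete justification where the paper relies on the reader's familiarity with the causal structure of $AdS_2$.

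One very minor bookkeeping point: after reducing to $(\R^2,h)$ via $\pr$, what you are proving achronal is the $\pr$-image of $\gamma$, i.e.\ the curve $s\mapsto(f_\alpha(s,t_0,x_0),\,s+x_0)$, which is indeed the projected $\gamma$ since $\gamma$ sits in the slice $\Sigma_0$ and $\pr|_{\Sigma_0}$ is injective; so the conclusion lifts back unambiguously. This is implicit in your write-up and causes no gap.
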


\begin{remark}\label{rem:behaviouroff}
For future reference we note the following.
\[\lim_{s\to-\infty} f_\alpha(s,t_0,x_0)=t_0-\frac{2 \tan^{-1}(e^{x_0})}{\sqrt{\alpha}}\] and \[\lim_{s\to \infty} f_\alpha(s,t_0,x_0)=t_0-\frac{2 \tan^{-1}(e^{x_0})}{\sqrt{\alpha}}+\frac{\pi}{\sqrt{\alpha}}\leq t_0+\frac{\pi}{\sqrt{\alpha}}.\] 
Since $f_\alpha$ is increasing this means that $f_\alpha(s,t_0,x_0)\leq t_0-\frac{2 \tan^{-1}(e^{x_0})}{\sqrt{\alpha}}+\frac{\pi}{\sqrt{\alpha}}$ for all $s$.
\end{remark}

The timelike futures $\Io^+_\alpha(p)$ are easily seen to satisfy the following.

\begin{Lemma}\label{lem:t+piinIp}
	Let $p=(t_0,x_0,\omega_0)\in (\Mo, \go_{\alpha})$. Then $\{t> t_0+\frac{\pi}{\sqrt{\alpha}}\}\times \R \times \{\omega_0 \} \sse \Io^+_\alpha(p)$.
\end{Lemma}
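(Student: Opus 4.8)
The plan is to produce, for an arbitrary target point $q=(t_1,x_1,\omega_0)$ with $t_1>t_0+\tfrac{\pi}{\sqrt\alpha}$, an explicit future directed timelike curve in $(\Mo,\go_\alpha)$ from $p=(t_0,x_0,\omega_0)$ to $q$. Since $p$ and $q$ lie over the same point $\omega_0\in S^2$, I would keep $\omega\equiv\omega_0$ throughout and work in the $AdS_2$ factor with metric $-\alpha\cosh^2(x)\,dt^2+dx^2$. There $\partial_t$ is everywhere timelike ($\go_\alpha(\partial_t,\partial_t)=-\alpha\cosh^2 x<0$), so every ``vertical'' segment $s\mapsto(\bar t+s,\bar x,\omega_0)$, $s\ge 0$, is a future directed timelike curve; the idea is to first reach the target $x$-value $x_1$ by an (almost) null curve from $p$, and then simply move up in $t$.

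For the first leg I would use Lemma \ref{lem:nulllinseforgbeta} — and, when $x_1<x_0$, its ``left moving'' analogue, obtained in exactly the same way by integrating $dt=-\tfrac{1}{\sqrt\alpha\cosh x}\,dx$ — to get a future directed null, hence causal, curve from $p$ reaching $p'=(t^\ast,x_1,\omega_0)$ with $t^\ast=t_0+\tfrac{2}{\sqrt\alpha}\bigl|\tan^{-1}(e^{x_1})-\tan^{-1}(e^{x_0})\bigr|$. Since $\tan^{-1}$ maps $(0,\infty)$ into $(0,\tfrac\pi2)$ one has $\bigl|\tan^{-1}(e^{x_1})-\tan^{-1}(e^{x_0})\bigr|<\tfrac\pi2$, so $t^\ast<t_0+\tfrac{\pi}{\sqrt\alpha}<t_1$; this is essentially Remark \ref{rem:behaviouroff} restated, namely that the total coordinate-time width of a complete null ray of $\go_\alpha$ is $\tfrac\pi{\sqrt\alpha}$. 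As $t_1>t^\ast$, the vertical segment from $p'$ to $q$ is future directed timelike, and concatenating the causal curve $p\to p'$ with it and using the push-up property ($p\le p'$ and $p'\ll q$ imply $p\ll q$) gives $q\in\Io^+_\alpha(p)$. Since $q$ was arbitrary in $\{t>t_0+\tfrac{\pi}{\sqrt\alpha}\}\times\R\times\{\omega_0\}$, this establishes the Lemma.

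Alternatively, and to avoid quoting the push-up property, I could write down a single smooth timelike curve directly: take $\gamma(s)=(t(s),x(s),\omega_0)$, $s\in[0,1]$, with $t(s)=t_0+s(t_1-t_0)$ and $x(s)$ solving $\dot x=\lambda\sqrt\alpha\cosh(x)$, $x(0)=x_0$, where the constant $\lambda=\tfrac{2}{\sqrt\alpha}\bigl(\tan^{-1}(e^{x_1})-\tan^{-1}(e^{x_0})\bigr)$ is chosen so that $x(1)=x_1$ (note $|\lambda|<\tfrac\pi{\sqrt\alpha}$, and the solution stays bounded on $[0,1]$ since $2\tan^{-1}(e^{x(s)})$ interpolates between $2\tan^{-1}(e^{x_0})$ and $2\tan^{-1}(e^{x_1})$). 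Then $\go_\alpha(\dot\gamma,\dot\gamma)=\dot x^2-\alpha\cosh^2(x)(t_1-t_0)^2=\alpha\cosh^2(x)\bigl(\lambda^2-(t_1-t_0)^2\bigr)<0$ because $|\lambda|<\tfrac\pi{\sqrt\alpha}<t_1-t_0$, and $t$ is strictly increasing, so $\gamma$ is future directed timelike from $p$ to $q$. I do not expect a genuine obstacle here; the only thing requiring care is the sharp constant $\tfrac\pi{\sqrt\alpha}$, which comes from $\int_{-\infty}^{\infty}\tfrac{dx}{\cosh x}=\pi$ together with the strictness of that integral over any proper subinterval of $\R$.
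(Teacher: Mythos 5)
The paper does not give a proof of this lemma (it says the timelike futures are ``easily seen'' to satisfy it), so there is nothing to compare against. Your argument is correct, and both versions you give work: the first-leg-null-plus-vertical-segment argument (together with push-up), and the single explicit timelike curve whose timelikeness reduces to the strict inequality $|\lambda|<\tfrac{\pi}{\sqrt\alpha}<t_1-t_0$, which in turn comes from $2\tan^{-1}(e^{x_1})-2\tan^{-1}(e^{x_0})\in(-\pi,\pi)$, i.e.\ $\int_{-\infty}^{\infty}\tfrac{dx}{\cosh x}=\pi$. This is surely the elementary calculation the authors had in mind; the second version is the cleaner one to record, since it avoids invoking push-up and directly exhibits a timelike curve from $p$ to any such $q$.
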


\subsection{Definition of asymptotically $AdS_2\times S^2$ spacetimes}\label{sec:definition}

Throughout we shall assume that spacetime is {\it causally simple}.  Following \cite[Sec.~3.10]{MS08}, we say that a spacetime is causally simple provided $J^\pm(p)$ is closed for all $p\in M$ and $(M,g)$ is causal (i.e.\ contains no closed causal curves).  As a consequence, the sets $J^\pm(K)$ are closed for all compact sets $K$ in $M$,  and $(M,g)$ is strongly causal.


In order to prove our main results, a careful treatment of the asymptotics, as layed out in the following definition,  is required.

\begin{definition}\label{def:assumptions}  Lat $(M,g)$ be a $4$-dimensional causally simple spacetime. 
We say that $(M,g)$ is asymptotically $AdS_2\times S^2$ provided the following conditions hold.	
\ben
\item[($a_1$)] There exists a closed subset $A\sse M$ such that $M\setminus A^\circ $ is the disjoint union of two manifolds with boundary $M_1$ and $M_2$ such that 
\begin{align*}
M_1 \cong \R \times (-\infty,-a] \times S^2 \;\; &\mathrm{and} \;\; M_2 \cong \R \times [a,\infty ) \times S^2, \\
p  \in M_1 \cup M_2 &\mapsto (t(p),x(p), \omega(p)) \,,   
\end{align*}
$a\geq 1$,  and the boundary $\partial A$ is mapped to $(\R \times \{-a\}\times S^2) \cup (\R \times \{a\}\times S^2)$.   

\item[($a_2$)] For all $p\in A$ and  $k=1, 2$:
		\[I^+(p)\cap M_k \neq \emptyset \; \; \mathrm{and} \;\; I^-(p)\cap M_k\neq \emptyset \]
		and
		\[A\setminus (I^+(p)\cup I^-(p)) \; \; \mathrm{is\; compact.} \]  
					
\item[($b_1$)] We require that there exist constants $c_{ij}>0$ and with $c_{00}<a$, such that for any $p\in M_1 \cup M_2$ and any $\go$-othonormal basis $\{e_i(p)\}_{i=0}^3\sse T_pM$, with $e_0 = \frac1{\cosh x}\frac{\partial}{\partial t}$, 
 \beq \label{eq:decaydef} 
 |h(e_i(p),e_j(p))|\leq \frac{c_{ij}}{|x(p)|},  
 \eeq
where $ h= g|_{M_1\cup M_2}-\go|_{M_1\cup M_2} $.

\item[($b_2$)] We further require the following decay on first derivatives of $h$, i.e., we assume  there exists $C_1 >0$ such that
\beq \label{eq:derivativeestimates}
 |e_k(h(e_i,e_j))(p) |\leq \frac{C_1}{|x(p)|}, \quad |e_0(h(e_i,e_j))(p)| \leq \frac{C_1}{|x(p)|^2}
\eeq 
for $k = 1,2,3$ (note the faster decay on the time derivative).  Additionally, we require the following decay on second derivatives,
\beq \label{eq:tderivativeestimates}
|e_l( e_m( h(e_i,e_j)))(p) |\leq \frac{C_1}{|x(p)|}
\eeq 
for $l,m=0,\dots ,3$.
	\een
\end{definition}

\smallskip

It will  be convenient to require, 
\beq\label{eq:Cestimate}\frac{16\max\{c_{ij}\}}{a}<1  \,.
\eeq

\begin{remark}\label{rem:decayincoords}
In many of the arguments involving the asymptotics \eqref{eq:decaydef}-\eqref{eq:tderivativeestimates} we will not use $\go$-othonormal frames but rather work in specific charts which we will now introduce. Let $(U,\psi)$ denote either of two charts covering $S^2$, with $\psi(U) = \{(\theta, \phi) : \frac{\pi}6 < \theta < \frac{5\pi}6$, $0 < \phi < 2\pi\}$,
and let $\psi: p \to (t(p), x(p), \psi(p))$   be the corresponding chart on $M_1\cup M_2$. From \eqref{eq:decaydef} we see that there exists a constant $C>0$ such that in these charts
\beq \label{eq:decaycoordinates} |h_{ij}|\leq \frac{C}{|x|},\; |h_{ti}|\leq\frac{C\cosh(x)}{|x|}, \; |h_{tt}|\leq \frac{C\cosh^2(x)}{|x|}  \eeq
for $i,j\neq t$. And for $l,m,i,j$ arbitrary and $k\neq t$
\beq \label{eq:decay'coordinates} |\partial_k h_{ij}|\leq \frac{C \cosh^{\#t}(x)}{|x|},\; |\partial_t h_{ij}|\leq \frac{C \cosh^{\#t}(x)}{|x|^2},\; |\partial_l \partial_m h_{ij}|\leq\frac{C\cosh^{\#t}(x)}{|x|} \eeq
where $\#t$ denotes the number of $t$'s appearing as lower indices.

We are also going to need some estimates for the Christoffel symbols and the curvature of $g$. Let now $\#t$ denote the number of $t$'s appearing as lower indices minus the number of $t$'s appearing as upper indices. Using \eqref{eq:decaycoordinates} and \eqref{eq:decay'coordinates} one can show that there exists a constant $C$ such that
\beq \label{eq:christofflasymptotics}
 |\go^{ij}-g^{ij}|\leq \frac{C\cosh^{\#t}(x)}{|x|},\; |\mathring{\Gamma}^k_{ij}-\Gamma^k_{ij}|\leq \frac{C\cosh^{\#t}(x)}{|x|}
\eeq
and
\beq \label{eq:curvatureasymptotics}
 \; |\mathring{R}_{iklm}-R_{iklm}|\leq \frac{C \cosh^{\#t}(x)}{|x|}, \;  |\mathring{\Ric}_{ij}-\Ric_{ij}|\leq \frac{C \cosh^{\#t}(x)}{|x|}, \; |\mathring{R}-R|\leq \frac{C}{|x|}.
\eeq
These estimates follow in a straightforward way from \eqref{eq:decaycoordinates} and \eqref{eq:decay'coordinates}, nevertheless their derivation is carried out in some detail in the appendix.
\end{remark}
\medskip
To simplify the constants appearing in later arguments we will always choose $C$ such that additionally 
\beq C\geq \max\{c_{ij}, C_1\}\,.\eeq

\subsection{Consequences of the asymptotics}

We will start by introducing some notations: First, for any $x_0\in \R$ with $|x_0|\geq a$ we will use the shorthand $\{x=x_0\}$ for the submanifold $\R \times \{x_0\} \times S^2$ of $M_k$ (where $k=1$ for $x_0<0$ and $k=2$ for $x_0>0$). Further, for $r\in \R_+$ and $k=1,2$ we use $M_k(r):=M_k \cap (\R \times \{x: |x|\geq r\} \times S^2)$ to denote the part of $M_k$ that lies between $\{|x|=r\}$ and infinity. We also set $M(r):=M_1(r)\cup M_2(r)$.

\begin{Lemma}\label{lem:betaalpha}
	For any $r\in[a,\infty)$ there exists $\beta_r >1$ and $\alpha_r<1$ such that on $M(r)$  

	\[\go_{\alpha_r}\prec g \prec \go_{\beta_r} \] and one can choose $\beta_r$ and $\alpha_r$ such that $\beta_r$ is decreasing in $r$ and $\alpha_r$ is increasing in $r$ and  $\beta_r, \alpha_r \to 1$ as $r\to \infty $.
\end{Lemma}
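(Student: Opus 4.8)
The plan is to compare the null cones of $g$ and $\go_\alpha$ directly via the quadratic forms, working in the orthonormal frame $\{e_i\}$ from Definition \ref{def:assumptions}. Fix a point $p \in M(r)$ and write any $X \in T_pM$ as $X = \sum_i X^i e_i$. Then $\go(X,X) = -(X^0)^2 + (X^1)^2 + (X^2)^2 + (X^3)^2$, while $\go_\alpha(X,X) = \go(X,X) + (1-\alpha)(X^0)^2$ since $e_0 = \frac1{\cosh x}\partial_t$ is exactly the $\go_\alpha$-unit timelike vector scaled by $\sqrt\alpha$; more precisely one checks $\go_\alpha(e_0,e_0) = -\alpha$, $\go_\alpha(e_i,e_i) = 1$ for $i\ge1$, and $\go_\alpha(e_0,e_i)=0$. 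On the other hand $g(X,X) = \go(X,X) + \sum_{i,j} h(e_i,e_j)X^iX^j$, and by \eqref{eq:decaydef} the error term is bounded by $\frac{\max\{c_{ij}\}}{|x(p)|}\big(\sum_i|X^i|\big)^2 \le \frac{4\max\{c_{ij}\}}{|x(p)|}\sum_i (X^i)^2$.

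\medskip

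First I would establish $g \prec \go_{\beta_r}$. Suppose $g(X,X) \le 0$, i.e. $-(X^0)^2 + \sum_{i\ge1}(X^i)^2 \le -\sum_{i,j}h(e_i,e_j)X^iX^j \le \frac{4\max\{c_{ij}\}}{r}\sum_i(X^i)^2$. Setting $\delta_r := \frac{4\max\{c_{ij}\}}{r}$, which is $<1$ by \eqref{eq:Cestimate} since $r \ge a$, this rearranges to $\sum_{i\ge1}(X^i)^2 \le \frac{1+\delta_r}{1-\delta_r}(X^0)^2$ (one must first note $X^0 \ne 0$, else $X=0$). Then $\go_{\beta_r}(X,X) = -(X^0)^2 + \sum_{i\ge1}(X^i)^2 + (1-\beta_r)(X^0)^2 \le \big(\frac{1+\delta_r}{1-\delta_r} - 1 + (1-\beta_r)\big)(X^0)^2 = \big(\frac{2\delta_r}{1-\delta_r} - (\beta_r - 1)\big)(X^0)^2$, which is strictly negative as soon as $\beta_r > 1 + \frac{2\delta_r}{1-\delta_r}$. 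So choose, say, $\beta_r := 1 + \frac{4\delta_r}{1-\delta_r}$. The argument for $\go_{\alpha_r} \prec g$ is symmetric: if $\go_{\alpha_r}(X,X) \le 0$ then $\sum_{i\ge1}(X^i)^2 \le \alpha_r (X^0)^2 \le (X^0)^2$, and $g(X,X) \le -(X^0)^2 + \sum_{i\ge1}(X^i)^2 + \delta_r\sum_i(X^i)^2 \le -(X^0)^2 + \alpha_r(X^0)^2 + \delta_r(1+\alpha_r)(X^0)^2 \le \big(-(1-\alpha_r) + 2\delta_r\big)(X^0)^2$ (again noting $X^0\ne0$), which is strictly negative provided $\alpha_r < 1 - 2\delta_r$; choose $\alpha_r := 1 - 4\delta_r$, positive since $\delta_r < \tfrac14$.

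\medskip

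Finally, monotonicity and the limits are immediate from the formulas: $\delta_r = \frac{4\max\{c_{ij}\}}{r}$ is strictly decreasing in $r$ and tends to $0$ as $r \to \infty$, hence $\beta_r = 1 + \frac{4\delta_r}{1-\delta_r}$ is decreasing with $\beta_r \to 1$, and $\alpha_r = 1 - 4\delta_r$ is increasing with $\alpha_r \to 1$. One small bookkeeping point is that the bound \eqref{eq:decaydef} is stated pointwise with $|x(p)|$ in the denominator, so on all of $M(r)$ one uses $|x(p)| \ge r$ to get the uniform constant $\delta_r$; this is the only place the definition of $M(r)$ enters.

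\medskip

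The computation is essentially routine; the only thing that requires a little care is verifying that $e_0 = \frac1{\cosh x}\partial_t$ really is $\go_\alpha$-orthogonal to the spatial frame vectors and has $\go_\alpha$-norm $-\alpha$, so that the $\go$-orthonormal frame differs from a $\go_\alpha$-orthonormal frame only in the rescaling of $e_0$ — this is what makes the $(1-\alpha)(X^0)^2$ correction exact rather than merely approximate. I do not anticipate a genuine obstacle here; the lemma is really just a quantitative restatement of the leading-order decay assumption \eqref{eq:decaydef} together with the smallness condition \eqref{eq:Cestimate}.
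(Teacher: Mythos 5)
Your proof is correct and takes essentially the same approach as the paper: compare the quadratic forms $g$, $\go$, $\go_\alpha$ in a $\go$-orthonormal frame aligned with $e_0 = \tfrac{1}{\cosh x}\partial_t$, bound the $h$-contribution by $O(1/r)$ using \eqref{eq:decaydef}, and choose $\alpha_r$, $\beta_r$ close enough to $1$ to absorb it, with \eqref{eq:Cestimate} ensuring the required smallness. The only cosmetic differences are that you give explicit formulas for $\alpha_r$ and $\beta_r$ and argue the $\beta_r$ direction forwards ($g(X,X)\le 0 \Rightarrow \go_{\beta_r}(X,X)<0$) rather than via the contrapositive, whereas the paper normalizes $\sum|v^i|^2=1$ and argues the $\beta_r$ case by contraposition.
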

\begin{proof}
We first show the existence of a suitable $\alpha_r$. Let $\{e_i\}_{i=0}^3$ be a $\go $-orthonormal basis for $T_pM(r)$ and let $v=v^i e_i$ be such that $\go_\alpha (v,v)\leq 0$. We may w.l.o.g. assume  $\sum_{i=0}^3|v^i|^2=1$. 
Now, $\go_\alpha (v,v)\leq 0$ gives	\beq \alpha |v^0|^2 \geq \sum_{i=1}^{3} |v^i|^2= 1-|v^0|^2. \eeq Then
\begin{multline}
g(v,v)=\go (v,v)+ h(v,v)= -|v^0|^2 +\sum_{i=1}^{3}|v^i|^2+ h(v,v) \leq \\
\leq (\alpha -1) |v^0|^2 + h(v,v) \leq (\alpha -1) |v^0|^2+ \frac{16 C}{|x(p)|}\leq (\alpha -1) |v^0|^2+ \frac{16 C}{r}.
\end{multline}
Now if $\alpha <1$ we can use $|v^0|^2\geq \frac{1}{\alpha +1}$ to further estimate
\[g(v,v)\leq \frac{\alpha-1}{1+\alpha} + \frac{16C}{r}. \]
Thus, setting $\alpha_r<\frac{1-\frac{16C}{r}}{1+\frac{16C}{r}}<1$ guarantees $g(v,v)<0$ and since $\frac{1-\frac{16C}{r}}{1+\frac{16C}{r}} \to 0$ as $r\to \infty$ and is strictly decreasing we can choose $\alpha_r$ to be increasing and  $\alpha_r\to 1$.

For $\beta_r$ we note that it suffices to show that $\go_{\beta_r}(v,v)\geq 0$ implies $g(v,v)>0$. Now
$ \go_{\beta}(v,v)\leq 0$ gives $ 1-\sum_{i=1}^{3}|v^i|^2 =|v^0|^2 \leq \frac{1}{\beta} \sum_{i=1}^{3}|v^i|^2$. So we have
\begin{multline}
g(v,v)=\go(v,v)+h(v,v)\geq -|v^0|^2+\sum_{i=1}^{3}|v^i|^2 -|h(v,v)| \geq \\ \geq  (1-\frac{1}{\beta })\sum_{i=1}^{3}|v^i|^2 - \frac{16C}{r} \geq  \frac{1-\frac{1}{\beta }}{1+\frac{1}{\beta}} - \frac{16C}{r}.
\end{multline}
This implies the existence of a suitable $\beta_r$.
\end{proof}

This allows us to bound the time it takes for the entire $S^2$-factor to be contained in the future of a point depending on how far out (in the $x$-direction) this point lies.

\begin{Lemma}\label{lem:S2inIpaftertimetau}
	For any $r\in [a,\infty)$ there exists a time $\tau_r$ such that for any $p\in M_k(r)$
	\[\{t\geq t(p)+\tau_r\}\times \{x(p)\} \times S^2 \sse I^+(p),\]
	$\tau_r$ is decreasing in $r$ and $\tau_r \to 0$ as $r\to \infty $.
\end{Lemma}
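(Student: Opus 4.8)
The plan is to compare the spacetime $(M,g)$ on $M_k(r)$ against the comparison metric $\go_{\beta_r}$ produced by Lemma \ref{lem:betaalpha}, and then read off the desired conclusion from the explicit control on the timelike futures in the exact model metrics $\go_\alpha$ afforded by Lemma \ref{lem:t+piinIp}. The key point is that $\go_{\beta_r} \prec g$ on $M(r)$, so any $\go_{\beta_r}$-timelike curve that stays inside $M(r)$ is automatically $g$-timelike; hence $I^+_{\go_{\beta_r}}(p) \cap (\text{region staying in } M(r)) \subseteq I^+(p)$.

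Concretely, I would first fix $r \geq a$ and let $p = (t(p), x(p), \omega(p)) \in M_k(r)$. Since $\{x = x(p)\}$ together with the half-infinite slab on its ``outer'' side (say $x \le -r$ if $k=1$, or $x \ge r$ if $k=2$) lies entirely in $M_k(r)$, it suffices to find a time $\tau_r$ so that every point $(t, x(p), \omega)$ with $t \ge t(p) + \tau_r$ and $\omega \in S^2$ can be reached from $p$ by a future-directed timelike curve for $\go_{\beta_r}$ that never leaves this outer slab. But Lemma \ref{lem:t+piinIp} applied to the metric $\go_{\beta_r}$ gives exactly $\{t > t(p) + \tfrac{\pi}{\sqrt{\beta_r}}\} \times \R \times \{\omega\} \subseteq \Io^+_{\beta_r}(p)$; inspecting the curves realizing this (concatenations of the achronal null curves of Lemma \ref{lem:nulllinseforgbeta}, which move monotonically in $x$, together with a short timelike segment along $\partial_t$ and a rotation in $S^2$) one checks that these can be arranged to move only outward in $x$ — i.e. to stay in the outer slab — by first going out a little in $x$ and then coming back, or more simply by using Lemma \ref{lem:t+piinIp}'s curves restricted to $x \ge x(p)$ (resp. $x \le x(p)$), since that half is already contained in $M_k(r)$. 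Then $\tau_r := \tfrac{\pi}{\sqrt{\beta_r}} + \epsilon$ for an arbitrarily small $\epsilon$ (or just $\tfrac{\pi}{\sqrt{\beta_r}}$ if one is willing to replace the achronal null curves by genuinely timelike ones, which costs only an extra $o(1)$) works, and since $\beta_r \searrow 1$ as $r \to \infty$ we get $\tau_r$ decreasing with $\tau_r \to \pi$ — wait: here one must be slightly more careful, since $\pi \neq 0$.

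The subtlety — and I expect this to be the main obstacle — is that the naive bound $\tfrac{\pi}{\sqrt{\beta_r}}$ does \emph{not} tend to $0$, so a cruder comparison is insufficient to get $\tau_r \to 0$. The resolution must exploit that $p$ lies far out, i.e. $|x(p)| \geq r$ is large: one should not compare against $\go_{\beta_r}$ with $p$ at an arbitrary $x$-location, but rather note that the relevant achronal null curves of Lemma \ref{lem:nulllinseforgbeta}, by Remark \ref{rem:behaviouroff}, only accumulate a total $t$-displacement of $f_{\beta_r}(\infty) - f_{\beta_r}(-\infty)$-type size \emph{over the whole real line}, whereas over the outer half-slab $s \in [x(p), \infty)$ (for $k=2$) the $t$-displacement is $\lim_{s\to\infty} f_{\beta_r}(s, t_0, x(p)) - t_0 = \tfrac{2}{\sqrt{\beta_r}}(\tfrac{\pi}{2} - \tan^{-1}(e^{x(p)}))$, which tends to $0$ as $x(p) \to \infty$ since $\tan^{-1}(e^{x(p)}) \to \tfrac{\pi}{2}$. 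Thus going ``one way and back'' out the outer slab, the combined $t$-cost of a null zig-zag reaching the antipodal point on $S^2$ is $O(\tfrac{2}{\sqrt{\beta_r}}(\tfrac{\pi}{2} - \tan^{-1}(e^{r})))$, which is $o(1)$ as $r \to \infty$; adding the short timelike perturbation needed to make the curve genuinely timelike (and to cover $t$ exactly equal to $t(p)+\tau_r$) contributes only an additional $\epsilon$ that can be absorbed. Monotonicity of $\tau_r$ in $r$ follows from monotonicity of $\beta_r$ and of the above expression in $r$.

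So the proof structure I would write is: (1) invoke Lemma \ref{lem:betaalpha} to get $\beta_r$; (2) for $p \in M_k(r)$, write down explicit $\go_{\beta_r}$-null curves of Lemma \ref{lem:nulllinseforgbeta} emanating from $p$ and staying in the outer half-slab $M_k(r)$, zig-zagging outward to sweep out the full $S^2$-factor at $t$-cost bounded, via Remark \ref{rem:behaviouroff}, by a quantity $\tau_r \to 0$; (3) perturb these achronal null curves slightly to timelike curves (using that $\prec$ is an open condition and the perturbation stays in $M_k(r)$ if it is small), so that they are $g$-timelike by Lemma \ref{lem:betaalpha}; (4) conclude $\{t \ge t(p) + \tau_r\} \times \{x(p)\} \times S^2 \subseteq I^+(p)$, and note $\tau_r$ decreasing with $\tau_r \to 0$. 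The only genuinely delicate bookkeeping is confirming that the $S^2$-rotation needed to change $\omega$ can be inserted into the null zig-zag at a point of large $|x|$ at negligible $t$-cost — this is where one uses that along $\{x = x_0\}$ with $|x_0|$ large the induced metric's $S^2$-directions are spacelike with $O(1)$ size while the $dt^2$-coefficient $\beta_r \cosh^2 x_0$ is enormous, so a short future-directed timelike curve of the product metric can realize any $S^2$-displacement at $t$-cost $O(\operatorname{diam}(S^2)/\cosh x_0) = o(1)$.
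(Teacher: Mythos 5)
Your argument has the comparison inequality backwards at its foundation. Lemma \ref{lem:betaalpha} gives $\go_{\alpha_r} \prec g \prec \go_{\beta_r}$ on $M(r)$, so $\go_{\beta_r}$ is the \emph{wider}-cone metric: $g$-causal implies $\go_{\beta_r}$-timelike, not the other way around. Thus $I^+_{\go_{\beta_r}}(p) \subseteq I^+(p)$ is false (the inclusion runs the other way), and a $\go_{\beta_r}$-timelike curve can perfectly well be $g$-spacelike. If you want a comparison metric whose causal curves are automatically $g$-timelike, you must use $\go_{\alpha_r}$ (since $\go_{\alpha_r} \prec g$). Separately, Lemma \ref{lem:t+piinIp} controls future points with the \emph{same} $\omega_0$ as $p$, so it doesn't by itself give you the whole $S^2$-factor, which is what this lemma is about; it is not the right tool here.

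The correct and much simpler idea is the one you only mention in passing at the very end: at a fixed $x$-slice with $|x|\ge r$, the $dt^2$-coefficient of $\go_{\alpha_r}$ is $\alpha_r\cosh^2(x)\ge\alpha_r\cosh^2(r)$, so a curve that stays at $x = x(p)$, moves along a unit-speed $S^2$-geodesic, and advances in $t$ at rate $\frac{1}{\sqrt{\alpha_r}\cosh(r)}$ is $\go_{\alpha_r}$-causal and hence $g$-timelike. Since $\operatorname{diam}(S^2) = \pi$, every point of the $x(p)$-slice at the appropriate $\omega$ is reached after at most $\tau_r := \frac{\pi}{\sqrt{\alpha_r}\cosh(r)}$, and one continues along $\partial_t$ to fill in $\{t\ge t(p)+\tau_r\}$. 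This $\tau_r$ is manifestly decreasing and tends to $0$; no zig-zag in $x$ and no use of Lemma \ref{lem:nulllinseforgbeta} or Remark \ref{rem:behaviouroff} is needed. That is exactly the paper's one-line proof. In summary: lead with the $\cosh(r)$ observation, use $\alpha_r$ rather than $\beta_r$, and drop the null zig-zag machinery entirely.
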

\begin{proof}
	Let $\bar{\gamma}:I\to S^2$ be a unit speed geodesic (in $S^2$) starting at $\pi_{S^2}(p)$ and let $\alpha_r$ be the constant from the previous lemma. Since $|x(p)|\geq r$ the curve $\gamma(s):=(t(p)+\frac{1}{\sqrt{\alpha_r}\cosh(r)}s,x(p),\bar{\gamma}(s))$ is causal for $\go_{\alpha_r}$, hence timelike for $g$ by Lemma \ref{lem:betaalpha}.  Noting that $S^2$ has a finite diameter of $\pi$  proves the claim for $\tau_r:=\frac{\pi}{\sqrt{\alpha_r}\cosh(r)}$. This is decreasing and goes to zero as $r\to 0$ because $\alpha_r$ is increasing and $\alpha_r\to 1$.
\end{proof}

We also note the following consequence for null vectors.

\begin{Lemma} \label{lem:newlemma}
	Let $q_{ij}$, $i,j=0,\dots,3$, be smooth functions on $U\sse M_1\cup M_2$ satisfying the asymptotics $|q_{ij}|\leq \frac{C\cosh^{\#t}(x)}{|x|}$ (e.g., $q_{ij}=h_{ij},\mathring{\Gamma}^x_{ij}-\Gamma ^x_{ij},\mathring{\Ric}_{ij}-\Ric_{ij},\dots $). Then there exists a constant $c>0$ such that for any null vector $v\in TU$ we have
	\beq \label{eq:estfornullvec} |q_{ij} v^i v^j|\leq \frac{c}{|x|} (|v^x|^2+|\bar{v}|_{S^2}^2), \eeq 
	where $v^i$ denotes the components of $v$ in one of the charts $\psi $ specified in Remark \ref{rem:decayincoords}.
\end{Lemma}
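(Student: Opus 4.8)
The plan is to exploit the null condition to bound the time component $v^t$ of $v$ by its space components, and then to notice that the factor $\cosh^{\#t}(x)$ allowed in the hypothesis on $q_{ij}$ is precisely what is needed to absorb the resulting powers of $\cosh(x)$, so that every term of $q_{ij}v^iv^j$ comes out of size $O\big(|x|^{-1}(|v^x|^2+|\bar v|_{S^2}^2)\big)$. Concretely, I would fix once and for all the constant $\alpha := \alpha_a < 1$ provided by Lemma \ref{lem:betaalpha} applied with $r = a$, so that $\go_\alpha \prec g$ holds on all of $M_1\cup M_2 \supseteq U$. Writing a null $v\in TU$ in one of the charts $\psi$ of Remark \ref{rem:decayincoords} and abbreviating $B^2 := |v^x|^2 + |\bar v|_{S^2}^2 = |v^x|^2 + |v^\theta|^2 + \sin^2\theta\,|v^\phi|^2$, the inclusion $\go_\alpha \prec g$ together with $g(v,v) = 0 \ge 0$ forces $\go_\alpha(v,v) > 0$, i.e.
\[
\alpha\cosh^2(x)\,|v^t|^2 \;<\; |v^x|^2 + |\bar v|_{S^2}^2 \;=\; B^2 ,
\]
so $\cosh^2(x)\,|v^t|^2 \le \alpha^{-1}B^2$. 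Since the chart domain satisfies $\tfrac\pi6 < \theta < \tfrac{5\pi}6$, one also has $\sin^2\theta \ge \tfrac14$, whence $|v^x|^2,|v^\theta|^2 \le B^2$ and $|v^\phi|^2 \le 4B^2$; in particular $|v^i| \le 2B$ for every $i\ne t$.

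It then remains to split $q_{ij}v^iv^j$ according to the number of $t$-indices and estimate each piece using $|q_{ij}|\le C\cosh^{\#t}(x)/|x|$, the bound $\cosh(x)|v^t|\le \alpha^{-1/2}B$, the bound $|v^i|\le 2B$ for $i\ne t$, and $2|ab|\le a^2+b^2$:
\begin{align*}
|q_{tt}(v^t)^2| &\le \frac{C\cosh^2(x)}{|x|}\,|v^t|^2 \le \frac{C}{\alpha\,|x|}\,B^2,\\
2\Big|\sum_{i\ne t} q_{ti}\,v^t v^i\Big| &\le \sum_{i\ne t}\frac{2C\cosh(x)}{|x|}\,|v^t|\,|v^i| \le \frac{12\,C}{\sqrt{\alpha}\,|x|}\,B^2,\\
\Big|\sum_{i,j\ne t} q_{ij}\,v^i v^j\Big| &\le \sum_{i,j\ne t}\frac{C}{|x|}\,|v^i|\,|v^j| \le \frac{36\,C}{|x|}\,B^2.
\end{align*}
As $\alpha < 1$, summing these three contributions gives $|q_{ij}v^iv^j| \le \frac{c}{|x|}\big(|v^x|^2 + |\bar v|_{S^2}^2\big)$ with, for instance, $c := 49\,C/\alpha_a$, which is the asserted inequality \eqref{eq:estfornullvec}.

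The computation is entirely routine; the one step deserving care is the opening bound of $\cosh^2(x)|v^t|^2$ by a constant multiple of $|v^x|^2 + |\bar v|_{S^2}^2$ for null $v$. I would obtain it from Lemma \ref{lem:betaalpha} rather than directly from $g(v,v)=0$ and the decay of $h$: the direct route would additionally require that the constant controlling $|h_{ij}|$ be small compared to $a$, whereas $\go_\alpha\prec g$ on $M_1\cup M_2$ has already been arranged with no such smallness condition. Once this estimate is available, the matching of the $\cosh^{\#t}(x)$ weights in the hypothesis against the powers of $\cosh(x)|v^t|$ coming from nullity is automatic, so that no $\cosh(x)$ factor survives in any term, and the remaining index bookkeeping can be dispatched quickly.
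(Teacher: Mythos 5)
Your proof is correct and takes essentially the same approach as the paper: both invoke Lemma \ref{lem:betaalpha} with $r=a$ to get $\go_{\alpha_a}\prec g$, use nullity of $v$ to bound $\cosh^2(x)|v^t|^2$ by $\alpha_a^{-1}(|v^x|^2+|\bar v|_{S^2}^2)$, bound the angular components via the chart restriction $\sin\theta>\tfrac12$, and then split $q_{ij}v^iv^j$ by the number of $t$-indices so that the $\cosh^{\#t}(x)$ weight in the hypothesis cancels the powers of $\cosh(x)$ attached to $|v^t|$. The only differences are cosmetic (choice of explicit constant, and a small slip where you wrote ``$g(v,v)=0\ge 0$'' instead of just $g(v,v)=0$).
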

\begin{proof}
		Let $\beta_a$ be as in Lemma \ref{lem:betaalpha}. Then $v$ being null implies $\go_{\alpha_a}(v,v)>0$, which gives the estimate $|v^t|^2<\frac{1}{\alpha_a \cosh^2(x)}(|v^x|^2+|\bar{v}|_{S^2}^2)$. Further, note that in either chart $\psi$ on $S^2$ one always has $|v^\theta| \le |\bar{v}|_{S^2}$ and 
		$|v^\phi|<2|\bar{v}|_{S^2}$, which gives the estimates,
		\begin{align} \label{eq: evilestimate}
		|v^i| |v^j| &\leq \frac{1}{2}(|v^i|^2+ |v^j|^2)\leq 4 (|v_x|^2+|\bar{v}|^2_{S^2}) \,, \\
			|v^t| |v^j| &\leq \frac{1}{\sqrt{\alpha_a} \cosh(x)}\sqrt{|v^x|^2+|\bar{v}|_{S^2}^2}  |v^j| \leq \frac{5}{2} \frac{1}{\sqrt{\alpha_a} \cosh(x)} (|v_x|^2+|\bar{v}|^2_{S^2}) 
		\end{align}
	for $i,j\neq t$. Hence \beq
	|q_{ij} v^i v^j|\leq \sum_{i,j\neq t} \frac{C}{|x|} |v^i| |v^j|+2 \sum_{j\neq t} \frac{C\cosh(x)}{|x|} |v^t| |v^j|+ \frac{C\cosh^2(x)}{|x|} |v^t|^2 \leq \frac{c}{|x|} (|v^x|^2+|\bar{v}|^2_{S^2}) \eeq
	for $c=(36+\frac{15}{\sqrt{\alpha_a}}+\frac{1}{\alpha_a})C$.
\end{proof}

Finally we want to study maximizing null curves. Generally we say that a null curve $\gamma :I\to M$ is a \emph{future (or past) null ray} if $I=[a,b)$ and $\gamma $ is maximizing (i.e., its image is achronal) and future (resp.\ past) directed and future (resp.\ past) inextendible. We say that a null curve $\gamma :I\to M$ is a \emph{null line} if $I=(a,b)$ and $\gamma $ is maximizing and inextendible in both directions.

\begin{Lemma}[Null rays must run to infinity]\label{lem:inextmaxnullrunstoinfty}
	Let $\gamma :I\to M$ be a future null ray.
Then $\gamma $ is eventually contained in one of the $M_k$'s and $|x(\gamma(s))|\to \infty $ as $s\to b$.
\end{Lemma}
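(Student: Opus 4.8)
The plan is to argue by contradiction in two stages: first that $\gamma$ cannot stay in (or repeatedly return to) the interior region $A$, and second that once $\gamma$ is in one of the $M_k$'s the $x$-coordinate must blow up. For the first stage, suppose $\gamma(s)$ fails to leave $A$ eventually, i.e. there is a sequence $s_j \to b$ with $\gamma(s_j) \in A$. By ($a_2$), for any $p \in A$ the set $A \setminus (I^+(p) \cup I^-(p))$ is compact, and $I^\pm(p)$ meets both $M_1$ and $M_2$. I would use strong causality (a consequence of causal simplicity) together with the fact that $\gamma$ is future inextendible to derive that $\gamma$ leaves every compact set; so $\gamma$ cannot remain forever in the compact portion of $A$, hence must enter $I^+(p)$ for some earlier point $p$ on $\gamma$ — but entering $I^+(p)$ contradicts achronality of $\gamma$ (the image of a future null ray is achronal by definition). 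A cleaner version: if $\gamma(s_j) \in A$ for $s_j \to b$, pick $p = \gamma(s_0)$ for small $s_0$; for large $j$, $\gamma(s_j) \notin A \setminus (I^+(p)\cup I^-(p))$ since that set is compact and $\gamma$ escapes compacts, and $\gamma(s_j) \in I^-(p)$ is impossible for $s_j > s_0$, so $\gamma(s_j) \in I^+(p) = I^+(\gamma(s_0))$, contradicting achronality. This shows $\gamma$ is eventually contained in $M_1 \cup M_2$, and by connectedness of the image and the fact that $\partial A$ separates, eventually in a single $M_k$.

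For the second stage, having fixed $k$ so that $\gamma|_{[s_1,b)} \subset M_k$, I must show $|x(\gamma(s))| \to \infty$. Suppose not; then $|x(\gamma(s))|$ stays bounded, say $|x(\gamma(s))| \le R$ for $s \ge s_1$, so $\gamma$ is confined to the region $M_k(a) \setminus M_k(R)$ in the $x$-direction, while $t$ and $\omega$ range freely. The key tool here is Lemma~\ref{lem:betaalpha}: on this region $\go_{\alpha_a} \prec g$, so $\gamma$, being $g$-null (and in particular $g$-causal), is $\go_{\alpha_a}$-timelike or null — in fact since it is $g$-null it is $\go_{\alpha_a}$-timelike on any segment where $g \ne \go_{\alpha_a}$, but at minimum $\go_{\alpha_a}$-causal. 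Now I invoke the explicit structure of $\go_{\alpha_a}$: along a future-directed $\go_{\alpha_a}$-causal curve with bounded $x$, the $t$-coordinate is strictly increasing with a definite rate (since $dt^2$ is bounded below by a multiple of $dx^2 + d\Omega^2$ when $\cosh^2 x \le \cosh^2 R$), so $t(\gamma(s)) \to +\infty$ as $s \to b$ (using that $\gamma$ is future inextendible and escapes compacts, so $s\mapsto t(\gamma(s))$ is unbounded). But then by Lemma~\ref{lem:t+piinIp} (or Lemma~\ref{lem:S2inIpaftertimetau}), once $t(\gamma(s))$ has increased by more than $\pi/\sqrt{\alpha_a}$ past $t(\gamma(s_1))$, the point $\gamma(s)$ lies in $I^+_{\alpha_a}(\gamma(s_1)) \subseteq I^+(\gamma(s_1))$, again contradicting achronality of $\gamma$. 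Hence $|x(\gamma(s))| \to \infty$.

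The main obstacle I anticipate is making the "escapes every compact set" step fully rigorous and correctly handling the possibility that $\gamma$ oscillates between $A$ and the $M_k$'s, or between $M_1$ and $M_2$, before settling down — one needs the topological separation provided by ($a_1$) (that $\partial A$ maps to the two cylinders $\R \times \{\pm a\} \times S^2$) together with achronality to rule out $\gamma$ crossing back and forth infinitely often. Another delicate point is that $\gamma$ is a priori only continuous/causal as the maximizing limit curve, so one should phrase the arguments in terms of the causal relations $I^\pm$, $J^\pm$ rather than differentiating $\gamma$; the comparison $\go_{\alpha_a} \prec g$ is exactly what lets one transfer achronality-violating chronological relations from $\go_{\alpha_a}$ back to $g$. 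Modulo these care points, the argument is a contradiction between achronality of a null ray and the fact that, in the asymptotic region, the near-product structure of $\go_\alpha$ forces any causal curve with bounded $x$ to loop into its own chronological future.
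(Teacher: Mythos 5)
Your two-stage outline matches the paper's proof: Stage 1 (using the compactness of $A\setminus(I^+(p)\cup I^-(p))$ plus achronality to force $\gamma$ to leave $A$ permanently and settle in one $M_k$) is exactly the paper's argument, and Stage 2 also rests on the same pillars (achronality bounds $t$ via Lemmas \ref{lem:t+piinIp}, \ref{lem:S2inIpaftertimetau}, \ref{lem:betaalpha}, while inextendibility plus strong causality forces escape from every compact set, hence $|x|\to\infty$). However, there are two points you should fix.

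First, you have the metric comparison backwards. You write that since $\go_{\alpha_a}\prec g$, a $g$-null curve is ``$\go_{\alpha_a}$-timelike or null.'' By the definition recalled just before Lemma \ref{lem:betaalpha}, $\go_{\alpha_a}\prec g$ means the $\go_{\alpha_a}$-cones sit strictly \emph{inside} the $g$-cones, so a $g$-null vector is in fact $\go_{\alpha_a}$-\emph{spacelike}; for the direction you want (``a $g$-causal curve is $\go_\cdot$-timelike'') you would need the \emph{outer} comparison metric $\go_{\beta_a}$ from Lemma \ref{lem:betaalpha}, i.e.\ $g\prec\go_{\beta_a}$. Fortunately, this misstep occurs in an intermediate ``definite rate for $t$'' step that is not actually required: the monotonicity of $t$ follows simply from $\partial_t$ being $g$-timelike (which does follow from $\go_{\alpha_a}\prec g$), and unboundedness of $t$ follows purely topologically from escaping compacts with $x$ bounded and $\omega\in S^2$. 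Your subsequent use of $\go_{\alpha_a}\prec g$ to conclude $I^+_{\alpha_a}(p)\sse I^+(p)$ is in the correct direction.

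Second, your negation ``suppose not; then $|x(\gamma(s))|$ stays bounded'' is not the negation of $|x(\gamma(s))|\to\infty$ (that would allow $|x|$ to oscillate unboundedly), so as stated the argument only yields $\limsup|x|=\infty$. The clean way to close this gap, which is what the paper actually does, is to avoid the reductio entirely: the achronality argument with Lemmas \ref{lem:t+piinIp} and \ref{lem:S2inIpaftertimetau} shows outright that $t$ stays in the fixed interval $[t_0,\,t_0+\pi/\sqrt{\alpha_a}+\tau_a]$ along $\gamma$ (note the $\tau_a$ must be included, since $\omega$ varies along $\gamma$ and Lemma \ref{lem:t+piinIp} only handles fixed $\omega$), so $[t_0,\,t_0+\pi/\sqrt{\alpha_a}+\tau_a]\times[a,r]\times S^2$ is a compact set that $\gamma$ eventually leaves for good, which directly gives $|x(\gamma(s))|>r$ for $s$ near $b$, for every $r$.
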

\begin{proof}
	If $q\in \gamma \cap A$, then $\gamma $ must eventually leave the compact set $A\setminus (I^+(q)\cup I^-(q))$ and never return to it, but since $\gamma $ is achronal, that means that $\gamma$ cannot return to $A$ at all, i.e., it is contained in $M_k$, say $M_2$. So we may assume $\gamma(0)=(t_0,x_0,\omega_0)\in M_2$ and $x_0=a$. For any $r>0$ the set $[ t_0, t_0+\frac{\pi}{\sqrt{\alpha_a}}+\tau_a ] \times [a,r] \times S^2$
is compact, so $\gamma $ must leave it.  Since $\g$ is future directed, we must have $t > t_0$ along $\g$. Moreover, by applying, first, Lemma \ref{lem:t+piinIp}, then, Lemma \ref{lem:S2inIpaftertimetau}, together with Lemma   \ref{lem:betaalpha}, we see that we must have 
$t<t_0+\frac{\pi}{\sqrt{\alpha_a}}+\tau_a$ along $\g$, otherwise the achronality of $\g$ would be violated.
It follows that $\g$ must cross $x = r$.
\end{proof}

\begin{Lemma} \label{lem:noturningback} 
	There exists $r>0$ such that for any null geodesic $\gamma \sse M_1(r)$ one has $\ddot{\gamma}_x >0$, i.e., $\dot{\gamma}_x$ can change sign at most once.
\end{Lemma}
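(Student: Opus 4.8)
The plan is to write out the $x$-component of the geodesic equation and show that, for $|x|$ large, the contribution coming from the exact metric $\go$ dominates every error term produced by $h$. Since $x$ is a globally defined function on $M_1$ and the charts $\psi$ of Remark~\ref{rem:decayincoords} only affect the $S^2$-factor, the quantity $\ddot\gamma_x=\frac{d^2}{ds^2}x(\gamma(s))$ is chart independent, so near any point of $\gamma$ we may compute in one such chart $\psi$. There the (affinely parametrized) geodesic equation reads $\ddot\gamma_x=-\Gamma^x_{ij}\dot\gamma^i\dot\gamma^j=-\mathring\Gamma^x_{ij}\dot\gamma^i\dot\gamma^j-(\Gamma^x_{ij}-\mathring\Gamma^x_{ij})\dot\gamma^i\dot\gamma^j$. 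The only non-zero Christoffel symbol of $\go$ with upper index $x$ is $\mathring\Gamma^x_{tt}=\cosh(x)\sinh(x)$, so the first term equals $-\cosh(x)\sinh(x)(\dot\gamma^t)^2=\cosh^2(x)\,|\tanh(x)|\,(\dot\gamma^t)^2\ge 0$, using $x\le -r<0$ on $M_1(r)$.

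Next I would bound $(\dot\gamma^x)^2+|\dot{\bar\gamma}|_{S^2}^2$ in terms of $\cosh^2(x)(\dot\gamma^t)^2$, which also shows $\dot\gamma^t$ never vanishes. Since $\gamma$ is null for $g$, $0=g(\dot\gamma,\dot\gamma)=-\cosh^2(x)(\dot\gamma^t)^2+(\dot\gamma^x)^2+|\dot{\bar\gamma}|_{S^2}^2+h(\dot\gamma,\dot\gamma)$, and Lemma~\ref{lem:newlemma} applied to $q_{ij}=h_{ij}$ gives $|h(\dot\gamma,\dot\gamma)|\le\frac{c}{|x|}\big((\dot\gamma^x)^2+|\dot{\bar\gamma}|_{S^2}^2\big)$; hence $\cosh^2(x)(\dot\gamma^t)^2\ge(1-\tfrac{c}{|x|})\big((\dot\gamma^x)^2+|\dot{\bar\gamma}|_{S^2}^2\big)$, so once $r\ge 2c$ we obtain $(\dot\gamma^x)^2+|\dot{\bar\gamma}|_{S^2}^2\le 2\cosh^2(x)(\dot\gamma^t)^2$ on $M_1(r)$. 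In particular $\dot\gamma^t(s)=0$ would force $\dot\gamma(s)=0$, impossible for a geodesic, so $\dot\gamma^t$ has no zeros along $\gamma$. Applying Lemma~\ref{lem:newlemma} now to $q_{ij}=\mathring\Gamma^x_{ij}-\Gamma^x_{ij}$ (with $c$ taken uniform over the two tensors, which only requires enlarging the single constant $C$ of Remark~\ref{rem:decayincoords}) yields
\[
\big|(\Gamma^x_{ij}-\mathring\Gamma^x_{ij})\dot\gamma^i\dot\gamma^j\big|\le\frac{c}{|x|}\big((\dot\gamma^x)^2+|\dot{\bar\gamma}|_{S^2}^2\big)\le\frac{2c\cosh^2(x)}{|x|}(\dot\gamma^t)^2 .
\]

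Combining the two estimates,
\[
\ddot\gamma_x\ \ge\ \cosh^2(x)\Big(|\tanh(x)|-\frac{2c}{|x|}\Big)(\dot\gamma^t)^2 .
\]
Taking $r:=\max\{a,8c\}$, we have $|\tanh(x)|\ge\tanh(r)\ge\tanh(1)>\tfrac34$ and $\frac{2c}{|x|}\le\frac{2c}{r}\le\tfrac14$ on $M_1(r)$, so the parenthesis exceeds $\tfrac12$; since $(\dot\gamma^t)^2>0$ this gives $\ddot\gamma_x>0$ along every null geodesic $\gamma\subseteq M_1(r)$. Hence $s\mapsto\dot\gamma_x(s)$ is strictly increasing and can change sign at most once, proving the claim.

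I do not expect a genuine obstacle here; the argument is a short pointwise computation. The two points to be careful about are (i) that the constant $c$ furnished by Lemma~\ref{lem:newlemma} can be taken the same for $h$ and for $\mathring\Gamma^x-\Gamma^x$ — arranged by fixing one large enough $C$ in Remark~\ref{rem:decayincoords} — and (ii) that the comparison bounding $(\dot\gamma^x)^2+|\dot{\bar\gamma}|_{S^2}^2$ by $\cosh^2(x)(\dot\gamma^t)^2$, together with the domination $|\tanh(x)|>\tfrac{2c}{|x|}$, only becomes available once $r$ is chosen large, which is exactly the freedom the statement allows.
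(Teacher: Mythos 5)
Your proof is correct and follows essentially the same route as the paper: split $\Gamma^x_{ij}$ into $\mathring\Gamma^x_{ij}$ plus an error controlled by Lemma~\ref{lem:newlemma}, and show that the $\cosh(x)\sinh(x)(\dot\gamma^t)^2$ term dominates once $|x|$ is large. The only cosmetic difference is that the paper bounds $(\dot\gamma^t)^2$ from below by citing the cone comparison $g\prec\go_{\beta_r}$ of Lemma~\ref{lem:betaalpha}, while you rederive an equivalent bound inline from $g(\dot\gamma,\dot\gamma)=0$ together with the estimate on $h$; both lead to the same final inequality.
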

\begin{proof}
	By the geodesic equation and the estimate \eqref{eq:estfornullvec} we have
	\begin{align} \ddot{\gamma}^x &=-\Gamma^x_{ij}   \dot{\gamma}^i \dot{\gamma}^j\geq -\mathring{\Gamma}^x_{ij} \dot{\gamma}^i \dot{\gamma}^j- |(\mathring{\Gamma}^x_{ij}-\Gamma^x_{ij})  \dot{\gamma}^i \dot{\gamma}^j| \geq \\ 
	&\geq \cosh(|x|)\sinh(|x|) |v^t|^2 -\frac{c}{|x|} (|\dot{\gamma}_x|^2+|\dot{\bar{\gamma}}|^2_{S^2}) \end{align}
	Finally, $\gamma $ being null implies $\go_{\beta_r}(\dot{\gamma},\dot{\gamma})<0$, which gives $|\dot{\gamma}_t|^2>\frac{1}{\beta_r \cosh^2(x(\gamma))}(|\dot{\gamma}_x|^2+|\dot{\bar{\gamma}}|_{S^2}^2)$, so
	\beq \ddot{\gamma}^x> \big(\frac{1}{\beta_r} \tanh(|x|)-\frac{c}{|x|} \big) (|\dot{\gamma}_x|^2+|\dot{\bar{\gamma}}|^2_{S^2})>\big(\frac{1}{\beta_r} \tanh(r)-\frac{c}{r} \big) (|\dot{\gamma}_x|^2+|\dot{\bar{\gamma}}|^2_{S^2})> 0 \eeq for $r$ large.
\end{proof}

\begin{Corollary}\label{cor:xparam}  Let $r>0$ be such that the previous Lemma holds. Then any null geodesic $\gamma :[a,b)\to M_1(r)$ with $\lim_{s\to b}x(\gamma(s))=-\infty$ may be parametrized with respect to the $x$-coordinate. 
	\begin{proof}
		Let $\gamma :[a,b)\to M$ be a null geodesic
		 with image in $M_1(r)$. Lemma \ref{lem:noturningback} shows that $\dot{\gamma}^x$ is strictly increasing, so if $\dot{\gamma}^x(s_0)\geq 0$ for any $s_0\in [a,b)$ then $\dot{\gamma}^x|_{[s_0,b)}\geq 0$ and hence $x(\gamma|_{[s_0,b)})\geq x(\gamma(s_0))$. This contradicts $\lim_{s\to b}x(\gamma(s))= -\infty$. Thus $s\mapsto x(\gamma(s))$ is strictly monotonically decreasing and so there exists a reparametrization $\tilde{\gamma}:(-\infty,x(\gamma(a))]\to M_1(r)$ of $\gamma$ with $x(\tilde{\gamma}(s))=s$.
	\end{proof}

\end{Corollary}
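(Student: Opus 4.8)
The plan is to use Lemma~\ref{lem:noturningback} to see that along such a $\gamma$ the $x$-coordinate is strictly monotone, and then to invert it. With $r$ as in the statement, let $\gamma:[a,b)\to M_1(r)$ be a null geodesic with $\lim_{s\to b}x(\gamma(s))=-\infty$, and abbreviate $\xi(s):=x(\gamma(s))$, so that $\dot\xi=\dot\gamma^x$ and $\ddot\xi=\ddot\gamma^x$. Since the image of $\gamma$ lies in $M_1(r)$, Lemma~\ref{lem:noturningback} yields $\ddot\xi>0$ throughout $[a,b)$, hence $\dot\xi$ is strictly increasing on $[a,b)$.

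Next I would rule out $\dot\xi$ ever being nonnegative. If $\dot\xi(s_0)\ge 0$ for some $s_0\in[a,b)$, then strict monotonicity of $\dot\xi$ forces $\dot\xi>0$ on $(s_0,b)$, so $\xi$ is nondecreasing on $[s_0,b)$ and $\xi(s)\ge\xi(s_0)$ for all $s\in[s_0,b)$, contradicting $\lim_{s\to b}\xi(s)=-\infty$. Therefore $\dot\xi<0$ on all of $[a,b)$ and $\xi$ is strictly decreasing.

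It then only remains to invert $\xi$. As a strictly decreasing smooth function on $[a,b)$ with $\xi(a)=x(\gamma(a))\le -r$ (using $\gamma\subseteq M_1(r)$) and $\lim_{s\to b}\xi(s)=-\infty$, the map $\xi$ is, by the intermediate value theorem, a smooth bijection of $[a,b)$ onto $(-\infty,x(\gamma(a))]$ with smooth, strictly decreasing inverse. Setting $\tilde\gamma(\sigma):=\gamma(\xi^{-1}(\sigma))$ for $\sigma\in(-\infty,x(\gamma(a))]$ then gives the desired reparametrization of $\gamma$, with $x(\tilde\gamma(\sigma))=\sigma$.

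I expect no genuine obstacle: all the substance is in Lemma~\ref{lem:noturningback}, which is already established, and what remains is elementary one-variable calculus. The only points deserving a little care are that the hypothesis $\gamma\subseteq M_1(r)$ is precisely what licenses both the application of Lemma~\ref{lem:noturningback} and the identification of the image interval as $(-\infty,x(\gamma(a))]$, and that the resulting $\tilde\gamma$ is a reparametrization of $\gamma$ which need not be affine --- this is harmless, since only the image curve, now written as a graph over the $x$-coordinate, is used in the sequel.
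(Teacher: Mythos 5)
Your proof is correct and follows exactly the same route as the paper: apply Lemma~\ref{lem:noturningback} to see $\dot\gamma^x$ is strictly increasing, rule out $\dot\gamma^x\ge 0$ using $\lim_{s\to b}x(\gamma(s))=-\infty$, and then invert the resulting strictly decreasing $x$-coordinate. The extra care you take in spelling out the inversion step and noting that the reparametrization need not be affine is harmless elaboration, not a different argument.
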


\begin{Lemma} \label{lem:maxnulliscomplete}
	Any future (or past) null ray $\gamma :[0,a)\to M$ is future (or past) complete.
\end{Lemma}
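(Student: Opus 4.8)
The plan is to reduce to the case where $\gamma$ is future-inextendible and affinely parametrized — an achronal null curve being, up to reparametrization, a null geodesic — and to write its maximal domain as $[0,b)$; the claim is that $b=\infty$. By Lemma~\ref{lem:inextmaxnullrunstoinfty}, $\gamma$ is eventually contained in one of the $M_k$, say $M_1$, with $x(\gamma(\lambda))\to-\infty$ as $\lambda\to b$. First I would fix $r$ large enough that Lemma~\ref{lem:noturningback}, Corollary~\ref{cor:xparam}, Lemma~\ref{lem:newlemma} and the estimates of Remark~\ref{rem:decayincoords} are all available on $M_1(r)$, pick $\lambda_1$ with $\gamma([\lambda_1,b))\subseteq M_1(r)$, and set $x_1:=x(\gamma(\lambda_1))$; by Corollary~\ref{cor:xparam} the map $\lambda\mapsto x(\gamma(\lambda))$ is then a strictly decreasing diffeomorphism of $[\lambda_1,b)$ onto $(-\infty,x_1]$. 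I will also use that $t$ stays bounded along $\gamma$, which falls out of the proof of Lemma~\ref{lem:inextmaxnullrunstoinfty}.

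Next I would extract from the null condition the estimates that let one measure affine time through the $x$-coordinate. Put $E:=-g(\partial_t,\gamma')$ on $\gamma|_{[\lambda_1,b)}$; since $g_{tt}=-\cosh^2 x+h_{tt}<0$ there, $\partial_t$ is future timelike and $E>0$. Writing $g(\gamma',\gamma')=0$ as $(\dot\gamma^x)^2+|\dot{\bar\gamma}|^2_{S^2}=\cosh^2(x)(\dot\gamma^t)^2-h(\gamma',\gamma')$ and invoking Lemma~\ref{lem:newlemma} for $q=h$, one gets $(\dot\gamma^x)^2+|\dot{\bar\gamma}|^2_{S^2}\le 2\cosh^2(x)(\dot\gamma^t)^2$ for $r$ large; expanding $E=(\cosh^2 x-h_{tt})\dot\gamma^t-h_{ti}\dot\gamma^i$ and controlling the error terms with \eqref{eq:decaycoordinates} and the previous inequality then gives $\tfrac12 E\le\cosh^2(x)\dot\gamma^t\le 2E$. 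In particular $\dot\gamma^t>0$, so $t$ increases strictly along $\gamma|_{[\lambda_1,b)}$ and, being bounded above, sweeps out an interval of finite length $T$; and combining the two estimates, $(\dot\gamma^x)^2+|\dot{\bar\gamma}|^2_{S^2}\le 8E^2/\cosh^2 x$ and $0<\dot\gamma^t\le 2E/\cosh^2 x$.

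The heart of the proof, and the step I expect to be the main obstacle, is an a priori bound on $E$. As $\go$ is $t$-independent, $\partial_t$ is $\go$-Killing, so along the geodesic $E'=-\tfrac12(\mathcal L_{\partial_t}g)(\gamma',\gamma')=-\tfrac12(\mathcal L_{\partial_t}h)(\gamma',\gamma')$, and $(\mathcal L_{\partial_t}h)_{ij}=\partial_t h_{ij}$ obeys the \emph{faster} decay $|\partial_t h_{ij}|\le C\cosh^{\#t}(x)/|x|^2$ of \eqref{eq:derivativeestimates}--\eqref{eq:decay'coordinates}. Rerunning the computation of Lemma~\ref{lem:newlemma} with $|x|^2$ in place of $|x|$ then yields $|E'|\le \tfrac{c}{|x|^2}\big((\dot\gamma^x)^2+|\dot{\bar\gamma}|^2_{S^2}\big)\le 8cE^2/(|x|^2\cosh^2 x)$. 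A Grönwall estimate directly in the affine parameter does not close; instead I would reparametrize $\gamma|_{[\lambda_1,b)}$ by $t$ (legitimate since $\dot\gamma^t>0$) and use $\dot\gamma^t\ge E/(2\cosh^2 x)$ to obtain $\big|\tfrac{d}{dt}\ln E\big|=|E'|/(E\,\dot\gamma^t)\le 16c/|x|^2\le 16c/r^2$. Since $t$ sweeps out an interval of length $\le T<\infty$, integrating this shows $E\le E_{\max}<\infty$ along $\gamma|_{[\lambda_1,b)}$. This is the point at which both asymptotic inputs — the enhanced time decay and the boundedness of $t$ — are genuinely used.

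Finally I would combine $|\dot\gamma^x|\le 2\sqrt2\,E/\cosh x\le 2\sqrt2\,E_{\max}/\cosh x$ with the substitution $\lambda\mapsto x$ (note $\dot\gamma^x<0$ and $x$ runs from $x_1$ down to $-\infty$):
\[
b-\lambda_1=\int_{\lambda_1}^{b}d\lambda=\int_{-\infty}^{x_1}\frac{dx}{|\dot\gamma^x(x)|}\ \ge\ \frac{1}{2\sqrt2\,E_{\max}}\int_{-\infty}^{x_1}\cosh(x)\,dx=\infty,
\]
forcing $b=\infty$, i.e.\ $\gamma$ is future complete; the past case is symmetric. Apart from the bound on $E$, all of this is bookkeeping with the asymptotics of Remark~\ref{rem:decayincoords}, so I would push the explicit constants into the calculations rather than into the narrative.
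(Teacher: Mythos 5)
Your argument is correct, but it is a genuinely different and much more elaborate route than the one the paper takes. You build an ``energy'' $E=-g(\partial_t,\gamma')$, derive comparable bounds on $\cosh^2(x)\dot\gamma^t$ and $E$, exploit the fact that $\partial_t$ is $\go$-Killing together with the \emph{enhanced} time-derivative decay $|\partial_t h_{ij}|\leq C\cosh^{\#t}(x)/|x|^2$ of \eqref{eq:decay'coordinates} to control $E'$, and then a Gr\"onwall estimate in the $t$-parametrization (using boundedness of $t$) to get $E\leq E_{\max}$, whence $|\dot\gamma^x|\lesssim E_{\max}/\cosh x$ and $\int^{x_1}_{-\infty}|\dot\gamma^x|^{-1}\,dx=\infty$. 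The paper's proof is a one-liner by comparison: Lemma~\ref{lem:noturningback} already gives $\ddot\gamma^x>0$ on $M_1(r)$, so $\dot\gamma^x$ is \emph{monotone increasing and negative} (by Corollary~\ref{cor:xparam}); after normalizing $|\dot\gamma^x(0)|=1$ this immediately forces $|\dot\gamma^x(s)|\le 1$ for all $s$, hence $|x(\gamma(s))-x(\gamma(0))|\le |s|$, which contradicts $x(\gamma(s))\to-\infty$ on a finite interval. You invoked both Lemma~\ref{lem:noturningback} and Corollary~\ref{cor:xparam} in your setup but did not use the monotonicity-and-sign information they encode, which is the whole content of the short proof; your alternative uses, at its core, the stronger $\partial_t h$ decay that the paper does not need here (though it \emph{is} needed elsewhere, e.g.\ in Proposition~\ref{prop:sametlimit}). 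The one thing your approach buys is a quantitative decay $|\dot\gamma^x|=O(1/\cosh x)$ matching exact $AdS_2\times S^2$, whereas the paper only needs (and obtains) a uniform bound.
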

\begin{proof}
	By the proof of Lemma \ref{lem:inextmaxnullrunstoinfty} for any $r>0$ $\gamma $ is eventually contained in either $M_1(r)$ or $M_2(r)$ and $|x(\gamma(s))|\to \infty $ as $s\to a$. For now, look at the case where $\gamma$ is eventually contained in $M_1(r)$ (for some large $r$, at least $r\geq r(1)$ from the previous Lemma). We may assume $\gamma :[0,a)\to M_1(r)$ and $|\dot{\gamma}^x(0)|=1$ and we have to show that $a=\infty $. By the arguments in Corollary \ref{cor:xparam} we have $\dot{\gamma}^x(0)<0$ and $s\mapsto \dot{\gamma}^x(s) $ is strictly increasing, so $|\dot{\gamma}^x(s)|^2 \leq 1$ for all $s$.
	But this gives $|x(\gamma(s))-x(\gamma(0))|\leq |s|$,  contradicting $x(\gamma(s))\to -\infty $ as $s\to a$ if $a< \infty$. The case of the end contained in $M_2(r)$ is analogous (note that the analogues to Lemma \ref{lem:noturningback} and Corollary \ref{cor:xparam} show $\ddot{\gamma}^x<0$ and $\dot{\gamma}^x>0$ on $M_2(r)$). 
	\end{proof}

\begin{Lemma}[Angular velocities go to zero for null lines] \label{lem:angularvelocitiesfornulllines}
	Assume the null energy condition holds, i.e., $\Ric(X,X) \ge 0$ for all null vectors $X$. For any $\eps >0$ there exists $r(\eps)$ such that $|\dot{\bar{\gamma}}|_{S^2} <\eps$ on $M_1(r(\eps))$  for any null line $\gamma :I\to M$ with $|\dot{\gamma}^x|\leq 1$ on $M_1(r(\eps))$.
\end{Lemma}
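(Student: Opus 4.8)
The plan is to exploit the null energy condition along the full null line $\gamma$ together with the asymptotic estimates, using the Raychaudhuri-type information encoded in the curvature. The key object is the "angular energy" $E(s) := |\dot{\bar\gamma}(s)|_{S^2}^2$, the squared $S^2$-norm of the angular part of $\dot\gamma$. First I would write out the geodesic equations for the $\theta$ and $\phi$ components in one of the charts $\psi$ from Remark~\ref{rem:decayincoords} and compute $\frac{d}{ds}E(s)$. Using the exact Christoffels $\mathring\Gamma^\phi_{\phi\theta}=\cot\theta$, $\mathring\Gamma^\theta_{\phi\phi}=\sin\theta\cos\theta$ (which are exactly the ones making $|\bar\gamma|_{S^2}$ constant along $S^2$-geodesics), the leading contribution to $\dot E$ vanishes, and one is left with terms involving $\mathring\Gamma^{\theta,\phi}_{ij}-\Gamma^{\theta,\phi}_{ij}$ (controlled by \eqref{eq:christofflasymptotics}) contracted with $\dot\gamma^i\dot\gamma^j$. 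Applying Lemma~\ref{lem:newlemma} and the hypothesis $|\dot\gamma^x|\le 1$ on $M_1(r)$, together with the null relation $|\dot\gamma_t|^2 \le \frac{1}{\alpha_r\cosh^2 x}(|\dot\gamma_x|^2 + E)$ from Lemma~\ref{lem:betaalpha}, I expect an estimate of the shape
\beq
\Big|\frac{dE}{ds}\Big| \le \frac{c}{|x(\gamma(s))|}\,\big(1 + E(s)\big)
\eeq
valid on $M_1(r)$ for $r$ large.

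Next I would use that, by Corollary~\ref{cor:xparam}, on $M_1(r)$ (for $r$ beyond the threshold of Lemma~\ref{lem:noturningback}) the line $\gamma$ can be parametrized by $x$, with $|\dot\gamma^x|\le 1$ meaning $|ds/dx|\ge 1$, so that converting the $s$-integral of $1/|x|$ into an $x$-integral gives a bound by $\int_r^\infty \frac{dx}{x} $ — which diverges, so this alone is not enough and one must bring in the NEC. This is where the main work lies: one integrates the second variation / Raychaudhuri equation along $\gamma$ for the null geodesic congruence (or more elementarily, uses that $\gamma$ being a \emph{line} — maximizing in both directions — forbids conjugate points, hence forces the expansion of a suitable variation to be controlled). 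Concretely, I would consider the index form on a variation of $\gamma$ through the $S^2$ directions and use achronality to get $\int \big(|\dot{\bar\gamma}|_{S^2}^2 \cdot(\text{positive curvature-type term}) - \text{error}\big)\,ds \ge 0$; combined with the explicit positive curvature $\mathring R_{S^2}$ of the sphere factor (so $\mathring\Ric_{\theta\theta}=1$, $\mathring\Ric_{\phi\phi}=\sin^2\theta$) and the decay \eqref{eq:curvatureasymptotics}, this should yield $\int_{M_1(r)} E(s)\,ds < \infty$ with a bound tending to $0$ as $r\to\infty$. Feeding $\int E\,ds$ being small back into the Grönwall-type differential inequality for $E$ above then forces $E$ itself to be uniformly small on $M_1(r(\eps))$: if $E$ were $\ge \eps^2$ somewhere, the slow decay of $|dE/ds|$ would keep $E\gtrsim \eps^2$ over an $x$-interval of definite length, contradicting smallness of $\int E$.

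The main obstacle is the middle step: extracting genuine decay of the \emph{integrated} angular energy from the NEC. The subtlety is that $\Ric(\dot\gamma,\dot\gamma)\ge0$ mixes the $AdS_2$ and $S^2$ contributions, and the $AdS_2$ part $\mathring\Ric$ evaluated on $\dot\gamma$ is \emph{negative} (since $\mathring\Ric_{xx}=-1$, $\mathring\Ric_{tt}=\cosh^2 x$ with the wrong sign pattern for a null vector), so one cannot simply say the curvature term is positive. The resolution I anticipate is to use that $\gamma$ is a null \emph{line} through all of $M$ (not merely a ray): its two ends run to $x\to\pm\infty$ or both to one end, and combined with $\mathring\Ric = -\go + 2\,d\Omega^2$ one computes $\mathring\Ric(\dot\gamma,\dot\gamma) = -\go(\dot\gamma,\dot\gamma) + 2E = 2E$ (using $\go(\dot\gamma,\dot\gamma)=0$ to leading order), so the NEC together with \eqref{eq:curvatureasymptotics} gives $2E \le \Ric(\dot\gamma,\dot\gamma) + \frac{c}{|x|}(|\dot\gamma_x|^2+E) = -\frac{1}{2}\theta^2 - |\hat\sigma|^2 - \dot\theta + (\text{error})$ for the null expansion $\theta$ along $\gamma$; integrating over an inextendible maximal (hence conjugate-point-free, so $\theta$ well-behaved) null line and using the boundary control of $\theta$ at the two infinite ends then bounds $\int 2E\,ds$ by the total variation of $\theta$ plus $\int \frac{c}{|x|}|\dot\gamma_x|^2 ds$, the latter finite after the $x$-reparametrization only if one also has decay of $|\dot\gamma_x|$ — which is exactly the kind of statement the subsequent lemmas (and the "$|\dot\gamma^x|\le1$" hypothesis threaded through) are set up to provide. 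I would structure the proof so that this Raychaudhuri integration is the technical heart, with everything else reduced to the already-established asymptotic lemmas.
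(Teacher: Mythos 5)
You have all the right pieces on the table, but the argument stalls on a point that the paper resolves much more directly, and the extra machinery you invoke (geodesic equations for $E$, Gr\"onwall, index form, integrating Raychaudhuri and tracking boundary terms of $\theta$) is not needed. The step you flag as ``the technical heart'' -- extracting decay from the NEC -- is actually a one-line observation: since $\gamma$ is a \emph{complete} null line (completeness is Lemma~\ref{lem:maxnulliscomplete}), it is free of conjugate points, and the NEC $\Ric(\dot\gamma,\dot\gamma)\ge 0$ together with the Raychaudhuri equation then forces $\Ric(\dot\gamma,\dot\gamma)\equiv 0$ \emph{pointwise} along $\gamma$ (if it were positive anywhere, the standard focusing argument would produce a pair of conjugate points on the complete geodesic). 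This is stronger than the integral inequality you are aiming for, and it eliminates the need to control $\theta$ at the two ends or to sum up error terms along the line.

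Once $\Ric(\dot\gamma,\dot\gamma)=0$ is known pointwise, the rest is a purely local algebraic comparison with the model, exactly along the lines of what you sketch in your last paragraph but without any integration. Using $\mathring\Ric = -\go + 2\,d\Omega^2$, the estimate \eqref{eq:estfornullvec} of Lemma~\ref{lem:newlemma}, the hypothesis $|\dot\gamma^x|\le 1$, and $g(\dot\gamma,\dot\gamma)=0$, one gets at each point
\[
0 \;=\; \Ric(\dot\gamma,\dot\gamma) \;\ge\; 2\,|\dot{\bar\gamma}|_{S^2}^2 \;-\; \frac{2c}{|x(\gamma(s))|}\bigl(1+|\dot{\bar\gamma}|_{S^2}^2\bigr),
\]
which rearranges to $\bigl(2 - \tfrac{c}{|x|}\bigr)|\dot{\bar\gamma}|_{S^2}^2 \le \tfrac{c}{|x|}$ and gives the claim directly. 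No reparametrization by $x$, no divergence of $\int dx/x$, and no Gr\"onwall argument is required: the smallness is pointwise in $|x|$, not accumulated along the curve. Your first paragraph (computing $dE/ds$ from the geodesic equations) is therefore superfluous, and your worry about the divergent $\int_r^\infty dx/x$ disappears because nothing is being integrated.
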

\begin{proof} 
	 Since $\gamma$ is complete by Lemma \ref{lem:maxnulliscomplete} the Raychaudhuri equation applied to $\g$ (affinely parametrized) implies that  $\Ric (\dot{\gamma},\dot{\gamma})=0$ along $\g$ (else $\gamma$ would contain a pair of conjugate points).  This condition does not change under reparametrization of $\gamma $.
We now use $\mathring{\Ric} (\dot{\gamma},\dot{\gamma})= -\go (\dot{\gamma},\dot{\gamma})+2 |\dot{\bar{\gamma}}|^2_{S^2}$ and \eqref{eq:estfornullvec} to estimate
	\begin{multline}
	0=\Ric (\dot{\gamma},\dot{\gamma}) \geq \mathring{\Ric} (\dot{\gamma},\dot{\gamma}) - \frac{c}{|x(\gamma(s))|} (|\dot{\gamma}^x|^2+|\dot{\bar{\gamma}}|^2_{S^2}) \geq  \\
	\geq -\go (\dot{\gamma},\dot{\gamma})+2 |\dot{\bar{\gamma}}|^2_{S^2}- \frac{c}{|x(\gamma(s))|} (1+|\dot{\bar{\gamma}}|^2_{S^2}) 
	\geq -g(\dot{\gamma},\dot{\gamma})+2 |\dot{\bar{\gamma}}|^2_{S^2}- \frac{2c}{|x(\gamma(s))|} (1+|\dot{\bar{\gamma}}|^2_{S^2})= \\ =2 |\dot{\bar{\gamma}}|^2_{S^2}- \frac{2c}{|x(\gamma(s))|} (1+|\dot{\bar{\gamma}}|^2_{S^2}).
	\end{multline}
	So 
	\beq \big(2-\frac{c}{|x(\gamma(s))|}\big)|\dot{\bar{\gamma}}|^2_{S^2}\leq \frac{c}{|x(\gamma(s))|}
	\eeq
from which the claim follows.
\end{proof}

\section{Proof of the main results}
\label{sec:main}

	Throughout this section we will frequently make use of the null energy condition,
	$\Ric(X,X) \ge 0$ for all null vectors $X$. This assumption enters in Proposition \ref{prop:sametlimit} (and thus Remark \ref{rem:sametlimit2lines}) via Lemma \ref{lem:angularvelocitiesfornulllines} and in Theorem \ref{thm:existenceofNullHyp} via both Remark \ref{rem:sametlimit2lines} and \cite[Theorem~IV.1]{G00}. All further results, in particular all of subsection \ref{sec:spherefoliation}, build upon Theorem \ref{thm:existenceofNullHyp}.

\subsection{Constructing a foliation by totally geodesic null hypersurfaces}

\begin{Lemma}\label{lem:Ipmeetseveryxslice}
	Let $p\in M_k$ and $x_0\in (-\infty, -a]\cup[a,\infty)$. Then $I^\pm (p)\cap \{x=x_0\}\neq \emptyset $. 
\end{Lemma}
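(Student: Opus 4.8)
My plan is to reduce the statement to two tools: the explicit $\go_{\alpha_a}$-null curves of Lemma~\ref{lem:nulllinseforgbeta}, which by Lemma~\ref{lem:betaalpha} are $g$-timelike curves on all of $M_1\cup M_2$ (recall $M_k=M_k(a)$, so $\go_{\alpha_a}\prec g$ there), together with the reachability hypothesis $(a_2)$. Since Definition~\ref{def:assumptions} and the model metrics $\go_\alpha$ are invariant under $x\mapsto -x$ and under $t\mapsto -t$, it suffices to treat $p\in M_2$ and to prove the assertion for $I^+$.

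First I would record a ``same-block'' statement: for every $q\in M_k$ and every $x_0$ with $|x_0|\ge a$ and $x_0\,x(q)>0$ one has $I^+(q)\cap\{x=x_0\}\neq\emptyset$. For this I consider the two null curves of $\go_{\alpha_a}$ issuing from $q$ at the fixed $S^2$-point $\omega(q)$ provided by Lemma~\ref{lem:nulllinseforgbeta} (the past-directed one traversed in reverse): along one of them $x$ is increasing and along the other $x$ is decreasing, and by Lemma~\ref{lem:betaalpha} both are future-directed $g$-timelike curves wherever they remain in $M_k$. If $x_0\neq x(q)$, the appropriate one moves $x$ monotonically from $x(q)$ to $x_0$; along it the $x$-values stay between $x(q)$ and $x_0$, hence have modulus $\ge a$ and the sign of $x(q)$, so the curve never leaves $M_k$ and its endpoint lies in $I^+(q)\cap\{x=x_0\}$. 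If $x_0=x(q)$, then Lemma~\ref{lem:S2inIpaftertimetau} (applied with $r=a$) already gives $\{t\ge t(q)+\tau_a\}\times\{x(q)\}\times S^2\subseteq I^+(q)$, and this set meets $\{x=x_0\}$.

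Granting the same-block statement, the lemma follows in three steps. If $x_0\ge a$ (the same side as $p$), I am done at once by applying it with $q=p$. So assume $x_0\le -a$, i.e.\ $\{x=x_0\}\subseteq M_1$. (1) If $x(p)>a$, I follow the $\go_{\alpha_a}$-null curve through $p$ along which $x$ decreases until it meets $\{x=a\}$; this curve stays in $M_2$, hence is $g$-timelike, so its endpoint $q$ lies in $I^+(p)$ and on $\{x=a\}$, which corresponds to $\partial M_2\subseteq\partial A\subseteq A$; if instead $x(p)=a$, I simply put $q:=p\in\partial M_2\subseteq A$. In either case $q\in A$ and $I^+(q)\subseteq I^+(p)$. (2) By hypothesis $(a_2)$ with $k=1$, there is a point $q'\in I^+(q)\cap M_1\subseteq I^+(p)\cap M_1$. (3) Applying the same-block statement to $q'\in M_1$ and to $\{x=x_0\}\subseteq M_1$ yields $z\in I^+(q')\cap\{x=x_0\}\subseteq I^+(p)\cap\{x=x_0\}$, the desired point.

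The only genuinely delicate point will be the passage into $A$: to invoke $(a_2)$ one must land \emph{in} $A$, not merely in $\overline A$, which is why the $x$-decreasing curve is stopped exactly at the slice $\{x=a\}=\partial M_2\subseteq A$ and why the degenerate case $x(p)=a$ is handled separately. Everything else is routine: checking that each quoted $\go_{\alpha_a}$-null curve stays inside a single $M_k$ over the parameter range used (so that Lemma~\ref{lem:betaalpha} genuinely turns it into a future-directed $g$-timelike curve), and using $I^+(q)\subseteq I^+(p)$ whenever $q\in I^+(p)$ to concatenate the three steps.
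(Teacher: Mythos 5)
Your proof is correct and follows essentially the same route as the paper's: use the explicit $\go_{\alpha}$-null curves (which are $g$-timelike via Lemma~\ref{lem:betaalpha}) to reach any slice $\{x=x_0\}$ in the same block, then feed the resulting point of $\{x=\pm a\}\subseteq A$ into condition $(a_2)$ to cross into the other block. The only cosmetic difference is that you apply the same-block step directly to the point $q'\in M_1$ produced by $(a_2)$, whereas the paper first notes $I^\pm(p)\cap\{x=a\}\neq\emptyset$ and then repeats the same-block argument from there.
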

\begin{proof}
	Let w.l.o.g. $p\in M_1$ and first consider  $x_0\in (-\infty, -a]$. Then this is clearly true for any $\go_{\alpha}$, hence by Lemma \ref{lem:betaalpha} also for $g$. Since $\{x=-a\}\sse \partial A \sse A$, condition~($a_2$) from Def. \ref{def:assumptions} then shows that $I^\pm (p)\cap M_2 \neq \emptyset $. Note that this also must even imply $I^\pm (p)\cap \{x=a\} \neq \emptyset $ from which the claim follows for $x_0 \in [a,\infty)$ by the same argument as above.
\end{proof}

\begin{Lemma}\label{lem:forpinM1existsmaxnull}
	For any $p\in M_1$ there exists a future null ray $\gamma_p :[0,b)\to M$ such that $\gamma_p $ is eventually contained in $M_2$.
\end{Lemma}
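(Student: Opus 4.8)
The plan is to obtain $\gamma_p$ as a limit of maximizing null geodesics running from $p$ out towards infinity in $M_2$, and then to ``reseat'' the resulting ray at $p$.

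\smallskip
\noindent\emph{Endpoints far out in $M_2$.} Fix $n\in\N$. The slice $\{x=a+n\}\subseteq M_2$ is connected and, by Lemma~\ref{lem:Ipmeetseveryxslice}, meets $I^+(p)\subseteq J^+(p)$; on the other hand it is not contained in $J^+(p)$. Indeed, along any future-directed causal curve the cone comparison $g\prec\go_{\beta_a}$ from Lemma~\ref{lem:betaalpha} forces $t$ to increase strictly while the curve lies in $M_1\cup M_2$, and conditions $(a_1)$--$(a_2)$ together with strong causality bound below the amount of $t$ that can be lost while the curve passes through the core $A$; hence $J^+(p)\cap\{x=a+n\}$ is bounded below in $t$, so it is a proper closed subset of the connected slice. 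A connected set meeting a closed set but not contained in it meets its boundary, so $\partial J^+(p)$ meets $\{x=a+n\}$; choose $q_n$ there. Then $q_n\in M_2$ and $x(q_n)=a+n\to\infty$.

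\smallskip
\noindent\emph{Maximizing geodesics and their terminal stretch.} As $(M,g)$ is causally simple it is strongly causal with closed causal futures, so $\partial J^+(p)=J^+(p)\setminus I^+(p)$, and since $q_n\in\partial J^+(p)$ with $q_n\neq p$ there is a future-directed achronal null geodesic $\gamma_n$ from $p$ to $q_n$; by the push-up property every point of $\gamma_n$ again lies in $\partial J^+(p)$. Fix a complete Riemannian metric $h$ on $M$ whose restriction to $M_1\cup M_2$ satisfies $h\ge dx^2$. Let $\mu_n$ be the sub-arc of $\gamma_n$ after its last crossing of $\{x=a\}$, with initial point $\tilde q_n\in\{x=a\}$. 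Since $\gamma_n$ ends at $q_n\in M_2$ and $\{x=a\}=\partial M_2$, the arc $\mu_n$ is contained in $M_2$; it is achronal, contained in $\partial J^+(p)$, and has $h$-length at least $|x(q_n)-x(\tilde q_n)|=n$. Moreover $\tilde q_n$ lies in the compact set $\partial J^+(p)\cap\{x=a\}$: this set is closed, bounded below in $t$ by the previous paragraph (with $x_0=a$), bounded above in $t$ because $\{t\ge T_0\}\times\{x=a\}\times S^2\subseteq I^+(p)$ for a suitable $T_0$ by Lemma~\ref{lem:S2inIpaftertimetau}, and the $S^2$-factor is compact. After passing to a subsequence, $\tilde q_n\to\tilde q_\infty\in\partial J^+(p)\cap\{x=a\}$.

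\smallskip
\noindent\emph{Limit curve and reseating at $p$.} By the limit curve theorem a subsequence of the $\mu_n$ (parametrized by $h$-arclength) converges uniformly on compact sets to a future-inextendible causal curve $\mu\colon[0,\infty)\to M$ with $\mu(0)=\tilde q_\infty$. Each $\mu_n$ is achronal and the chronology relation is open, so $\mu$ is achronal, hence a future null ray; it lies in the closed set $M_2$ and in $\partial J^+(p)$, being a limit of curves in those sets. By Lemma~\ref{lem:inextmaxnullrunstoinfty}, $\mu$ is eventually contained in $M_2$ with $x(\mu(s))\to\infty$. Finally, since $\tilde q_\infty\in\partial J^+(p)=J^+(p)\setminus I^+(p)$ and $\tilde q_\infty\neq p$, causal simplicity again furnishes a future-directed achronal null geodesic $\nu$ from $p$ to $\tilde q_\infty$; let $\gamma_p$ be obtained by following $\nu$ from $p$ to $\tilde q_\infty$ and then $\mu$. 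This curve is achronal: two points of $\nu$, or two points of $\mu$, are not chronologically related; a relation $\nu(s_1)\ll\mu(s_2)$ would give $p\le\nu(s_1)\ll\mu(s_2)$, hence $\mu(s_2)\in I^+(p)$, contradicting $\mu(s_2)\in\partial J^+(p)$; and a relation $\mu(s_2)\ll\nu(s_1)$ would give $\tilde q_\infty\le\mu(s_2)\ll\nu(s_1)\le\tilde q_\infty$, a closed timelike curve, contradicting causality. An achronal causal curve is, up to reparametrization, a null geodesic, so $\gamma_p$ is a future null ray; it is future-inextendible because $\mu$ is, and past $\tilde q_\infty$ it coincides with $\mu$, so it is eventually contained in $M_2$.

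\smallskip
\noindent\emph{Main obstacle.} The delicate point is the claim used above that $J^+(p)$ contains no entire slice $\{x=x_0\}$, equivalently that $J^+(p)\cap\{x=x_0\}$ is bounded below in $t$. That $t$ increases strictly along causal curves in $M_1\cup M_2$ is immediate from the cone comparison of Lemma~\ref{lem:betaalpha}; the real work is to control causal curves while they pass through the core region $A$ --- where no metric asymptotics are assumed --- and to show, using condition $(a_2)$ and strong causality, that the total amount of $t$ that can be lost there is bounded.
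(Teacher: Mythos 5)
Your overall strategy matches the paper's — find points of $\partial J^+(p)$ on slices $\{x=\text{const}\}$ far out in $M_2$, join them to $p$ by maximizing null geodesics, and take a limit curve — but the proposal as written has a genuine gap, which you yourself flag under ``Main obstacle,'' and this gap is also unnecessary.

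To get $q_n\in\partial J^+(p)\cap\{x=a+n\}$ you only need to know that the connected slice meets $J^+(p)$ but is not entirely contained in it. Instead of establishing this, you set out to prove the strictly stronger statement that $J^+(p)\cap\{x=a+n\}$ is bounded below in $t$, which forces you to control how much $t$ can decrease along causal curves traversing the core $A$ — a region with no assumed asymptotics — and you explicitly leave that unresolved. The needed statement is in fact immediate: by Lemma~\ref{lem:Ipmeetseveryxslice} one also has $I^-(p)\cap\{x=a+n\}\neq\emptyset$, and $I^-(p)$ is disjoint from $J^+(p)$, since a point in both would produce a closed causal curve through $p$, contradicting causality. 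That is all the paper uses, and it sidesteps the core region entirely.

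The gap also propagates into your limit-curve step. Your ``reseating'' device — passing to the terminal arcs $\mu_n$ after the last crossing of $\{x=a\}$, converging the initial points $\tilde q_n$, and then attaching a maximizing null geodesic from $p$ to $\tilde q_\infty$ — requires compactness of $\partial J^+(p)\cap\{x=a\}$, which you again derive from the unproved bounded-below-in-$t$ claim. (That claim is actually provable: apply the time-duals of Lemmas~\ref{lem:t+piinIp} and~\ref{lem:S2inIpaftertimetau} to a point of $I^-(p)\cap\{x=a\}$ to see that all points with sufficiently small $t$ on the slice lie in $I^-(p)$, hence off $J^+(p)$. But you don't give this argument.) More to the point, the whole reseating construction is avoidable: since every $\gamma_n$ already starts at $p$ and the endpoints $q_n$ run off to infinity, the limit curve theorem directly yields a future-inextendible maximizing null curve $\gamma$ starting at $p$; one then invokes Lemma~\ref{lem:noturningback} to rule out $x(\gamma)\to-\infty$ and Lemma~\ref{lem:inextmaxnullrunstoinfty} to conclude $x(\gamma)\to+\infty$. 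This is both shorter and free of the compactness issue.
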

\begin{proof}
	By Lemma \ref{lem:Ipmeetseveryxslice} and causality, for each positive integer $n \in [a, \infty)$,
$\{x=n\}\neq I^+(p)\cap \{x=n\}\neq \emptyset$, so there exist $q_n\in \partial J^+(p)\cap \{x=n\}$. Every $q_n$ is the future endpoint of a maximizing null geodesic $\gamma_n \sse \partial J^+(p)$ which must end in $p$ because $J^+(p)$ is closed. Hence, there exists a limit curve $\gamma $ starting at $p$ that is maximizing and inextendible (because the $q_n$ run off to infinity). It is eventually contained in $M_2$ because by the `no turning back' lemma (Lemma \ref{lem:noturningback}) $\gamma_n\cap \{x=-r\}=\emptyset$ for $r$ large, so $|x(\gamma(s))|\to \infty$ as $s\to \infty$ (see Lemma \ref{lem:inextmaxnullrunstoinfty})
implies $x(\gamma(s))\to \infty$. 
\end{proof}

This allows us to construct null lines:

\begin{Proposition} \label{prop:existenceofnulllines}
	For any $u \in \R$ there exists a complete null line $\eta_u:(-\infty,\infty)\to M$ with past end eventually contained in $M_1$ and future end eventually contained in $M_2$ and $t(\eta_{u}(s))\to u$ as $s\to -\infty$.
\end{Proposition}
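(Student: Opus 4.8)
The plan is to obtain the null line $\eta_u$ as a limit of the future null rays $\gamma_p$ provided by Lemma~\ref{lem:forpinM1existsmaxnull}, where the base points $p$ are chosen deep in $M_1$ with $t$-coordinate tending to $u$. Concretely, fix $u \in \R$ and pick a sequence $p_n = (t_n, x_n, \omega_n) \in M_1$ with $x_n \to -\infty$ and $t_n \to u$ (the freedom in choosing $\omega_n$ will be dealt with at the end; one can also simply fix $\omega_n = \omega_0$). For each $n$, Lemma~\ref{lem:forpinM1existsmaxnull} gives a future null ray $\gamma_{p_n} : [0,b_n) \to M$, which by Lemma~\ref{lem:maxnulliscomplete} is future complete, is eventually contained in $M_2$, and by Corollary~\ref{cor:xparam} (applied in $M_1(r)$ near its past end, and the $M_2$-analogue near its future end) runs monotonically in $x$ from $x_n$ up to $+\infty$. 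Normalize each $\gamma_{p_n}$ affinely so that, say, $|\dot\gamma_{p_n}^x| = 1$ at the moment it crosses $\{x = -r_0\}$ for a fixed large $r_0$ from Lemma~\ref{lem:noturningback}, and reparametrize/re-center so that this crossing happens at parameter $s = 0$.

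Next I would extract a limit curve. Since the crossing points $\gamma_{p_n}(0) \in \{x = -r_0\}$ and the unit-normalized velocities $\dot\gamma_{p_n}(0)$ lie in a compact set (using that $\go_{\beta}(\dot\gamma,\dot\gamma) \le 0$ controls $|\dot\gamma^t|$ in terms of $|\dot\gamma^x|^2 + |\dot{\bar\gamma}|^2_{S^2}$, and that $|\dot{\bar\gamma}|_{S^2}$ is itself bounded — indeed small — by Lemma~\ref{lem:angularvelocitiesfornulllines} once we are in $M_1(r)$), we may pass to a subsequence along which $\gamma_{p_n}(0) \to q$ and $\dot\gamma_{p_n}(0) \to v$ with $v$ null. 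The geodesic $\eta$ with $\eta(0) = q$, $\dot\eta(0) = v$ is then the limit curve; it is maximizing (achronal) because each $\gamma_{p_n}$ is and achronality passes to limits in a causally simple (hence strongly causal) spacetime, and it is inextendible in both directions: toward the future because the $\gamma_{p_n}$ are future complete with $x \to +\infty$, toward the past because the past endpoints $p_n$ have $x_n \to -\infty$, so by Lemma~\ref{lem:inextmaxnullrunstoinfty} (and its time-dual) the limit must also satisfy $|x(\eta(s))| \to \infty$ at both ends, with $x \to -\infty$ in the past (inside $M_1$) and $x \to +\infty$ in the future (inside $M_2$). Completeness of $\eta$ then follows from Lemma~\ref{lem:maxnulliscomplete}.

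It remains to identify $\lim_{s \to -\infty} t(\eta(s))$ with $u$. Here I would use the explicit control from the comparison metrics: on $M_1(r)$ the null curve $\eta$ lies in the causal cones of $\go_{\alpha_r}$ and $\go_{\beta_r}$, and by Lemma~\ref{lem:nulllinseforgbeta} and Remark~\ref{rem:behaviouroff} a null curve of $\go_{\alpha}$ entering $\{x \le -r\}$ and running to $x = -\infty$ changes its $t$-coordinate by at most $O(1/\sqrt{\alpha})$ times a quantity that shrinks as $r \to \infty$ — more precisely, the $t$-increment of a $\go_\alpha$-null curve from $x$-level $-r$ to $-\infty$ is bounded by $\tfrac{2}{\sqrt{\alpha}}\tan^{-1}(e^{-r})$, which tends to $0$ as $r \to \infty$. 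Combined with $\alpha_r, \beta_r \to 1$, this shows that $t$ is nearly constant along the past end of $\eta$ and that this constant is the limit of the $t$-coordinates of the base points, i.e.\ $u$. (One small point: one must check the base points $p_n$ can be taken so that $\gamma_{p_n}$, and hence the limit, actually has $t \to u$ rather than some shifted value; since $\gamma_{p_n}$ starts at $(t_n, x_n, \omega_n)$ with $x_n \to -\infty$ and $t_n \to u$, the same $\go_\alpha$-estimate forces $t(\gamma_{p_n}(s)) \to t_n + o(1)$ as the curve runs out the past of $M_1$, uniformly, and this is preserved in the limit.) Finally, $\omega_n$ can be chosen arbitrarily (e.g.\ constant), since nothing above used more than $x_n \to -\infty$, $t_n \to u$.

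The main obstacle I expect is the limit-curve extraction done \emph{cleanly at both ends simultaneously}: one must ensure the chosen normalization point $\{x = -r_0\}$ is actually reached by every $\gamma_{p_n}$ (true once $x_n < -r_0$), that the limit geodesic does not degenerate (guaranteed by the unit normalization and compactness of normalized null directions over a compact set of base points), and that achronality and the "running to infinity in both directions" survive the limit — the latter using Lemma~\ref{lem:inextmaxnullrunstoinfty}, Lemma~\ref{lem:maxnulliscomplete}, and the fact that a limit of maximizing causal curves in a causally simple spacetime is maximizing. The quantitative $t$-convergence is then a routine squeeze using Lemma~\ref{lem:betaalpha}, Lemma~\ref{lem:nulllinseforgbeta}, and Remark~\ref{rem:behaviouroff}.
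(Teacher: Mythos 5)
Your overall strategy is the same as the paper's: take future null rays from base points $p_n$ deep in $M_1$ with $t$-coordinate approaching $u$, normalize them at a fixed slice $\{x=-r_0\}$, and pass to a limit. However, the compactness step as you have written it has a genuine gap. You bound the angular velocities $|\dot{\bar\gamma}_{p_n}|_{S^2}$ by invoking Lemma~\ref{lem:angularvelocitiesfornulllines}, but that lemma applies only to \emph{null lines} (inextendible in both directions), where the completeness--Raychaudhuri argument forces $\Ric(\dot\gamma,\dot\gamma)=0$ along the whole curve. The curves $\gamma_{p_n}$ supplied by Lemma~\ref{lem:forpinM1existsmaxnull} are future null \emph{rays} -- past-extendible, with past endpoint $p_n$ -- and for such rays that lemma says nothing (and indeed one cannot conclude $\Ric(\dot\gamma,\dot\gamma)=0$ along a ray). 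So the boundedness of the normalized velocity vectors $\dot\gamma_{p_n}(0)$ is not established, and neither is compactness of the base points $\gamma_{p_n}(0)$, since $\{x=-r_0\}$ is itself noncompact in $t$ and you never show the $t$-coordinates of these crossings stay bounded.

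The paper sidesteps both problems. It first proves that the $t$-coordinate of the crossing $\gamma_n \cap \{x=-m\}$ lies in a fixed bounded interval $[u,\, u+\tau_m + \pi/\sqrt{\alpha_m})$, using not only the explicit $\go_\alpha$-null curve formula (Lemma~\ref{lem:nulllinseforgbeta}, Remark~\ref{rem:behaviouroff}) but also Lemma~\ref{lem:S2inIpaftertimetau} to account for motion in the $S^2$-directions -- your estimate keeping $\omega$ fixed misses this contribution. This gives compactness of the base points (since $S^2$ is compact), and then the limit is extracted via the $C^0$ limit curve theorem for inextendible causal curves, which requires no control on the tangent vectors at all; achronality passes to the limit because $I^+$ is open. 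If you replace the tangent-bundle compactness argument by this $C^0$ limit curve argument and supply the $t$-coordinate bound at the crossing slice (using Lemma~\ref{lem:S2inIpaftertimetau} in addition to the $\go_\alpha$ null curves), your proof closes the gap and becomes essentially the paper's proof.
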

\begin{proof}
	Let $u\in \R$, fix any $\omega_0 \in S^2$ and set 
	$p_n:=(u,-n,\omega_0)\in M_1$. Then by Lemma \ref{lem:forpinM1existsmaxnull} there exist maximizing future inextendible null curves $\gamma_n:[0,\infty)\to M$ 
	starting at $p_n$ that are eventually contained in $M_2$. We now show that the sequence $\gamma_n$ contains an accumulation point. Let $t_{n,m}$ be the maximal $t$-coordinate of the set $\gamma_n \cap \{x=-m\}\neq \emptyset $ for $n\geq m$. Clearly $t_{n,m}\geq u$ for $n\geq m$. By Lemmas \ref{lem:betaalpha}, \ref{lem:nulllinseforgbeta} (and remark \ref{rem:behaviouroff}) and \ref{lem:S2inIpaftertimetau} we see that $ t_{n,m}\leq u+\tau_m+\frac{\pi}{\sqrt{\alpha_m}}-\frac{2\tan^{-1}(e^{-n})}{\sqrt{\alpha_m}} <u+\tau_m+\frac{\pi}{\sqrt{\alpha_m}}<c$
(because all points $p$ in $\{x=-m\}$ with larger $t$-coordinate belong to $I^+(p_n)$). Thus the sequence $\{t_{n,m}\}_\red{{n\geq m}}$ has an accumulation point for $m$ large.
	
	By the no turning back lemma, for large enough $m$ each $\gamma_n$ meets $\{x=-m\}$ in a unique point. We reparametrize such that this point is always $\gamma_n(0)$. 
	
	Thus there exists a limit curve $\eta_u$ which is maximizing and both past and future inextendible, hence complete by Lemma \ref{lem:maxnulliscomplete}. Since $\gamma_n|_{[0,\infty)}\sse \{x\geq -m\}\cup A \cup M_2$ the same holds for $\gamma |_{[0,\infty)}$, so the future end of $\gamma $ is eventually contained in $M_2$ (by a similar argument to Lemma \ref{lem:forpinM1existsmaxnull}). And since $\gamma_n|_{[a_n,0]}\sse \{x\leq -m\} \sse M_1$ the past end of $\gamma $ must lie in $M_1$.
	
	Finally, we need to argue that $t(\eta_{u}(s))\to u$ as $s\to -\infty$. Since $t(\gamma_n)\geq u$ (as long as $\gamma_n$ remains in $M_1$) the same holds for $t(\eta_u)$. Assume now that $t(\eta_{u}(s))\to u_1$ with $u_1>u$. This implies that $t(\eta_{u}(s))>u_1>u+\eps$ for all $s$. But this is a contradiction to $t_{n,m}\leq f_{\alpha_m}(n-m,u,-n)+\tau_m = \tau_m +\frac{\pi}{\sqrt{\alpha_m}}(\tan^{-1}(e^{-m})-\tan^{-1}(e^{-n}))+u< u+\eps$ for $m$ large and $n\geq m$. 
	\end{proof}

While the construction above depends on the choice of $\omega_0\in S^2$ and hence is not unique, we are now going to argue that any null line $\eta $ with $t(\eta(s))\to u$ is contained in a totally geodesic null hypersurface $N_u$ that \emph{only} depends on $u$. We first note the following:

\begin{Proposition} \label{prop:differenttimelimitmeansIp}
	Given two past inextendible causal curves $\eta_1,\eta_2 :(-\infty,0]\to M$ with past end contained in $M_1$, $\lim_{s\to -\infty}x(\eta_i(s))= -\infty $ and $\lim_{s\to-\infty} t(\eta_1(s))>\lim_{s\to-\infty} t(\eta_2(s))$ one has $\eta_1 \sse I^+(\eta_2)$.
\end{Proposition}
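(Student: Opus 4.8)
The plan is to exploit the ordering $\go_{\alpha_r}\prec g$ from Lemma~\ref{lem:betaalpha} together with the explicit null curves for $\go_{\alpha_r}$ from Lemma~\ref{lem:nulllinseforgbeta} and Remark~\ref{rem:behaviouroff}. Write $u_1=\lim_{s\to-\infty}t(\eta_1(s))$ and $u_2=\lim_{s\to-\infty}t(\eta_2(s))$, so $u_1>u_2$, and pick $\eps>0$ with $u_1>u_2+\eps$. The idea is: go far enough back along $\eta_2$ that its $t$-coordinate is within $\eps/3$ of $u_2$ and its $x$-coordinate is very negative (say $x(\eta_2(s_2))=x_0$ with $|x_0|$ huge); from that point, follow a past-directed $\go_{\alpha_r}$-null curve to an even smaller $x$-slice, and observe that by Remark~\ref{rem:behaviouroff} the total $t$-displacement of such a curve going out to $x=-\infty$ is bounded by $\tfrac{\pi}{\sqrt{\alpha_r}}\,(\tan^{-1}(e^{x_0})-\tan^{-1}(e^{x}))\to 0$ as $x_0\to-\infty$; hence for $r$ large the whole of $\eta_2|_{(-\infty,s_2]}$ lies within $t<u_2+\eps/2$ near infinity. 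Meanwhile $\eta_1$ has $t\to u_1>u_2+\eps$, so far back $\eta_1$ lies in $\{t>u_2+\eps\}$.

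Next I would produce an actual timelike curve from (a point on) $\eta_2$ to (a point on) $\eta_1$. By Lemma~\ref{lem:Ipmeetseveryxslice}, or more simply by a direct $\go_{\alpha_r}$-construction, a point $q$ on $\eta_2$ with very negative $x$-coordinate has a future set $I^+_g(q)$ containing $\{t>t(q)+\tfrac{\pi}{\sqrt{\alpha_r}}+\tau_r\}\times\{x(q)\}\times S^2$ — this is exactly the content of Lemmas~\ref{lem:t+piinIp} and \ref{lem:S2inIpaftertimetau} transported to $g$ via Lemma~\ref{lem:betaalpha}, using that $\go_{\alpha_r}$-causal curves are $g$-timelike. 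Choosing $q=\eta_2(s_2)$ far enough back that $\tfrac{\pi}{\sqrt{\alpha_r}}$ is small and $t(q)<u_2+\eps/3$ won't quite work directly, since $\tfrac\pi{\sqrt{\alpha_r}}\to\pi\neq0$; the fix is that we only need to reach *some* point of $\eta_1$, and $\eta_1$ can be followed arbitrarily far to the future where $t(\eta_1)$ eventually exceeds $u_2+\pi+2$. So: fix $r$ large, take $q=\eta_2(s_2)$ with $|x(q)|$ large and $t(q)$ close to $u_2$; then $I^+_g(q)\supseteq\{t>u_2+\pi+1\}\times\{x(q)\}\times S^2$ (enlarging constants harmlessly). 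Then take a point $q'=\eta_1(s_1)$ with $x(q')=x(q)$ — such a point exists because $\eta_1$ also runs to $x=-\infty$ monotonically (Corollary~\ref{cor:xparam}), so it crosses every sufficiently negative $x$-slice — and with $t(q')>u_2+\pi+1$, which holds once $s_1$ is large enough along $\eta_1$ since $t(\eta_1(s))\to u_1$ and $t$ is...

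here I need care: $t(\eta_1(s))\to u_1$ as $s\to-\infty$ doesn't guarantee $t$ ever gets large. So instead of going to the far future of $\eta_1$, I should reverse the roles and reach $\eta_1$ at a point with very negative $x$ but $t$ close to $u_1$. Concretely: choose $r$ so large that for points with $|x|\ge r$ one has $\tfrac\pi{\sqrt{\alpha_r}}(\tan^{-1}(e^{x_0})-\tan^{-1}(e^{x}))<\eps/4$ whenever $x\le x_0\le -r$, and $\tau_r<\eps/4$; pick $s_2$ so that $x(\eta_2(s_2))=:x_0\le-r$ and $t(\eta_2(s_2))<u_2+\eps/4$; follow the past-directed $g$-timelike angular curve of Lemma~\ref{lem:S2inIpaftertimetau}-type plus a past $\go_{\alpha_r}$-null curve from $\eta_2(s_2)$ out to a slice $x=x_1\le x_0$ chosen so that $\eta_1$ meets $\{x=x_1\}$ at a point $q'$ with $t(q')>u_1-\eps/4>u_2+\eps-\eps/4$. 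Then the $t$-coordinate of the constructed curve's endpoint on $\{x=x_1\}$ can be made anything $\ge t(\eta_2(s_2))+\text{(small)}+\tau_r<u_2+\eps$, in particular less than $t(q')$, and by Lemma~\ref{lem:t+piinIp}/\ref{lem:S2inIpaftertimetau} applied on $M(r)$ (valid for $g$ by Lemma~\ref{lem:betaalpha}) the point $q'$, being on the same $x$-slice with strictly larger $t$, lies in $I^+_g$ of that endpoint, hence in $I^+_g(\eta_2(s_2))\subseteq I^+_g(\eta_2)$. Concatenating gives $\eta_1\cap I^+_g(\eta_2)\neq\emptyset$, and since $I^+_g(\eta_2)$ is open and $\eta_1$ is a causal curve through it, in fact all of $\eta_1$ lies in $I^+_g(\eta_2)$ by pushing the relation along $\eta_1$ in both directions using $I^\pm$ being open and $\eta_1$ causal — to the future trivially, and to the past because any point of $\eta_1$ is in the past of the one we found, while $I^+(\eta_2)$ is a past set relative to... actually one argues: if $x\in I^+(\eta_2)$ and $y\le x$ lies on $\eta_1$ then we need $y\in I^+(\eta_2)$, which requires redoing the argument at $y$'s $x$-slice — but the argument above works for every sufficiently negative $x$-slice simultaneously with the same $r$, so it directly yields $\eta_1(s)\in I^+(\eta_2)$ for all $s$ with $x(\eta_1(s))\le x_1$, and then the remaining (future) part of $\eta_1$ follows since $I^+(\eta_2)$ is a future set along the causal curve $\eta_1$.

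The main obstacle is the bookkeeping of the three $\eps$-budgets and, more seriously, the fact that $\tfrac\pi{\sqrt{\alpha_r}}$ does \emph{not} shrink to $0$ — so one cannot naively say "$\eta_2$ stays below $u_2+\eps$ so its future set starts just above that." The resolution, as sketched, is to use the *angular/radial* part of the light cone (which costs only $\tau_r\to0$ plus the $\go_{\alpha_r}$-null $t$-drift $\to0$) to reach a very negative $x$-slice cheaply, and to arrange that $\eta_1$, which also limits onto the finite value $u_1$ and runs monotonically to $x=-\infty$, meets that same slice at a $t$-value comfortably above the constructed endpoint's $t$-value; the gap $u_1-u_2$ is exactly what makes this possible. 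Everything else — existence of the intersection point on $\eta_1$ with a prescribed $x$, monotonicity of $x$ along past-inextendible null/causal curves near infinity, transfer of $\go_{\alpha_r}$-causality to $g$-chronology — is supplied verbatim by Lemmas~\ref{lem:betaalpha}, \ref{lem:noturningback}, Corollary~\ref{cor:xparam}, and Lemmas~\ref{lem:t+piinIp}, \ref{lem:S2inIpaftertimetau}, \ref{lem:nulllinseforgbeta}.
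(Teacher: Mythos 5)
Your core mechanism is the same as the paper's one-line proof: both rest on Lemma~\ref{lem:S2inIpaftertimetau} (the fact that $\tau_r\to 0$) together with $|x(\eta_i(s))|\to\infty$, and you correctly identify — after a false start with $\pi/\sqrt{\alpha_r}\to\pi$ — that it is the angular light-crossing time $\tau_r$, not $\pi/\sqrt{\alpha_r}$, that vanishes and drives the argument. Your handling of the reduction (first establish $\eta_1(s)\in I^+(\eta_2)$ for $s$ sufficiently negative, then propagate forward along the causal curve via $J^+(I^+)=I^+$) is also fine.

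However, the intermediate $\go_{\alpha_r}$-null-curve step you insert between $\eta_2(s_2)$ and the slice $\{x=x_1\}$ is both unnecessary and, as written, has a sign error that would invalidate the argument if taken literally. You describe the connecting curve as \emph{past}-directed from $\eta_2(s_2)$; but then its endpoint $p_1$ lies in $J^-(\eta_2(s_2))$, and $q'\in I^+(p_1)$ gives no information about $I^+(\eta_2(s_2))$ (indeed $I^+(\eta_2(s_2))\sse I^+(p_1)$, the wrong inclusion). The inequality you write for the endpoint's $t$-coordinate, $t\ge t(\eta_2(s_2))+(\text{small})+\tau_r$, only makes sense for a \emph{future}-directed curve, so this is presumably a slip, but it needs to be fixed: the radial $\go_{\alpha_r}$-null curve must be taken future-directed (which is automatically $g$-timelike by Lemma~\ref{lem:betaalpha}). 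You also acknowledge that the $\eps$-budget as stated does not close — with ``pick $\eps>0$ with $u_1>u_2+\eps$'' one gets $t(q')>u_2+3\eps/4$ while the endpoint's $t$-coordinate is only bounded by $u_2+3\eps/4$, leaving no room for $\tau_{|x_1|}$; choosing, e.g., $\eps=(u_1-u_2)/2$ repairs this.

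More fundamentally, the whole intermediate curve can be dispensed with. Fix a point $q=\eta_1(s_1)$ far enough back that $t(q)>u_1-\delta$ and $x(q)\le -r$; since $x(\eta_2(s))\to -\infty$ and $t(\eta_2(s))\to u_2$, one can (taking $s_1$ yet more negative if needed) find $s_2$ with $x(\eta_2(s_2))=x(q)$ and $t(\eta_2(s_2))<u_2+\delta$. Both points now lie on the \emph{same} $x$-slice, so Lemma~\ref{lem:S2inIpaftertimetau} applies directly once $t(q)-t(\eta_2(s_2))>\tau_r$, which holds for $\delta$ small and $r$ large since $u_1>u_2$. This is the paper's intended argument and avoids both the null-curve detour and the direction issue.
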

\begin{proof}
	Use Lemma \ref{lem:S2inIpaftertimetau}, note that $\tau_r \to 0$ as $r\to \infty$ and that by assumption $|x(\eta_i(s))|\to \infty $ as $s\to -\infty$.
\end{proof}

\begin{Proposition} \label{prop:sametlimit}
	Let $\eta_1: (-\infty, \infty )\to M$ be a null line and $\eta_2:(-\infty,b]\to M$ be a past null ray, both with past end contained in $M_1$, such that $\lim_{s\to-\infty} t(\eta_1(s))=\lim_{s\to-\infty} t(\eta_2(s))$. Then $\eta_1 \sse \overline{I^+(\eta_2)}$. If $\eta_2$ extends to a null line this further implies $\eta_1 \sse \partial J^+(\eta_2)$ and vice versa.
\end{Proposition}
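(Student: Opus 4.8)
The plan is to obtain $\eta_1 \subseteq \overline{I^+(\eta_2)}$ by a limiting argument: although $\eta_1$ and $\eta_2$ have the same asymptotic $t$-value as $s \to -\infty$, I can perturb $\eta_2$ (or rather the starting data of the null rays converging to it) to produce, for each point $p$ on $\eta_1$, causal curves from points arbitrarily near $\eta_2$ to points arbitrarily near $p$, and then pass to the closure. Concretely, recall from the construction in Proposition \ref{prop:existenceofnulllines} that $\eta_2$ arises (or can be taken to arise) as a limit of maximizing null curves $\gamma_n$ starting at points $p_n = (u_n, -n, \omega_0)$ with $u_n \to u := \lim_{s\to-\infty} t(\eta_2(s))$; more robustly, for any point $q$ on $\eta_2$ far out in $M_1$ and any $\eps > 0$, the point $q' = q$ shifted in the $t$-coordinate by $-\eps$ lies in $I^-(q) \subseteq I^-(\eta_2)$. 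Now take any $p$ on $\eta_1$. Since $\lim_{s\to-\infty} t(\eta_1(s)) = u$ and $\lim_{s\to-\infty} x(\eta_1(s)) = -\infty$, for any $\eps > 0$ there is a point $q_\eps$ on $\eta_2$, sufficiently far out in $M_1$, whose $t$-coordinate is less than that of the corresponding far-out portion of $\eta_1$ by an amount controlled by $\eps$ plus $\tau_{r}$, which goes to $0$. By Proposition \ref{prop:differenttimelimitmeansIp} applied to a slightly $t$-advanced copy $\eta_1^\eps$ of $\eta_1$ (advanced by $2\eps$, say) versus $\eta_2$, we get $\eta_1^\eps \subseteq I^+(\eta_2)$, hence $p^\eps := \eta_1^\eps(\text{param of } p) \in I^+(\eta_2)$; letting $\eps \to 0$ gives $p \in \overline{I^+(\eta_2)}$. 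This establishes $\eta_1 \subseteq \overline{I^+(\eta_2)}$.

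For the second assertion, suppose $\eta_2$ extends to a null line. First, $\eta_1 \subseteq \overline{I^+(\eta_2)} = \overline{J^+(\eta_2)} = J^+(\eta_2)$, the last equality by causal simplicity (so $J^+$ of the closed set $\overline{\eta_2}$, or of compact pieces of it, is closed). Next I must rule out $\eta_1 \cap I^+(\eta_2) \neq \emptyset$. If some $p$ on $\eta_1$ were in $I^+(\eta_2)$, then every point of $\eta_1$ to the future of $p$ is in $I^+(\eta_2)$; I then want to derive a contradiction with maximality (achronality) of one of the two null lines. The idea: run the symmetric argument with the roles of past and future, or combine a timelike curve from $\eta_2$ to $p$ with the achronal null line $\eta_1$ to produce a causal-but-not-achronal configuration; more precisely, if $p \in I^+(\eta_2)$ and also (by the symmetric future version of the asymptotic comparison, since both are past-complete null \emph{lines} asymptotic to the same $t$-value) one can show $\eta_2 \subseteq \overline{I^+(\eta_1)}$ or that $\eta_2$ and $\eta_1$ become mutually arbitrarily close at past infinity, then $p \in I^+(\eta_2) \subseteq I^+(\overline{I^-(\eta_1)})$ would force a timelike curve from a point of $\eta_1$ in the far past to $p \in \eta_1$, contradicting achronality of $\eta_1$. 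Hence $\eta_1 \cap I^+(\eta_2) = \emptyset$, so $\eta_1 \subseteq J^+(\eta_2) \setminus I^+(\eta_2) = \partial J^+(\eta_2)$. The statement "and vice versa" then follows by exchanging the labels $\eta_1 \leftrightarrow \eta_2$, both being null lines with the same past $t$-limit.

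The main obstacle I anticipate is the borderline nature of the equal-$t$-limit hypothesis: unlike Proposition \ref{prop:differenttimelimitmeansIp}, where a strict inequality of $t$-limits gives genuine timelike separation via Lemma \ref{lem:S2inIpaftertimetau}, here one is exactly on the boundary, so the inclusion can only be into the \emph{closure} of $I^+$, and the $\eps$-perturbation argument must be set up carefully so that the error terms ($\tau_r$ from Lemma \ref{lem:S2inIpaftertimetau}, and the difference between the true $t$-coordinates and their limits along $\eta_1,\eta_2$ far out) all tend to $0$ \emph{before} one takes $\eps \to 0$. A secondary delicate point is the step ruling out $\eta_1 \cap I^+(\eta_2) \neq \emptyset$: one needs both $\eta_1$ and $\eta_2$ to be past-complete null lines (not merely rays) so that the asymptotic comparison can be run toward the past with achronality available on both sides; this is exactly why the hypothesis upgrades $\eta_2$ to a null line for that part of the conclusion.
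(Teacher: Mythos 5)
Your high-level plan — perturb $\eta_1$ in the $t$-direction so that its past $t$-limit strictly exceeds that of $\eta_2$, apply Proposition~\ref{prop:differenttimelimitmeansIp}, then let the perturbation tend to zero — is the same idea as the paper's proof. But the key step you take for granted fails: a constant $t$-shift $\eta_1^\eps := (\eta_1^t + 2\eps,\, \eta_1^x,\, \bar\eta_1)$ is in general \emph{not} a causal curve for $g$, because $g = \go + h$ is not $t$-translation invariant. The decay $|\partial_t h_{ij}| \lesssim \cosh^{\#t}(x)/|x|^2$ controls the size of the deviation $g_{\eta_1^\eps(s)}(\dot\eta_1,\dot\eta_1)$, but not its sign, so you cannot conclude $\dot\eta_1^\eps$ is causal and hence cannot invoke Proposition~\ref{prop:differenttimelimitmeansIp}. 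The paper fixes exactly this by using a non-constant shift $^\delta\eta_1(s) := (\eta_1^t(s) + \delta + \delta f(s),\, s,\, \bar\eta_1(s))$ with $f(s)=|s|^{-\kappa}$, $0<\kappa<1$: the strictly increasing correction $\delta f$ contributes a negative term $\sim -\delta \cosh(s)\dot f(s)/\sqrt{\beta_s}$ to $g(^\delta\dot\eta_1,{}^\delta\dot\eta_1)$, and since $\dot f(s)$ decays slower than $|s|^{-2}$ it eventually dominates the error from the $t$-shift, making $^\delta\eta_1|_{(-\infty,-r]}$ genuinely timelike for $r$ independent of $\delta$. That estimate in turn needs control of the angular velocity $|\dot{\bar\eta}_1|_{S^2}\le 1$, which comes from Lemma~\ref{lem:angularvelocitiesfornulllines}, i.e.\ from the null energy condition; your proposal never invokes this and has no substitute for it. This is not a cosmetic omission: the dependence on the NEC at precisely this point is flagged explicitly at the start of Section~\ref{sec:main}.

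A second, smaller problem is the assertion $\overline{J^+(\eta_2)} = J^+(\eta_2)$ ``by causal simplicity.'' Causal simplicity gives closedness of $J^\pm(p)$ for points and, as a consequence, of $J^\pm(K)$ for \emph{compact} $K$; the image of $\eta_2$ is not compact, so that identity does not follow. The paper avoids it: once one has (by applying the first part symmetrically) both $\eta_1\subseteq\overline{I^+(\eta_2)}$ and $\eta_2\subseteq\overline{I^+(\eta_1)}$, one concludes $I^+(\eta_2)\subseteq I^+\bigl(\overline{I^+(\eta_1)}\bigr)=I^+(\eta_1)$, so $\eta_1\cap I^+(\eta_2)=\emptyset$ by achronality of $\eta_1$, and combined with $\eta_1\subseteq\overline{I^+(\eta_2)}\subseteq \overline{J^+(\eta_2)}$ and the fact that points of $\eta_1$ not in $I^+(\eta_2)$ but in $\overline{I^+(\eta_2)}$ lie on $\partial I^+(\eta_2)=\partial J^+(\eta_2)$, this gives $\eta_1\subseteq\partial J^+(\eta_2)$ without any unwarranted appeal to closedness of $J^+$ of a noncompact set.
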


\begin{proof}
	We may assume that, far enough out, both curves are parametrized with respect to the $x$-coordinate, so $|\dot{\eta}^x_{1,2}|=1$. We also note that by Lemma \ref{lem:angularvelocitiesfornulllines} since $\eta_1$  is assumed to be maximizing and both future and past  inextendibile we have that $|\dot{\bar{\eta}}_{1}(s)|_{S^2}^2 \leq 1$.
	
	We will first show that $\eta_1 \sse \overline{I^+(\eta_2)}$. This follows immediately if we can find $r>0$ such that $\eta_1 |_{(-\infty,-r]}$ can be approximated by curves $^\delta\eta_1 \sse I^+(\eta_2)$. We are now going to construct such approximating curves.
	
	To do this we estimate $|g_{(t_1,x,\theta,\phi)}-g_{(t_2,x,\theta,\phi)}|$ in terms of $|t_1-t_2|$. Since $\go $ is independent of $t$ this is just the difference of the corresponding $h$-terms and and $\frac{1}{\cosh(x)}|\partial_th_{ij}|\leq \frac{C\cosh^{\#t(i,j)}}{|x|^2}$, \eqref{eq:estfornullvec} gives (assuming $v$ is null and satisfies the same estimates as $\dot{\eta}_1$)
	\beq
	|h_{(t_1,x,\theta,\phi)}(v,v)-h_{(t_2,x,\theta,\phi)}(v,v)|\leq |\partial_t h_{ij}| |v^i| |v^j| |t_1-t_2|\leq \frac{2c\cosh(x)}{x^2} |t_1-t_2|\,.
	\eeq
	
	For a function $f>0,\dot{f}>0$ (which will be determined later), we define the curve $^\delta \eta_1 :=(\eta_{1}^t(s)+\delta+\delta f(s), s, \bar{\eta}_1(s))$. Clearly these curves approximate $\eta_{1}$. And by the above and $\eta_1$ being null we may estimate $g_{^\delta \eta_1(s)}(\dot{\eta}_{1}(s),\dot{\eta_{1}}(s))\leq \frac{2c\cosh(s)}{s^2} (\delta + \delta f(s))$. This, together with $\frac{1}{\beta_s \cosh^2(s)}\leq |\dot{\eta}^t_{1}(s)|^2\leq \frac{2}{\alpha_s \cosh^2(s)}$ , leads to
	\begin{multline}
g_{^\delta\eta_1}(^\delta \dot{\eta}_{1},^\delta \dot{\eta_{1}})\leq \frac{2c\cosh(s)}{s^2} (\delta + \delta f(s))+2 \delta g_{^\delta\eta_1}(\dot{\eta}_{1}, (\dot{f},0,\bar{0})) +\delta^2\dot{f}^2 g_{tt} \leq \\ \leq  \frac{2c\cosh(s)}{s^2} (\delta + \delta f(s))+2 \delta \go(\dot{\eta}_{1}, (\dot{f},0,\bar{0})) + |2\delta \dot{f} h_{it}  \dot{\eta}_1^i| \leq \\ \leq \frac{2c\cosh(s)}{s^2} (\delta + \delta f(s))-  \frac{2 \delta \cosh^2(s)}{\sqrt{\beta_s} \cosh(s) } \dot{f}(s)+ |2\delta \dot{f} h_{it}  \dot{\eta}_1^i|\,.
	\end{multline}
 Finally, $|h_{it} \dot{\eta}_1^i|\leq |h_{tt}| |\dot{\eta}_1^t|+\sum_{i\neq t} |h_{it}|\leq (\frac{\sqrt{2}}{\sqrt{\alpha_s}}+3)\frac{C\cosh(s)}{|s|}\leq \frac{5C\cosh(s)}{|s|}$ for $s$ large enough to ensure $\alpha_s>\frac{1}{2}$. So
 \beq g_{^\delta\eta_1}(^\delta \dot{\eta}_{1},^\delta \dot{\eta_{1}})\leq \delta \cosh(s) \big(\frac{2c}{s^2}(1+f(s))-\frac{2}{\sqrt{\beta_s}}\dot{f}(s)+\frac{5C}{|s|}\dot{f}(s)\big)\eeq 
	
	Now if $f(s)=|s|^{-\kappa}$ with $0<\kappa <1$ we have that $f$ is bounded by one and $\dot{f}(s)\to 0$ slower than $\frac{1}{s^2}$ as $s\to -\infty$. So there exists $r$ (independent of $\delta$) 
	 such that $^\delta \eta_1|_{(-\infty,-r]}$ is timelike. Since by construction $\lim t(^\delta \eta_1(s))=\lim t(\eta_1(s))+\delta=\lim t(\eta_2(s))+\delta$ Proposition \ref{prop:differenttimelimitmeansIp} implies $^\delta\eta_1 \sse I^+(\eta_2)$.
	
	If both curves are null lines, we may apply the same argument to $\eta_2 $ to get get $\eta_1  \sse \overline{I^+(\eta_2)}$ and $\eta_2  \sse \overline{I^+(\eta_1)}$. From this we see that $I^+(\eta_2)\sse I^+(\overline{I^+(\eta_1)})=I^+(\eta_1)$, hence $\eta_1 \cap I^+(\eta_2)=\emptyset $ by achronality of $\eta_1$, proving the claim.
\end{proof}

\begin{remark} \label{rem:sametlimit2lines}
	Note that this implies that for any future directed null line $\eta $  with past end contained in $M_1$ the set $\partial J^+(\eta)$ depends only on $\lim_{s\to -\infty} t( \eta (s))$. In particular for two lines $\eta_{u,\omega_1}$ and $\eta_{u,\omega_2}$ constructed as in Proposition \ref{prop:existenceofnulllines} one has $\partial J^+(\eta_{u,\omega_1})=\partial J^+(\eta_{u,\omega_1})=:\partial J^+(\eta_{u})$.
\end{remark}

\begin{Proposition}\label{prop:homeomS2}
	For any $u\in \R $ and any $x_0\in (-\infty,-a]\cup [a,\infty)$ the set $\partial J^+(\eta_{u}) \cap \{x=x_0\} \sse \R \times \{x_0\} \times S^2$ is a  graph over $S^2$ with continuous graphing function 
$\mathcal{T}_{u, x_0}: S^2\to \R$.  In particular, it is connected.
\end{Proposition}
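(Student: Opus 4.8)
The goal is to show that $\Sigma_{x_0} := \partial J^+(\eta_u) \cap \{x = x_0\}$ is a continuous graph over $S^2$ in the $t$-coordinate. The plan is to proceed in three stages: (i) show that $\Sigma_{x_0}$ meets every vertical line $\mathbb{R} \times \{x_0\} \times \{\omega\}$ in \emph{at most} one point; (ii) show it meets every such line in \emph{at least} one point; (iii) deduce continuity of the resulting graphing function $\mathcal{T}_{u,x_0}$ and hence connectedness. Throughout I would work on $M_1(r) \cup A \cup M_2$ with $r$ large enough that the earlier lemmas (no turning back, $x$-parametrization, Lemma~\ref{lem:betaalpha}) apply, and I may assume without loss of generality that $x_0 \le -a$ lies in $M_1(r)$ since the $x_0 \ge a$ case follows by the same argument as in Lemma~\ref{lem:Ipmeetseveryxslice}.

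For the ``at most one point'' part: suppose $q_1 = (t_1, x_0, \omega)$ and $q_2 = (t_2, x_0, \omega)$ both lie in $\partial J^+(\eta_u)$ with $t_1 > t_2$. The curve running vertically from $q_2$ to $q_1$ (i.e.\ $s \mapsto (t_2 + s(t_1-t_2), x_0, \omega)$) is timelike for $\go_{\alpha_r}$ since $|x_0| \ge r \ge a$ — just as in the proof of Lemma~\ref{lem:S2inIpaftertimetau} — and hence timelike for $g$ by Lemma~\ref{lem:betaalpha}. So $q_1 \in I^+(q_2)$. But $\partial J^+(\eta_u)$ is achronal (being the boundary of a future set in a causally simple, hence strongly causal, spacetime), which contradicts $q_2, q_1 \in \partial J^+(\eta_u)$.

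For the ``at least one point'' part, fix $\omega \in S^2$ and look at the vertical line $\ell_\omega = \mathbb{R} \times \{x_0\} \times \{\omega\}$. By Lemma~\ref{lem:Ipmeetseveryxslice} applied to a representative line $\eta_{u,\omega_1}$, the set $I^+(\eta_u) \cap \{x = x_0\}$ is nonempty; I would first argue, using Proposition~\ref{prop:differenttimelimitmeansIp} (and the fact that the constructed lines $\eta_{u,\omega'}$ have $t \to u$ at $s \to -\infty$), that $\ell_\omega$ meets $I^+(\eta_u)$: indeed, pushing a constructed line $\eta_{u,\omega'}$ with $\omega' = \omega$ slightly up in $t$ gives a past-inextendible timelike curve with larger $t$-limit, which by Proposition~\ref{prop:differenttimelimitmeansIp} lies in $I^+(\eta_u)$, and it reaches arbitrarily negative $x$, hence crosses $\{x = x_0\}$ on $\ell_\omega$. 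So the ``high'' part $\{t > \mathcal{T}^+\} \cap \ell_\omega \subseteq I^+(\eta_u)$ for some $\mathcal{T}^+$. Conversely, $\ell_\omega$ is not entirely contained in $J^+(\eta_u)$: if a point $(t, x_0, \omega)$ with $t$ very negative were in $J^+(\eta_u)$, a causal curve reaching it from $\eta_u$ would — by the structure of the $M_1$ end, Remark~\ref{rem:behaviouroff}, and Lemma~\ref{lem:S2inIpaftertimetau} controlling how far up in $t$ and around in $S^2$ one can travel — be forced to originate from a point of $\eta_u$ with $t$-coordinate below $t$, impossible once $t < u - \pi$ say. Since $J^+(\eta_u)$ is closed (causal simplicity, applied to the closed set $\eta_u$, or via $J^+$ of a point limit argument) and $\ell_\omega \cong \mathbb{R}$ is connected, the boundary point $\mathcal{T}_{u,x_0}(\omega) := \sup\{t : (t,x_0,\omega) \notin J^+(\eta_u)\}$ is a well-defined real number lying in $\partial J^+(\eta_u) \cap \ell_\omega$. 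Combined with stage (i), this is the unique such point, so $\Sigma_{x_0}$ is exactly the graph of $\mathcal{T}_{u,x_0}$.

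For continuity: $\partial J^+(\eta_u)$ is a closed set (boundary of a set), and the graphing function of a closed achronal graph over a vertical $\mathbb{R}$-direction is automatically upper and lower semicontinuous by a standard limit-curve/closedness argument — if $\omega_n \to \omega$ and $\mathcal{T}_{u,x_0}(\omega_n) \to \tau$, then $(\tau, x_0, \omega) \in \partial J^+(\eta_u)$ by closedness, forcing $\tau = \mathcal{T}_{u,x_0}(\omega)$ by uniqueness — so the limit exists and equals the value, i.e.\ $\mathcal{T}_{u,x_0}$ is continuous. (One needs local boundedness of $\mathcal{T}_{u,x_0}$ near each $\omega$ to run this; that follows from Lemma~\ref{lem:S2inIpaftertimetau}, which gives a uniform-in-$\omega$ bound $\mathcal{T}_{u,x_0}(\omega) \le \mathcal{T}^+ + \tau_{|x_0|}$ on one side, and from a symmetric lower bound on the other.) Finally, $S^2$ being connected and $\mathcal{T}_{u,x_0}$ continuous, its graph $\Sigma_{x_0}$ is connected, proving the last assertion.

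\medskip
\textbf{Main obstacle.} The delicate point is stage (ii), specifically showing $\ell_\omega \not\subseteq J^+(\eta_u)$ — i.e.\ that one genuinely cannot reach points of arbitrarily negative $t$ on a fixed $x_0$-slice from the null line $\eta_u$. This requires carefully combining the ``past end in $M_1$ with $t \to u$'' behavior of $\eta_u$ with the quantitative control on causal travel in the $M_1$ region afforded by Lemmas~\ref{lem:betaalpha}, \ref{lem:nulllinseforgbeta}, \ref{lem:S2inIpaftertimetau} and Remark~\ref{rem:behaviouroff}; the achronality of $\eta_u$ is what ultimately prevents a causal curve from $\eta_u$ from ``doubling back'' to low $t$. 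Getting the uniform (in $\omega$) bounds needed for both the existence of $\mathcal{T}_{u,x_0}(\omega)$ and its local boundedness — hence continuity — is where most of the real work lies.
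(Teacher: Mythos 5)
Your proposal takes a genuinely different route from the paper. The paper does not work fiber-by-fiber; instead it observes that $S := \partial J^+(\eta_u)\cap\{x=x_0\}$ is an achronal locally Lipschitz hypersurface in $\{x=x_0\}$ on which the projection $\pi|_S$ to $S^2$ is injective (achronality), proves $S$ is compact using Lemma~\ref{lem:S2inIpaftertimetau}, and then uses invariance of domain to see that $\pi|_S$ is an open map, so $\pi(S)$ is open, closed, and nonempty in the connected $S^2$, hence all of $S^2$. This makes $\pi|_S$ a homeomorphism onto $S^2$ and gives the continuity of $\mathcal T_{u,x_0}$ for free. Your approach replaces the topological step (invariance of domain) by a direct existence argument on each $t$-line $\ell_\omega$, plus a separate closedness-and-boundedness argument for continuity; both routes are viable, and both ultimately lean on Lemma~\ref{lem:S2inIpaftertimetau} for the key quantitative bound, so the proofs are of comparable length.

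However, your stage (ii) argument for $\ell_\omega \cap I^+(\eta_u)\neq\emptyset$ has a gap as written. You push a constructed line $\eta_{u,\omega'}$ with $\omega'=\omega$ slightly up in $t$ and claim the resulting curve crosses $\{x=x_0\}$ on $\ell_\omega$. But the construction in Proposition~\ref{prop:existenceofnulllines} only fixes $\omega_0$ as the $S^2$-coordinate of the approximating \emph{starting points} $p_n$; the limit curve $\eta_{u,\omega_0}$ has nonzero angular velocity in general (it only tends to zero as $|x|\to\infty$, by Lemma~\ref{lem:angularvelocitiesfornulllines}), so the pushed curve need not pass through any point of $\ell_\omega$ at $x=x_0$. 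Fortunately the fix is easy and also simplifies your write-up: the null line $\eta_u$ itself crosses $\{x=x_0\}$ at some point $q$ (its past end has $x\to-\infty$, its future end $x\to+\infty$, and by Lemma~\ref{lem:noturningback} it crosses each $x$-slice exactly once). Then Lemma~\ref{lem:S2inIpaftertimetau} gives $(t(q)+\tau_{|x_0|},x_0,\omega)\in I^+(q)\subseteq I^+(\eta_u)$ and dually $(t(q)-\tau_{|x_0|},x_0,\omega)\in I^-(q)\subseteq I^-(\eta_u)$, and the latter is disjoint from $J^+(\eta_u)$ by achronality of $\eta_u$. This establishes both that $\ell_\omega$ meets $\overline{I^+(\eta_u)}$ and its complement, hence $\partial J^+(\eta_u)$, and the uniform two-sided bound $|\mathcal T_{u,x_0}(\omega)-t(q)|\le\tau_{|x_0|}$ that you need for the continuity argument, without the appeal to Remark~\ref{rem:behaviouroff} or the contrived doubling-back estimate. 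With this repair your proof is correct.
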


\begin{proof}
	Let $\pi : \{x=x_0\} \equiv \R \times \{x_0\} \times S^2 \to S^2$ be the projection onto $S^2$ and define $S:=\partial J^+(\eta_{u}) \cap \{x=x_0\}$.   Being the intersection of an achronal  locally Lipschitz hypersurface with a timelike hypersurface, $S$ is itself an achronal locally Lipschitz hypersurface in $\{ x = x_0\}$. 
	Clearly, $\pi|_S $ is injective since $S$ is achronal and $\partial_t$ is timelike. Hence we may define $\mathcal{F}_{u,x_0}:=(\pi|_S )^{-1}: \pi(S) \sse S^2 \to S$.
	Next we will argue that $S$ is actually compact: Let $(t_0,x_0,\omega_0)\in S$. Then by Lemma \ref{lem:S2inIpaftertimetau}, any $p\in \{x=x_0\}$ with $t(p)<t_0-\tau_{|x_0|}$  lies in $I^-(S)$ and any  $p\in \{x=x_0\}$ with $t(p)>t_0+\tau_{|x_0|}$  lies in $I^+(S)$. So by achronality $S\sse [t_0-\tau_{|x_0|}, t_0+\tau_{|x_0|}] \times \{x_0\} \times S^2$, hence it must be compact. This implies that $\pi_S : S\to \pi(S)\sse S^2$ is actually a homeomorphism onto its image. In particular $\pi(S)$ is compact and $\mathcal{F}_{u,x_0}$ is continuous. Since $S$ is itself a two dimensional (topological) manifold, invariance of domain implies that $\pi|_S :S\to S^2$ is an open map. Hence  $\pi(S)= S^2$.
	
Thus, $\mathcal{F}_{u,x_0}$ is a homeomorphism,  and hence $S$ is homeomorphic to $S^2$, in particular connected. The graphing function  $\mathcal{T}_{u, x_0}: S^2 \to \R$, defined via: 
$\mathcal{T}_{u, x_0}(\omega) = t(\mathcal{F}_{u,x_0}(\omega))$ is clearly continuous.	
\end{proof}

\begin{Corollary}\label{cor:connected}
	For any $u\in \R $ the set $\partial J^+(\eta_{u})$ has only one connected component.
\end{Corollary}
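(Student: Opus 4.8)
The plan is to deduce global connectedness of $\partial J^+(\eta_u)$ from the slice-wise connectedness already established in Proposition \ref{prop:homeomS2}, by showing that $\partial J^+(\eta_u)$ is exhausted by the slices $\partial J^+(\eta_u) \cap \{x = x_0\}$ together with a connected ``core'' piece, and that all these pieces overlap. First I would recall that, by Lemma \ref{lem:Ipmeetseveryxslice} and causal simplicity, for each $x_0$ with $|x_0| \ge a$ the slice $S_{x_0} := \partial J^+(\eta_u) \cap \{x = x_0\}$ is nonempty, and by Proposition \ref{prop:homeomS2} it is a connected (indeed $\approx S^2$) compact set. The key structural observation is that $\partial J^+(\eta_u)$ is achronal and generated by null geodesics which, being maximizing null rays, must by Lemma \ref{lem:inextmaxnullrunstoinfty} run off to $x = \pm\infty$; hence every point of $\partial J^+(\eta_u)$ lies on a null generator which meets some slice $\{x = x_0\}$, and in fact, tracing the generator, lies in the closure of the union of the slices over $|x_0| \ge a$ together with the part of $\partial J^+(\eta_u)$ lying over $A$.

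The main step is then to argue that the ``pieces'' of $\partial J^+(\eta_u)$ — the slices $S_{x_0}$ for $x_0 \le -a$, the slices $S_{x_0}$ for $x_0 \ge a$, and $\partial J^+(\eta_u) \cap A$ — are pairwise linked into one connected set. For the two families of slices this is immediate: a null generator of $\partial J^+(\eta_u)$ that passes through $M_1$ sweeps continuously through all slices $\{x = x_0\}$, $x_0 \le -a$ (it can be $x$-parametrized by Corollary \ref{cor:xparam}), so $\bigcup_{x_0 \le -a} S_{x_0}$ is connected as a continuous image; likewise $\bigcup_{x_0 \ge a} S_{x_0}$ is connected. Since the line $\eta_u$ itself has past end in $M_1$ and future end in $M_2$, it meets every slice on both sides, and $\eta_u \subseteq \partial J^+(\eta_u)$ (as $\eta_u$ is a null line, hence achronal and in its own boundary), so $\eta_u$ connects the $M_1$-slices, the $A$-part, and the $M_2$-slices. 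Thus $\partial J^+(\eta_u)$ is the union of a family of connected sets all containing points of the connected set $\eta_u$, whence connected; then $\overline{\partial J^+(\eta_u)} = \partial J^+(\eta_u)$ (it is closed) is connected too, so it has exactly one component.

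The part I expect to require the most care is the claim that $\partial J^+(\eta_u)$ is genuinely covered by these slices and the core — i.e.\ that there is no ``extra'' component of $\partial J^+(\eta_u)$ sitting entirely inside $A$ and disjoint from $\eta_u$. This is where one uses that every point $q \in \partial J^+(\eta_u)$ is reached by a past-directed maximizing null generator which is past inextendible within $\partial J^+(\eta_u)$; by Lemma \ref{lem:inextmaxnullrunstoinfty} (applied to the past, after time-dualizing, or directly since the generator is a null ray) this generator must leave $A$ and enter some $M_k$ and run to $|x| = \infty$, so $q$ is connected through its generator to a slice $S_{x_0}$ and hence to $\eta_u$. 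A clean way to package this is to note that $\partial J^+(\eta_u)$ is itself connected because it is the boundary of the closed (by causal simplicity), achronal set $J^+(\eta_u)$, whose generators all emanate to infinity — but making ``the boundary of such a set is connected'' precise is exactly the slice-linking argument above, so I would present it in that explicit form rather than quote a black box.
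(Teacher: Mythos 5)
Your proof is correct and takes essentially the same approach as the paper: the argument in your final paragraph (every $q\in\partial J^+(\eta_u)$ lies on a past null generator within $\partial J^+(\eta_u)$, which by the time-dual of Lemma~\ref{lem:inextmaxnullrunstoinfty} runs to $|x|=\infty$ and so meets some slice $\{x=x_0\}$; by Proposition~\ref{prop:homeomS2} that slice is connected and is also met by $\eta_u$) is exactly the paper's proof. The preliminary ``union of slices swept by generators'' scaffolding and the appeal to Corollary~\ref{cor:xparam} are unnecessary once that slice-linking argument is in place.
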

\begin{proof}
	Any point in $\partial J^+(\eta_{u})$ lies on a past inextendible achronal null geodesic $\gamma_p$ contained in $\partial J^+(\eta_{u})$. By the time dual of Lemma \ref{lem:inextmaxnullrunstoinfty} we know that $\gamma_p$ eventually enters $M_1$ or $M_2$ and hence meets $\{x=x_0\}$ for some $x_0\in (-\infty,-a]\cup[a,\infty )$. Now since $\eta_{u}$ meets every $\{x=\mathrm{const.}\}$ slice and $\{x=x_0\}\cap \partial J^+(\eta_{u})$ is connected, $p$ lies in the same connected component as $\eta_{u} $. Since this is true for every $p$, connectedness follows.
\end{proof}

\begin{Theorem}\label{thm:existenceofNullHyp}
	For any $u\in \R$ there exists a smooth closed achronal totally geodesic null hypersurface $N_{u}$ such that there exists a null geodesic generator $\eta$ with $u=\lim_{s\to-\infty} t(\eta(s))$. Further $\lim_{s\to-\infty} t(\eta(s))$ is independent of the choice of the null generator $\eta $ and determines $N_{u}$ uniquely. We have $N_u=\partial J^+(\eta )=\partial J^-(\eta )$.
\end{Theorem}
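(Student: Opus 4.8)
The plan is to take the achronal boundary $\partial J^+(\eta_u)$ of a null line $\eta_u$ constructed in Proposition~\ref{prop:existenceofnulllines} and show it has all the desired properties. First I would set $N_u := \partial J^+(\eta_u)$ and invoke Remark~\ref{rem:sametlimit2lines}: this is well-defined independently of the auxiliary choice $\omega_0 \in S^2$, depending only on $u = \lim_{s\to-\infty} t(\eta_u(s))$. By Corollary~\ref{cor:connected} it is connected, and by Proposition~\ref{prop:existenceofnulllines} (together with Lemma~\ref{lem:maxnulliscomplete}) $\eta_u$ itself is a complete null line lying in $\partial J^+(\eta_u)$, so it serves as a generator. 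Causal simplicity guarantees $J^+(\eta_u)$ is closed, so $\partial J^+(\eta_u)$ is a closed achronal $\mathcal{C}^0$ hypersurface ruled by past-inextendible null geodesics; by the time-dual of Lemma~\ref{lem:inextmaxnullrunstoinfty} each such generator runs to infinity, hence is past-complete (Lemma~\ref{lem:maxnulliscomplete}).

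The crux is smoothness and the totally geodesic property. Here I would appeal to \cite[Theorem~IV.1]{G00}: an achronal null hypersurface that is ruled by \emph{complete} null generators on which the null energy condition forces $\Ric(\dot\gamma,\dot\gamma) \equiv 0$ (no conjugate points, by Raychaudhuri) is smooth and totally geodesic. To apply it I must verify that \emph{every} generator of $\partial J^+(\eta_u)$ is a complete null line, not merely the distinguished $\eta_u$. Given any generator $\gamma_p$ through $p \in \partial J^+(\eta_u)$, extend it to be past-inextendible inside $\partial J^+(\eta_u)$; by the time-dual of Lemma~\ref{lem:inextmaxnullrunstoinfty} it enters some $M_k$ and has $|x|\to\infty$, so it meets every slice $\{x = x_0\}$, and crucially its past $t$-limit must again equal $u$ — otherwise Proposition~\ref{prop:sametlimit} (applied to $\eta_u$ and this ray) would force $\gamma_p$ into the interior $I^+(\eta_u)$ or into a disjoint achronal boundary, contradicting $\gamma_p \subseteq \partial J^+(\eta_u) = \partial J^+(\eta_u)$. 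Thus $\gamma_p$ is a past null ray with the same $t$-limit as $\eta_u$; Proposition~\ref{prop:sametlimit} then shows $\gamma_p \subseteq \partial J^+(\eta_u)$ is in fact a (bi-infinite) null line. Completeness follows from Lemma~\ref{lem:maxnulliscomplete}, Lemma~\ref{lem:angularvelocitiesfornulllines} supplies the bound on angular velocity that makes the curvature estimate work, and Raychaudhuri gives $\Ric(\dot\gamma_p,\dot\gamma_p) \equiv 0$. Hence \cite[Theorem~IV.1]{G00} applies and $N_u$ is smooth and totally geodesic.

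It remains to identify $N_u = \partial J^+(\eta_u)$ with $\partial J^-(\eta_u)$ and to record uniqueness. For the latter, if $\eta$ is any null generator of $N_u$ then by the argument just given $\lim_{s\to-\infty} t(\eta(s)) = u$, so the label $u$ is intrinsic; conversely Remark~\ref{rem:sametlimit2lines} shows the boundary is determined by $u$, giving the bijection $u \leftrightarrow N_u$. For the equality of the two boundaries, note that $\eta_u$ is a bi-infinite null line, so it is simultaneously a future null ray and a past null ray with past end in $M_1$ and future end in $M_2$; applying Proposition~\ref{prop:sametlimit} (and its time-dual) one gets $\eta_u \subseteq \partial J^+(\eta_u)$ and $\eta_u \subseteq \partial J^-(\eta_u)$, and since both are smooth totally geodesic null hypersurfaces sharing the complete generator $\eta_u$, they coincide — two totally geodesic null hypersurfaces through a common generator are equal, because a totally geodesic null hypersurface is swept out by the null geodesics emanating from any one of its generators in the orthogonal null direction, which is uniquely determined along that generator. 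The main obstacle, as indicated, is the step verifying that \emph{all} generators of $\partial J^+(\eta_u)$ share the $t$-limit $u$ and are therefore complete null lines; everything else is bookkeeping with results already in hand.
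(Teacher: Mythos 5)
Your overall route is the same as the paper's: take $N_u := \partial J^+(\eta_u)$ for a null line from Proposition~\ref{prop:existenceofnulllines}, apply the null splitting theorem \cite[Theorem~IV.1, Remark~IV.2]{G00}, and then use connectedness (Proposition~\ref{prop:homeomS2}, Corollary~\ref{cor:connected}), Proposition~\ref{prop:differenttimelimitmeansIp}, and Remark~\ref{rem:sametlimit2lines} to obtain uniqueness and $\partial J^+(\eta_u) = \partial J^-(\eta_u)$. However, you have misread what \cite[Remark~IV.2]{G00} requires, and the extra step you inserted as ``the crux'' is both unnecessary and has a gap. The relevant hypothesis (beyond the NEC) is only that the null generators of $\partial J^+(\eta_u)$ be \emph{past} complete and those of $\partial J^-(\eta_u)$ be \emph{future} complete. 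Since generators of achronal boundaries are past (resp.\ future) null rays, this one-sided completeness is immediate from Lemma~\ref{lem:maxnulliscomplete} --- a one-line observation, which is exactly what the paper does. It is \emph{not} a hypothesis that every generator be a bi-complete null line; that is a \emph{consequence} of the theorem (once $\partial J^+ = \partial J^-$, every generator is simultaneously a past ray of the first and a future ray of the second).

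Your attempt to establish the stronger statement beforehand also does not go through as written. You argue that any past-inextendible generator $\gamma_p$ of $\partial J^+(\eta_u)$ ``enters some $M_k$ and has $|x|\to\infty$,'' and then appeal to Propositions~\ref{prop:differenttimelimitmeansIp} and \ref{prop:sametlimit} to force the past $t$-limit to be $u$. Those propositions compare $t$-limits of curves whose past ends lie in $M_1$; you have not ruled out that $\gamma_p$'s past end runs off into $M_2$, in which case the quantity you compare to $u$ is a different asymptotic label and the contradiction does not follow directly. So there is a genuine gap in that step; fortunately, since the step is not needed for invoking \cite[Theorem~IV.1]{G00}, the repair is simply to drop it and cite Lemma~\ref{lem:maxnulliscomplete} as the paper does. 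Finally, for $\partial J^+(\eta_u) = \partial J^-(\eta_u)$ the paper reads this off from the conclusion of the null splitting theorem (which already identifies the two boundaries on the component through $\eta_u$) together with connectedness; your ``two totally geodesic null hypersurfaces through a common generator coincide'' argument is a correct local statement but still needs the connectedness input (and, more to the point, duplicates work the theorem has already done).
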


\begin{proof}
	Let $\eta_{u}$ be any of the null lines from Prop.~\ref{prop:existenceofnulllines}. Note that by Lemma \ref{lem:maxnulliscomplete} the null geodesic generators of $\partial J^+(\eta_{u})$ and $\partial J^-(\eta_{u})$ are complete, so we may apply the null splitting theorem \cite[Theorem~IV.1]{G00} to $\eta_{u}$ by \cite[Remark~IV.2]{G00}. This gives that the connected component of $\partial J^+(\eta_{u})$ containing $\eta_{u}$ is a smooth closed achronal totally geodesic null hypersurface which by construction contains a null geodesic generator $\eta$ with $u=\lim_{s\to-\infty} t(\eta(s))$. Now since $\partial J^+(\eta_{u})$ and $\partial J^-(\eta_u)$ are connected (the same arguments as in Prop.~\ref{prop:homeomS2} and Cor.~\ref{cor:connected} also give connectedness of $\partial J^-(\eta_u)$) the null splitting theorem further shows $\partial J^+(\eta_u)=\partial J^-(\eta_u)$. The remaining claims follow from Prop.~\ref{prop:differenttimelimitmeansIp} and Remark \ref{rem:sametlimit2lines}. Hence, $N_{u}:=\partial J^+(\eta_{u})$ is the null hypersurface we were looking for.
\end{proof}

\begin{Theorem}\label{thm:noholes}
For any $p\in M$ there exists a \emph{unique} $u_p\in \R$ such that $p\in N_{u_p}$.
\end{Theorem}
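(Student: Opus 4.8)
The plan is to show two things: that every $p \in M$ lies on \emph{some} $N_u$ (existence), and that this $u$ is unique (uniqueness). For uniqueness, suppose $p \in N_{u_1} \cap N_{u_2}$ with $u_1 \neq u_2$, say $u_1 > u_2$. Each $N_{u_i} = \partial J^+(\eta_{u_i})$ is generated by complete null lines with $t$-limit $u_i$ at the past end in $M_1$. I would pick null generators $\eta_1 \subseteq N_{u_1}$ and $\eta_2 \subseteq N_{u_2}$ through $p$ (or reaching $p$), both with past ends in $M_1$ by the time-dual of Lemma \ref{lem:inextmaxnullrunstoinfty}; since they are generators of a totally geodesic null hypersurface they are themselves null lines, hence past-complete with $x \to -\infty$. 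Then Proposition \ref{prop:differenttimelimitmeansIp} applied with the roles chosen so the larger $t$-limit is $\eta_1$ gives $\eta_1 \subseteq I^+(\eta_2)$, so $p \in I^+(\eta_2) \subseteq I^+(N_{u_2})$. But $p \in N_{u_2} = \partial J^+(\eta_{u_2})$ is achronal (it is a generator endpoint of the achronal set $\partial J^+(\eta_{u_2})$), contradicting $p \in I^+(\eta_{u_2})$. Hence $u_1 = u_2$.

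For existence, fix $p \in M$. The idea is to produce a null line through (or asymptotically determined by) $p$ and read off its past $t$-limit. Concretely, I would consider the hypersurfaces $N_u$ as $u$ ranges over $\R$ and use a continuity/sweeping argument: by Proposition \ref{prop:homeomS2} each $N_u$ meets each slice $\{x = x_0\}$ in a graph $\mathcal{T}_{u,x_0}$ over $S^2$, and by Proposition \ref{prop:differenttimelimitmeansIp} these graphs are ordered — if $u_1 > u_2$ then $N_{u_1} \subseteq I^+(N_{u_2})$, so $\mathcal{T}_{u_1, x_0} > \mathcal{T}_{u_2, x_0}$ pointwise on $S^2$. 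One then shows that for fixed $\omega_0$ and $x_0$, the map $u \mapsto \mathcal{T}_{u,x_0}(\omega_0)$ is continuous, strictly increasing, and surjective onto $\R$ (surjectivity using Remark \ref{rem:behaviouroff}/Lemma \ref{lem:S2inIpaftertimetau} to see the graphs escape to $t = \pm\infty$ as $u \to \pm\infty$ and that nothing is skipped — this is where the asymptotic estimates of Section 2 do the work). Given $p = (t_0, x_0, \omega_0) \in M_1 \cup M_2$, one picks the unique $u = u_p$ with $\mathcal{T}_{u_p, x_0}(\omega_0) = t_0$, so $p \in N_{u_p}$. For $p \in A$, one instead runs a limit-curve construction as in Lemma \ref{lem:forpinM1existsmaxnull}/Proposition \ref{prop:existenceofnulllines} to build a null line through $p$ with past end in $M_1$ and some $t$-limit $u_p$, then invokes Theorem \ref{thm:existenceofNullHyp} and Proposition \ref{prop:sametlimit} to identify it with $N_{u_p}$.

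The main obstacle I anticipate is the existence part for points $p$ deep inside $A$ (or more generally showing the sweep $\bigcup_u N_u$ has no gaps), since there the clean slice-graph picture of Proposition \ref{prop:homeomS2} is not directly available and one must instead manufacture a maximizing null line through $p$ and control its past $t$-limit — extending the limit-curve arguments of Proposition \ref{prop:existenceofnulllines} to anchor at an arbitrary $p$ rather than at the family $p_n = (u, -n, \omega_0)$. The key technical point will be to rule out that the $t$-limit "jumps," i.e.\ that the ordered family $\{N_u\}$ leaves no point uncovered; I expect this to follow from the fact that through any $p$ there is both a past null ray and (via $p \in \partial J^+(\eta_u)$ for the appropriate $u$, using closedness of $J^+$ from causal simplicity) a generator, squeezing $u_p$ between the limits coming from above and below. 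A clean alternative, if available, is to define $u_p := \lim_{s \to -\infty} t(\eta(s))$ for a past null ray $\eta$ from $p$ and show this is well-defined independent of $\eta$ using Proposition \ref{prop:sametlimit}, then verify $p \in N_{u_p}$ directly from Theorem \ref{thm:existenceofNullHyp}.
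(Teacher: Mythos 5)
Your uniqueness argument and your handling of points far out (the graph functions $\mathcal{T}_{u,x_0}$, monotonicity via Proposition~\ref{prop:differenttimelimitmeansIp}, and continuity/surjectivity of $u \mapsto \mathcal{T}_{u,x_0}(\omega_0)$) match the paper's strategy; the paper works with a fixed $t$-line $\sigma \subseteq M(r)$, defines $f(u) := t(N_u \cap \sigma)$, and proves $f$ strictly increasing, continuous (via a limit-curve argument plus the null splitting theorem), and onto. The genuine gap is exactly where you flag an obstacle: passing from coverage of $M(r)$ to coverage of all of $M$, in particular for $p \in A$. Your first idea --- run a limit-curve construction ``anchored at $p$'' to obtain a null \emph{line} through $p$ --- does not go through: such a construction produces only past (or future) null \emph{rays} from $p$, and gluing a maximizing past ray to a maximizing future ray at $p$ need not give a globally maximizing curve. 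Your ``clean alternative'' --- set $u_p := \lim_{s\to-\infty} t(\eta(s))$ for a past null ray $\eta$ from $p$ and ``verify $p \in N_{u_p}$ directly from Theorem~\ref{thm:existenceofNullHyp}'' --- also stalls at the verification step: Proposition~\ref{prop:sametlimit} only yields $\eta_{u_p} \subseteq \overline{I^+(\eta)}$ (null line in the closure of the future of the past ray), not the reverse inclusion you would need, because the angular-velocity bound of Lemma~\ref{lem:angularvelocitiesfornulllines} is available for genuine null lines but not for a bare past ray $\eta$.

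The paper's resolution, which is the missing ingredient, is a short reduction once $M(r) \subseteq \bigcup_u N_u$ is known. Take a past null ray $\gamma_p$ from $p$ (dual of Lemma~\ref{lem:forpinM1existsmaxnull}), so its tail $\gamma_p|_{(-\infty, s_0]}$ lies in $M_1(r)$, and set $u_p := \lim_{s\to-\infty} t(\gamma_p(s))$. Every tail point lies in some $N_u$, and $u \neq u_p$ is impossible: if $u > u_p$, Proposition~\ref{prop:differenttimelimitmeansIp} puts the generators of $N_u$ in $I^+(\gamma_p)$, so a tail point in $N_u$ would violate achronality of $\gamma_p$; if $u < u_p$, then $\gamma_p \subseteq I^+(N_u)$, so a tail point in $N_u$ would violate achronality of $N_u$. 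Hence the whole tail lies in $N_{u_p}$, and because $N_{u_p}$ is a totally geodesic null hypersurface with complete generators, the null geodesic $\gamma_p$ stays in $N_{u_p}$ up to $p$, giving $p \in N_{u_p}$. It is this two-step combination --- coverage of $M(r)$, then propagation along past rays using achronality and the totally geodesic property --- that your proposal does not supply.
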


\begin{proof}
	That $N_{u_1}\cap N_{u_2}=\emptyset $ for $u_1 \neq u_2$ is clear from Theorem~\ref{thm:existenceofNullHyp}. We need to show that for any $p\in M$ there exists $u_p$ such that $p\in N_{u_p}$. We start by showing that there exists $r>0$ such that $M(r)$ is covered by the union $\bigcup_{u\in \R} N_u$.
	
	Let $\sigma =\R \times \{x_0\} \times \{\omega_0\} \sse M(r)$ be any $t$-line in $M(r)$. We define a function 
$f : \R \to \R$ as follows: For each $u \in \R$ there is an associated totally geodesic null hypersurface $N_u$.  By Proposition \ref{prop:homeomS2}, $N_u$ meets $\sigma$ in a unique point; let $f(u)$ be the $t$-coordinate of that point.  Using Proposition \ref{prop:differenttimelimitmeansIp}, one sees (in order to avoid an achronality violation) that $f$ is strictly increasing.  

We will now argue that $f$ is continuous and onto. Fix an interval $[a,b]$, and let $u_0 \in (a,b)$. We have $f(a) < f(u_0) < f(b)$.  We first show that $f$ is continuous from the left, i.e.,
$\lim_{u \to u_0^-} f(u) = f(u_0)$.

To each $u < u_0$ we have an associated null hypersurface $N_u$, and hence an associated null geodesic generator $\eta_u$ determined by where $N_u$ meets $\s$.  Note, for $u_1 < u_2$, we have 
$\eta_{u_2} \subset I^+(\eta_{u_1})$.  By considering their intersection with $\s$ and noting that $f(a)<f(u)=t(\eta_u\cap \s)<f(b)$, we see that as $u \nearrow u_0$ the null lines $\eta_u$  accumulate to a unique null line $\eta$ passing through $\sigma$ at a $t$-coordinate $t=\sup\{f(u): u<u_0\}=\lim_{u\to u_0^-} f(u)$.  By the null splitting thorem, $\eta$ determines a totally geodesic null hypersurface $N_v$ for some $v$.  Then $\eta = \eta_v$ is the null geodesic generator of $N_v$ determined by where $N_v$ meets $\s$. Clearly we must have $v \ge u_0$, otherwise, by Proposition  \ref{prop:differenttimelimitmeansIp}, $\eta_u$ would lie to the future of $\eta_v$ for $u$ sufficiently close to $u_0$,  which would contradict $f(u)<f(v)$.   If it were the case that $v > u_0$ then $\eta_v$ would be in the timelike future of 
$\eta_{u_0}$.  But then, $f(v)>f(u_0)$, so by the convergence, $f(u)>f(u_0)$  for $u$ sufficiently close to $u_0$, contradicting monotonicity of $f$.  Hence, 
$v = u_0$, and we conclude that $\lim_{u \to u_0^-} f(u) = f(u_0)$.  

A similar argument shows $\lim_{u \to u_0^+} f(u) = f(u_0)$.  Thus for any $a < c < d < b$, $f$ is continuous on $[c, d]$, and, since increasing, onto $[f(c), f(d)]$.  Since $[a,b]$ is arbitrary, this is enough to imply the claim.

	Thus, we have shown that every $t$-line in $M(r)$ is covered by $\cup_{u \in \R} N_u$, so $M(r)\sse \cup_{u \in \R} N_u$. Let now $p\in M$ be arbitrary. By a dual argument to Lemma \ref{lem:forpinM1existsmaxnull} there exists a past inextendible maximizing ray $\gamma_p$ that is eventually contained in $M_1(r)\sse \cup_{u \in \R} N_u$. Now for any $s_0$ with $\gamma_p|_{(-\infty,s_0]}\sse M_1(r)$ either $\gamma_p|_{(-\infty,s_0]}\sse N_{u_p}$ for $u_p:=\lim_{s\to -\infty}t(\gamma_p(s))$, then $p\in N_{u_p}$ since $N_u$ is totally geodesic and we are done. Or there exists $u \neq u_p$ with $\gamma_p|_{(-\infty,s_0]}\cap N_{u} \neq \emptyset $.
	If $u>u_p$ this contradicts achronality of $\gamma_p$ because by Prop.~\ref{prop:differenttimelimitmeansIp} $N_{u}\sse I^+(\gamma_p)$. If $u <u_p$ this contradicts achronality of $N_{u}$ because $\gamma_p \sse I^+(N_{u})$.
\end{proof}

\begin{remark}\label{rem:disjoint union}
From this we get the following structure: For any $u\in \R$ the spacetime $M$ is the disjoint union of $I^+(N_u)$, $N_u$ and $I^-(N_u)$. Let $p\in M$, then $p\in N_{u_p}$ for some $u_p$. If $u_p=u$, then $p\in N_u$. If $u_p>u$, then $p\in I^+(N_u)$ by Prop.~\ref{prop:differenttimelimitmeansIp}. Finally, if $u_p<u$, then $I^-(N_u)\cap N_{u_p}\neq \emptyset$ (by the argument in the proof of Prop.~\ref{prop:differenttimelimitmeansIp}). Thus $N_{u_p}\sse I^-(N_{u_p})$ since $N_{u_p}\cap \partial I^-(N_u)=\emptyset $ (because $\partial I^-(N_u)=N_u$ by Theorem~\ref{thm:existenceofNullHyp} and $N_u\cap N_{u_p}=\emptyset$).
\end{remark}

\begin{Theorem}
	The null hypersurfaces $\{N_u:u\in \R \}$ form a continuous codimension one foliation of $M$.
\end{Theorem}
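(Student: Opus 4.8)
The plan is to produce, around every point of $M$, a foliation chart adapted to the family $\{N_u\}$ by straightening a natural ``$u$-coordinate'' along a transverse timelike congruence; essentially all the substance has been front-loaded into the construction of the $N_u$ and the causal ordering among them, and I expect the only delicate point to be upgrading the resulting continuous bijective chart map to a homeomorphism. First I would collect what is already known: by Theorem~\ref{thm:existenceofNullHyp} each $N_u$ is a smooth closed achronal embedded hypersurface, by Corollary~\ref{cor:connected} it is connected, and by Theorem~\ref{thm:noholes} the $N_u$ $(u\in\R)$ are pairwise disjoint and their union is $M$. Hence there is a well-defined surjection $\mathsf u\colon M\to\R$, $\mathsf u(p)=u_p$, characterized by $p\in N_{\mathsf u(p)}$. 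The first point to establish is that $\mathsf u$ is continuous; for this I would invoke Remark~\ref{rem:disjoint union}, which yields $\mathsf u^{-1}\big((c,\infty)\big)=I^+(N_c)$ and $\mathsf u^{-1}\big((-\infty,c)\big)=I^-(N_c)$ for every $c\in\R$. Since chronological sets are open and the open rays generate the topology of $\R$, continuity of $\mathsf u$ follows.

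Next I would set up a transverse congruence. Time-orientability provides a smooth future-directed timelike vector field $T$ on $M$; since the metric induced on any null hypersurface is positive semi-definite, $T$ is nowhere tangent to any $N_u$, i.e.\ $T$ is everywhere transverse to the family. Fix $p\in M$, put $u_0=\mathsf u(p)$, choose a relatively compact coordinate ball $W\sse N_{u_0}$ about $p$, and let $\Psi$ denote the (local) flow of $T$. For $\eps>0$ small the flow-box map $(q,s)\mapsto\Psi_s(q)$ is a diffeomorphism of $W\times(-\eps,\eps)$ onto an open neighbourhood $V\ni p$: it is a local diffeomorphism by transversality at $s=0$ (hence for small $\eps$ by compactness of $\overline W$), and it is injective because two distinct points of $W$ lying on a common flow line would be chronologically related, contradicting achronality of $N_{u_0}$ (flow lines of $T$ are timelike). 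Along each flow line $s\mapsto\Psi_s(q)$ the function $\mathsf u$ is continuous by the previous paragraph and, moreover, \emph{strictly increasing}: for $s'>s$ one has $\Psi_{s'}(q)\in I^+(\Psi_s(q))\sse I^+\big(N_{\mathsf u(\Psi_s(q))}\big)$, whence $\mathsf u(\Psi_{s'}(q))>\mathsf u(\Psi_s(q))$ by Remark~\ref{rem:disjoint union}. Thus $s\mapsto\mathsf u(\Psi_s(q))$ is a homeomorphism onto an open interval.

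Finally I would straighten $\mathsf u$ inside the flow-box. Define $\Phi\colon V\to N_{u_0}\times\R$ by $\Phi\big(\Psi_s(q)\big)=\big(q,\mathsf u(\Psi_s(q))\big)$; it is continuous (a composition of continuous maps) and injective (equal first coordinates force $q=q'$, and then equal second coordinates force $s=s'$ by strict monotonicity along the flow line through $q$). Since the source $V$ and target $N_{u_0}\times\R$ are both $4$-dimensional manifolds, invariance of domain shows $\Phi$ is an open map, hence a homeomorphism onto the open set $\Phi(V)\sse N_{u_0}\times\R$; composing with a coordinate chart of the smooth hypersurface $N_{u_0}$ near $p$ produces a homeomorphism $V\to\R^3\times\R$ under which $\mathsf u$ becomes the last coordinate. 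Consequently each $N_u\cap V$ is carried onto a union of slices $\R^3\times\{\mathrm{const}\}$, so these maps constitute a distinguished atlas; as the $N_u$ are connected and partition $M$, they are exactly the leaves, and therefore $\{N_u:u\in\R\}$ is a continuous (indeed, smooth-leaved) codimension-one foliation of $M$. The one step that genuinely requires care is this last straightening — it is precisely where invariance of domain enters, and where one uses that the leaves are honest smooth hypersurfaces rather than merely topological ones.
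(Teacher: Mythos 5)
Your proof is correct and shares the paper's backbone — construct a local chart sending $N_u$ to level sets of the last coordinate and invoke invariance of domain to upgrade a continuous bijection to a homeomorphism — but it diverges in two worthwhile ways. First, your continuity argument for the map $p\mapsto u_p$ is cleaner than the paper's: you read off from Remark~\ref{rem:disjoint union} that $\mathsf u^{-1}\big((c,\infty)\big)=I^+(N_c)$ and $\mathsf u^{-1}\big((-\infty,c)\big)=I^-(N_c)$, both of which are open, and you are done; the paper instead re-runs the limit-curve argument (sequences of null lines accumulating to a null line) already used in establishing Theorem~\ref{thm:noholes}. Your version makes the topological content of Remark~\ref{rem:disjoint union} do all the work and avoids redundant machinery. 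Second, where the paper takes an arbitrary local coordinate system with $\partial_t$ timelike and the chart $\psi(p)=(u_p,x_1(p),x_2(p),x_3(p))$, you build the transverse direction explicitly as a flow-box of a globally defined timelike field $T$ and then straighten $\mathsf u$ along the flow lines; the injectivity argument — no two points of $N_{u_0}$ on a common flow line, by achronality — is essentially the same as the paper's. Your construction has the minor advantage of making it manifest that the leaves are smooth hypersurfaces sitting inside the $C^0$ product chart $W\times(-\eps,\eps)$, whereas the paper leaves this implicit. One small thing to be careful about: Remark~\ref{rem:disjoint union} as printed contains a typo ($N_{u_p}\sse I^-(N_{u_p})$ should read $N_{u_p}\sse I^-(N_u)$); you have implicitly used the corrected statement, which is what the remark's argument actually proves.
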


\begin{proof}
	Let $(t, x_1,x_2,x_3)$ be coordinates on some open set  $U$ with $\partial_t$ timelike. We will show that $\psi:U\to \R^{4}$ defined by $\psi (p):=(u_p, x_1(p),x_2(p) ,x_3(p))$ is a continuous chart on $U$, for which clearly $\{p\in U: u_p=c\}=N_c \cap U$.
	Further, $p\mapsto u_p$ is continuous on $M$: Let $p_n\to p_0$, then the null lines $\eta_n\sse N_{u_{p_n}}$ corresponding to $p_n$ accumulate to a null line $\eta\sse N_{u_{p_0}}$ at $p_0$. From this continuity follows as in the previous proof.
Finally, $\psi $ is injective. Assume $\psi(p_1)=\psi (p_2)$, then $x_i(p_1)=x_i(p_2)$ for $i=1,2,3$ and it remains to show that $t(p_1)=t(p_2)$. If not, w.l.o.g. $t(p_1)>t(p_2)$ so, by $t$ being the time coordinate,  $p_1\in I^+(p_2)$ which contradicts $u_{p_1}=u_{p_2}$ by achronality of the $N_u$'s.
	From this invariance of domain implies that $\psi $ is a homeomorphism, i.e., a continuous chart.
\end{proof}

\subsection{Obtaining a foliation by totally geodesic round $2$-spheres}\label{sec:spherefoliation}

The same way one constructed the foliation $\{N_u\}_{u\in \R}$ one may obtain a second, transverse foliation with the same properties except that its null geodesic generators will be past instead of future directed. We denote this transverse foliation by $\{\hat{N}_v\}_{v\in \R}$. The idea is now to show that $S_{u,v}:=N_u\cap \hat{N}_v$ (if non-empty) are isometric 2-spheres and to use the asymptotics to argue that they must even be isometric to round 2-spheres.

We will first aim to  characterize the pairs $(u,v)$ for which $S_{u,v}\neq \emptyset$. To do so, let $\eta_u $ be a future directed null geodesic generator of $N_u$ and $\hat{\eta}_v$ be a past directed null geodesic generator for $\hat{N}_v$. Then we define $u_\infty :=\lim_{s\to \infty}t(\eta_u(s))$ and $v_\infty :=\lim_{s\to \infty}t(\hat{\eta}_v)$. These do not depend on the choice of $\eta_u,\hat{\eta}_v$ by an analogue of Prop. \ref{prop:sametlimit}.

\begin{Lemma} \label{lem:Suvnonempty} Let $(u,v)\in \R^2$. Then the following are equivalent: 
	\ben \item $S_{u,v}\neq \emptyset $, 
	\item $u<v$ and $u_\infty >v_\infty$,
	\item any (future directed) null geodesic generator of $N_u$ starts in $I^-(\hat{N}_v)$ and ends in $I^+(\hat{N}_v)$,
	\item any null geodesic generator of $N_u$ meets $S_{u,v}$ exactly once.
	\een
\end{Lemma}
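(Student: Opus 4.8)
The plan is to prove the chain of equivalences (i) $\Rightarrow$ (ii) $\Rightarrow$ (iii) $\Rightarrow$ (iv) $\Rightarrow$ (i). The key structural fact to exploit is Remark \ref{rem:disjoint union}: for each $v$, the spacetime is the disjoint union of $I^+(\hat N_v)$, $\hat N_v$, and $I^-(\hat N_v)$, and which piece a point lies in is controlled by comparing its ``$v$-label'' $\hat v_p$ with $v$ (together with the time-dual structure). I will freely use the time-dual statements: $\hat N_v$ meets every $\{x = x_0\}$ slice in a connected graph over $S^2$ (dual Prop.~\ref{prop:homeomS2}), and null geodesic generators of $N_u$ and $\hat N_v$ run off to infinity in the $x$-direction (Lemma~\ref{lem:inextmaxnullrunstoinfty} and its dual, plus Corollary~\ref{cor:xparam}).

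For (i) $\Rightarrow$ (ii): suppose $p \in S_{u,v} = N_u \cap \hat N_v$. Pick a future null geodesic generator $\eta_u$ of $N_u$ through $p$ and a past null geodesic generator $\hat\eta_v$ of $\hat N_v$ through $p$. Near their past ends $\eta_u$ has $t \to u$ and near its future end the dual of Prop.~\ref{prop:sametlimit} gives $t \to u_\infty$; similarly $\hat\eta_v$ has $t \to v$ at its future end and $t \to v_\infty$ at its past end. Since $\eta_u$ is achronal and passes through $p \in \hat N_v$, and $\hat N_v = \partial J^+(\hat\eta_v)$, the portion of $\eta_u$ strictly before $p$ must lie in $I^-(\hat N_v)$ while the portion strictly after $p$ lies in $I^+(\hat N_v)$ — this uses that an achronal causal curve crossing the achronal hypersurface $\hat N_v$ can cross it only once, and the disjoint-union structure of Remark~\ref{rem:disjoint union}. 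Feeding the past end of $\eta_u$ (which lies in $I^-(\hat N_v)$, with $x \to -\infty$, $t \to u$) into Proposition~\ref{prop:differenttimelimitmeansIp} compared against $\hat\eta_v$ (whose past end has $t \to v_\infty$, $x \to -\infty$) forces $u < v$; here one must be slightly careful and use the contrapositive of Prop.~\ref{prop:differenttimelimitmeansIp}, namely that being in $I^-(\hat N_v)$ near the asymptotic region forces the $t$-limit to be strictly smaller. Dually, the future end of $\eta_u$ lying in $I^+(\hat N_v)$ forces $u_\infty > v_\infty$.

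For (ii) $\Rightarrow$ (iii): take any future null geodesic generator $\eta$ of $N_u$, so $t(\eta(s)) \to u$ as $s \to -\infty$ (past end in $M_1$) and $t(\eta(s)) \to u_\infty$ as $s \to \infty$ (future end in $M_2$, by the dual of Prop.~\ref{prop:sametlimit}). Since $u < v$, Proposition~\ref{prop:differenttimelimitmeansIp} applied to the past ends of $\eta$ and of a past generator of $\hat N_v$ (both in $M_1$, both with $x \to -\infty$) shows that the past end of $\eta$ lies in $I^-(\hat\eta_v) \subseteq I^-(\hat N_v)$; by Remark~\ref{rem:disjoint union} (for the foliation $\{\hat N_v\}$) this means the past end of $\eta$ is in $I^-(\hat N_v)$. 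Dually, since $u_\infty > v_\infty$, the future end of $\eta$ lies in $I^+(\hat N_v)$. For (iii) $\Rightarrow$ (iv): by (iii), $\eta$ starts in $I^-(\hat N_v)$ and ends in $I^+(\hat N_v)$; continuity and the disjoint-union decomposition $M = I^+(\hat N_v) \sqcup \hat N_v \sqcup I^-(\hat N_v)$ force $\eta$ to meet $\hat N_v$, hence to meet $N_u \cap \hat N_v = S_{u,v}$; and it meets it exactly once because $\eta$ is achronal and $\hat N_v$ is achronal. Finally (iv) $\Rightarrow$ (i) is immediate: any generator meets $S_{u,v}$, in particular $S_{u,v} \neq \emptyset$.

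The main obstacle I expect is the careful bookkeeping in (i) $\Rightarrow$ (ii): one must pin down precisely the asymptotic $t$-behaviour of generators of $\hat N_v$ at \emph{both} ends (the ``$v_\infty$'' at the past end, the ``$v$'' at the future end, using the time-dual of the definitions of $u_\infty$ and of Prop.~\ref{prop:sametlimit}) and then invoke Proposition~\ref{prop:differenttimelimitmeansIp} with the correct orientation and the correct strict/non-strict inequality in each region. The rest of the argument is a fairly mechanical application of the disjoint-union structure from Remark~\ref{rem:disjoint union} together with achronality; the only point requiring a little care there is observing that an achronal causal curve can cross an achronal hypersurface at most once, so that ``exactly once'' in (iv) is genuinely automatic.
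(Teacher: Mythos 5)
Your proposal follows the same chain of implications $(i)\Rightarrow(ii)\Rightarrow(iii)\Rightarrow(iv)\Rightarrow(i)$ as the paper and uses the same ingredients (achronal crossing, the disjoint-union decomposition from Remark~\ref{rem:disjoint union}, and Proposition~\ref{prop:differenttimelimitmeansIp} at each asymptotic end), so the route is essentially identical. One small but real slip in your $(i)\Rightarrow(ii)$ step: since $\hat\eta_v$ is past directed, its $M_1$ end (the one with $x\to-\infty$) is its \emph{future} end, with $t\to v$; the past end of $\hat\eta_v$ lies in $M_2$ with $x\to+\infty$ and $t\to v_\infty$. You write ``whose past end has $t\to v_\infty$, $x\to-\infty$'' — this conflates the two ends, and if taken literally it would compare $u$ with $v_\infty$ rather than $v$. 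The surrounding sentence (``being in $I^-(\hat N_v)$ near the asymptotic region forces the $t$-limit to be strictly smaller'') is the right idea, but you should be explicit that the comparison in $M_1$ is between $u$ and $v$ (the two $t$-limits at $x\to-\infty$), and the comparison in $M_2$ is between $u_\infty$ and $v_\infty$ (the two $t$-limits at $x\to+\infty$).
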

\begin{proof}
	We begin by showing that $(i)$ implies $(ii)$. If $S_{u,v}\neq \emptyset$ then there exists a (future directed) null geodesic generator  $\eta_u$ of $N_u$ with $\eta_u\cap \hat{N}_v\neq \emptyset $. Since $\eta_u$ intersects $\hat{N}_v$ transversally and $M=I^+(\hat{N}_v)\cup \hat{N}_v \cup I^-(\hat{N}_v)$ (by the analogue of Remark \ref{rem:disjoint union}) it intersects $\hat{N}_v$ only once, say in $\eta_u(s_0)$. Because it is future directed we have $\eta_u|_{(-\infty,s_0)}\sse I^-(\hat{N}_v)$ and $\eta_u|_{(s_0,\infty)}\sse I^+(\hat{N}_v)$. So there exists $r$ large such that $\eta_u\cap M_1(r) \sse I^-(\hat{N}_v)$. Then $t(\eta_u \cap \{x=-r\})<t(\hat{\eta}_v\cap \{x=-r\})$ for an appropriately chosen (past directed) generator of $\hat{N}_v$. This gives $u<v$ because $t$ is decreasing along $\eta_u$ and increasing along $\hat{\eta}_v$ as $s\to -\infty $. An analogous argument in $M_2(r)$ shows $u_\infty >v_\infty $.
	
	Now, if $u<v$ and $u_\infty >v_\infty $ it immediately follows from (a slight variation of) Prop.~\ref{prop:differenttimelimitmeansIp} that any null geodesic generator of $N_u$ starts in $I^-(\hat{N}_v)$ and ends in $I^+(\hat{N}_v)$. This shows $(iii)$.
	
	If any null geodesic generator of $N_u$ starts in $I^-(\hat{N}_v)$ and ends in $I^+(\hat{N}_v)$ then it must intersect $\hat{N}_v=\partial I^+(\hat{N}_v)$ and hence $S_{u,v}$ at least once. Further, it can intersect $S_{u,v}$ at most once by the same argument as in the first paragraph. This shows $(iv)$.
	
	Finally, that $(iv)$ implies $(i)$ is obvious.
\end{proof}

\begin{Proposition}\label{prop:isometric}
	For any $(u,v)\in \R^2$ with $u<v$ and $u_\infty>v_\infty$ the set $S_{u,v}$ is a totally geodesic, spacelike codimension $2$ submanifold homeomorphic to $S^2$. Further for any two such pairs $u_1,v_1$ and $u_2,v_2$ the spheres $S_{u_1,v_1}$ and $S_{u_2,v_2}$ are isometric.
\end{Proposition}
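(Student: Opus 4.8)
The plan is to first establish the structural claims for a single sphere $S_{u,v}$, then establish the isometry between two such spheres. For the structural part, note that $S_{u,v}=N_u\cap\hat N_v$ is the transverse intersection of two smooth achronal totally geodesic null hypersurfaces (Theorem \ref{thm:existenceofNullHyp} and its time/orientation dual), hence is a smooth codimension $2$ submanifold; the transversality and the fact that $\partial_t$ is timelike while $N_u,\hat N_v$ are null forces the induced metric on $S_{u,v}$ to be positive definite, so $S_{u,v}$ is spacelike. To see $S_{u,v}$ is totally geodesic in $(M,g)$, I would use that its second fundamental form, as a submanifold of the totally geodesic hypersurface $N_u$, can be computed via the shape operator of $N_u$ (which vanishes) together with the way $S_{u,v}$ sits inside $N_u$ as a ``cross section'' cut out by $\hat N_v$; more cleanly, at a point $p\in S_{u,v}$ the tangent space decomposes using the two null generators $\dot\eta_u(p)\in T_pN_u$ and $\dot{\hat\eta}_v(p)\in T_p\hat N_v$, and any vector tangent to $S_{u,v}$ is $g$-orthogonal to both null directions; differentiating along $S_{u,v}$ and using that both $N_u$ and $\hat N_v$ are totally geodesic shows $\nabla_XY$ stays tangent to $S_{u,v}$ for $X,Y$ tangent to $S_{u,v}$. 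For the topology, by Lemma \ref{lem:Suvnonempty}(iv) every null generator of $N_u$ meets $S_{u,v}$ exactly once, so $S_{u,v}$ is a global cross section of the (trivial, by Proposition \ref{prop:homeomS2} and Corollary \ref{cor:connected}) $\R$-bundle $N_u\approx\R\times S^2$ of generators, whence $S_{u,v}\cong S^2$.

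For the isometry statement, the key mechanism is that the totally geodesic null hypersurfaces carry a well-defined flow: since $N_u$ is totally geodesic, the null generators have vanishing expansion and shear, so the ``null flow'' (flow along affinely parametrized generators, suitably normalized) acts by isometries between cross sections. Concretely, given $(u_1,v_1)$ and $(u_2,v_2)$ I would connect them in two steps. First, keeping $u$ fixed at $u_1$ and varying the $\hat N$-leaf from $v_1$ to $v_2$: the spheres $S_{u_1,v_1}$ and $S_{u_1,v_2}$ are both cross sections of $N_{u_1}$, and flowing along the generators of $N_{u_1}$ from one to the other is an isometry because $N_{u_1}$ is totally geodesic — the induced metric on cross sections is parallel along the (shear- and expansion-free) generators. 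This requires choosing the parametrization of generators consistently, which one can do because $S_{u_1,v}$ depends continuously (indeed smoothly) on $v$ via Lemma \ref{lem:Suvnonempty}. Second, symmetrically, fix $\hat N_{v_2}$ and flow along its generators from $S_{u_1,v_2}$ to $S_{u_2,v_2}$, again an isometry since $\hat N_{v_2}$ is totally geodesic. Composing yields an isometry $S_{u_1,v_1}\to S_{u_2,v_2}$, provided the relevant pairs are non-empty, which follows from Lemma \ref{lem:Suvnonempty} and a continuity/connectedness argument on the set of admissible $(u,v)$ (one may need to interpolate through intermediate leaves, but the set $\{(u,v):u<v,\ u_\infty>v_\infty\}$ is convex in the obvious sense, so this is harmless).

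The main obstacle I anticipate is making rigorous the claim that the null flow along a totally geodesic null hypersurface is an isometry between cross sections. The subtlety is that a null hypersurface has no canonical transverse direction and no canonically induced connection, so one must argue at the level of the degenerate induced metric: the generators span the degeneracy direction, and the statement is that the non-degenerate quotient metric (the metric on the space of generators, equivalently on any cross section pulled back) is invariant under moving the cross section along the flow. This is precisely the vanishing of the null second fundamental form / Weingarten map of $N_u$, i.e. the content of ``totally geodesic null hypersurface,'' but translating it into an honest isometry of Riemannian $2$-spheres requires care with the choice of generator parametrization and with smoothness of the foliation $\{S_{u_1,v}\}_v$ of $N_{u_1}$. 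Everything else — spacelikeness, totally geodesic, topology — is comparatively routine once the transverse-intersection setup is in place.
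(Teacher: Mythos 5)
Your proposal follows essentially the same route as the paper: transversality of the totally geodesic $N_u$ and $\hat N_v$ gives the smooth spacelike totally geodesic $S^2$, the generator flow of $N_u$ (which is an $\R$-action with a cross section at $\{x=-r\}$) gives the homeomorphism to $S^2$ via Proposition~\ref{prop:homeomS2}, and the isometries are produced exactly as you describe, by flowing along the shear/expansion-free generators of $N_u$ (resp.\ $\hat N_v$) and using a Lie-derivative computation with the vanishing null second fundamental form, then chaining finitely many such steps. The one technical point you flagged — rigorously turning ``totally geodesic null hypersurface'' into an isometry between cross sections — is handled in the paper by the short computation $\frac{d}{ds}g(X_i(s),X_j(s)) = g(\nabla_{X_i}n_u,X_j)+g(X_i,\nabla_{X_j}n_u)=0$ for $\mathcal{L}_{n_u}X_i=0$, which is precisely the mechanism you identified.
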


\begin{proof}
	That the intersection is a totally geodesic, (smooth) spacelike codimension $2$ submanifold follwos immediately from $N_u$ and $\hat{N}_v$ intersecting transversally and being totally geodesic.
	
	Let $n_u$ be a null vectorfield defining $N_u$ with $\nabla_{n_u}n_u=0$. Then its flow $\Phi^{n_u}:\R \times (N_u\cap \{x=-r\}) \to N_u$ is a diffeomorphism (for $r$ sufficiently large): By Lemma \ref{lem:inextmaxnullrunstoinfty} and Corollary \ref{cor:xparam} every integral curve of $n_u$ intersects $N_u\cap \{x=-r\}$ exactly once and clearly every point of $N_u$ lies on an integral curve. Since by Lemma \ref{lem:Suvnonempty} any integral curve also intesects $S_{u,v}$ exactly once we may rescale $n_u$ such that 
	$\Phi^{n_u}(1,.):\{x=-r\}\cap N_u \to S_{u,v}$ is a diffeomorphism. Thus $S_{u,v}$ is homeomorphic to $S^2$ by Prop. \ref{prop:homeomS2}.
	
	Next we show that $S_{u,v_1}$ is isometric to $S_{u,v_2}$ if both are non-empty. This follows by a fairly standard argument from the fact that $N_u$ is totally geodesic  (see e.g \cite[Appendix A]{CDGH}):
We rescale $n_{u}$ such that $\Phi_1^{n_u}\equiv \Phi^{n_{u}}(1,.):S_{u,v_1}\to S_{u,v_2}$ is a diffeomorphism. Let $X_1,X_2$ be a basis for $T_pS_{u,v_1}$. We need to show that $g(X_i,X_j)=g((\Phi^{n_u}_1)_*(X_i),(\Phi^{n_u}_1)_*(X_j))$. For $s\in (0,1]$ we set $X_i(s):=(\Phi^{n_u}_s)_*X_i$. Then a straightforward computation shows $\mathcal{L}_{n_u}X_i=0$. Setting $g_{ij}(s)=g(X_i(s),X_j(s))$ we have
	\begin{align} \frac{d}{ds}g_{ij}&=n_u(g(X_i,X_j))=g(\nabla_{n_u}X_i,X_j)+g(X_i,\nabla_{n_u}X_j)=\nonumber \\&=g(\nabla_{X_i}n_u,X_j)+g(X_i,\nabla_{X_j}n_u)=0, \end{align}
	because the null second fundamental form of $N_u$ vanishes since $N_u$ is totally geodesic. The same argument (only  using $\hat{N}_{v}$ instead of $N_u$) applies to 
	show that $S_{u_1,v}$ and $S_{u_2,v}$ are also isometric. Since one can see that any two (non-empty) spheres $S_{u_1,v_1}$ and $S_{u_2,v_2}$ can be 
	connected via finitely many steps of this form. 
\end{proof}

Now we will estimate the curvature of such spheres $S_{u,v}\sse M(r)$ for $r$ large.

\begin{Proposition}\label{prop:gaussCurvature}
	For any $\eps >0$ there exists $r$ (depending only on $\eps$) such that any (non-empty) $S_{u,v}\sse M(r)$ has Gauss curvature $1-\eps \leq K_{u,v} \leq 1+\eps$.
\end{Proposition}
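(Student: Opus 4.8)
The plan is to exploit the fact that $S_{u,v}$ is totally geodesic in $(M,g)$, so by the Gauss equation its intrinsic sectional curvature equals the sectional curvature of the ambient metric $g$ restricted to the tangent $2$-plane $T_pS_{u,v}$. Thus it suffices to show that for $r$ large, along any such sphere contained in $M(r)$, the ambient sectional curvature in the $S^2$-directions is within $\eps$ of $1$. Since $(\Mo,\go)$ has $\mathring R = \mathring R_{AdS_2}+\mathring R_{S^2}$, the curvature of the round factor gives sectional curvature exactly $1$ in the $d\Omega^2$-directions, and the estimate \eqref{eq:curvatureasymptotics} says $|\mathring R_{iklm}-R_{iklm}| \le C\cosh^{\#t}(x)/|x|$. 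So the whole argument reduces to: (1) showing that $T_pS_{u,v}$ is, up to an error that is $O(1/r)$, spanned by the $\partial_\theta,\partial_\phi$ coordinate directions of the chart in Remark \ref{rem:decayincoords}; and (2) combining this with the curvature estimate to control the relevant components of $R$.

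For step (1), the key input is Lemma \ref{lem:angularvelocitiesfornulllines} (and its time-dual) together with Corollary \ref{cor:xparam}: the null geodesic generators $\eta_u$ of $N_u$ and $\hat\eta_v$ of $\hat N_v$ passing through a point $p\in S_{u,v}\cap M(r)$ can be $x$-parametrized with $|\dot\eta^x|\le 1$, and then have angular velocity $|\dot{\bar\eta}|_{S^2} < \eps(r)$ with $\eps(r)\to 0$. By Lemma \ref{lem:newlemma} applied to $q_{ij}=h_{ij}$ one also controls the $t$-components. Since $T_pS_{u,v}$ is the $g$-orthogonal complement of $\operatorname{span}\{\dot\eta_u(p),\dot{\hat\eta}_v(p)\}$ inside $T_pM$ (the two null generators span the normal $2$-plane to $S_{u,v}$ because both $N_u$ and $\hat N_v$ contain $S_{u,v}$ and are null hypersurfaces whose generators are transverse), and since those two null vectors are each within $O(1/r)$ (measured in the chart, after normalizing $|\dot\eta^x|=1$) of the exact null generators $\partial_x \mp \tanh^{-1}$-type directions of $\go_\alpha$ — more precisely of the $\partial_x$-direction modulo a $t$-component and a small $S^2$-component — a short linear-algebra estimate shows that an $\go$-orthonormal (equivalently $d\Omega^2$-orthonormal) frame $\{\tilde e_2,\tilde e_3\}$ for the exact $S^2$ tangent space differs from an orthonormal frame for $T_pS_{u,v}$ by a matrix that is $I + O(1/r)$.

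For step (2), write an orthonormal basis $\{X_2,X_3\}$ of $T_pS_{u,v}$ as $X_a = \tilde e_a + \xi_a$ with $\xi_a = O(1/r)$ (in the chart norm, with the understanding that $t$-components of $\xi_a$ carry an extra $\cosh x$ as in \eqref{eq:estfornullvec}-type bookkeeping). Then
\[
K_{u,v} = R(X_2,X_3,X_3,X_2) = \mathring R(\tilde e_2,\tilde e_3,\tilde e_3,\tilde e_2) + \big(R-\mathring R\big)(\tilde e_2,\tilde e_3,\tilde e_3,\tilde e_2) + (\text{terms involving at least one } \xi_a).
\]
The first term is exactly $1$ (round sphere of radius $1$). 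The second term is bounded by $C/r$ using \eqref{eq:curvatureasymptotics}, since the $\tilde e_a$ are $S^2$-directions with no $t$-index and bounded chart components. For the cross terms: each $\xi_a$ is $O(1/r)$, and one must check that the curvature components paired against $\xi_a$ (which may now involve $x$- or $t$-indices) are bounded after the $\cosh x$ factors in \eqref{eq:curvatureasymptotics} are absorbed — this is exactly the bookkeeping of Lemma \ref{lem:newlemma}, which was set up for precisely this kind of contraction (a null-type vector has $|v^t|\lesssim 1/\cosh x$, killing the $\cosh x$ growth). So every cross term is $O(1/r)$ as well. Summing, $|K_{u,v}-1| \le C'/r$, and choosing $r$ with $C'/r < \eps$ finishes the proof.

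The main obstacle I anticipate is step (1): making rigorous and quantitative the claim that $T_pS_{u,v}$ is a small perturbation of the coordinate $S^2$-plane. One has to be careful that $S_{u,v}$ need not lie in a single $\{x = x_0\}$ slice — it is a graph-like sphere that can wander in $x$ as well as $t$ — and that the two null generators through $p$, while having small angular velocity, have large-in-$\cosh x$ $t$-components that must be tracked with the right weights throughout (this is why the estimates are stated with $\cosh^{\#t}(x)$ everywhere). Getting the perturbation bound on the tangent plane uniformly over all non-empty $(u,v)$ with $S_{u,v}\subset M(r)$, using only the already-established facts that generators are complete, $x$-monotone, and angularly slow, is where the real care is needed; once that is in hand, the curvature comparison is routine given \eqref{eq:curvatureasymptotics} and Lemma \ref{lem:newlemma}.
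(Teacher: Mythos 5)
Your proposal follows essentially the same route as the paper's: use the totally geodesic property to reduce to the ambient sectional curvature on $T_pS_{u,v}$, control the $t$- and $x$-components of tangent vectors to $S_{u,v}$ via orthogonality to the two null generators $n,\hat n$ (normalized to unit $x$-speed) together with the small-angular-velocity estimate from Lemma \ref{lem:angularvelocitiesfornulllines}, and then invoke the curvature asymptotics \eqref{eq:curvatureasymptotics}. The paper carries out your step (1) concretely by expanding the two relations $g(X,n-\hat n)=0$ and $g(X,n+\hat n)=0$ to get $|X^t|\le \eps/\cosh x$ and $|X^x|\le\eps$; your stated decay rate $O(1/r)$ for the tangent-plane perturbation should really be $O(1/\sqrt r)$ since that is what Lemma \ref{lem:angularvelocitiesfornulllines} yields, but this does not affect the conclusion.
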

\begin{proof}
	Let $n_u$ and $\hat{n}_v$ be null vector fields defining $N_u$ and $\hat{N}_v$, respectively. We assume that they are normalized to $n^x_u=\hat{n}^x_v=1$. To simplify notation, we will drop the indices $u,v$. Let $X\in TS$ be any vector tangent to $S$ with $|X^x|^2+|\bar{X}|_{S^2}^2\leq 1$ and hence in our charts $|X^\theta|\leq 1,|X^\phi|\leq 2$. We will first estimate $|X^t|$: Since $X\in TS$ we have $g(X,n-\hat{n})=0$ and $X$ is $g$-spacelike, so $\go_{\alpha_r}(X,X)>0$, i.e., $|X^t|\leq \frac{1}{\sqrt{\alpha_r}\cosh(x)}\leq \frac{2}{ \cosh(x)}$ for $r$ large. Thus, we estimate
	\begin{multline}
	|\go(X,n-\hat{n})-g_{S^2}(\bar{X},\bar{n}-\bar{\hat{n}})|\leq |h(X,n-\hat{n})|+g_{S^2}(\bar{X},\bar{n}-\bar{\hat{n}}) \leq \\ \leq |h_{ij}||X^i||n^j-\hat{n}^j| + 2\sin^2(\theta) |\bar{n}^\phi-\bar{\hat{n}}^\phi|+2|\bar{n}^\theta-\bar{\hat{n}}^\theta|\leq \\ \leq \frac{6C\cosh(x)}{r}|n^t-\hat{n}^t|+\frac{6C}{r} \sum_{j=1}^3 |n^j-\hat{n}^j| +2 \sin^2(\theta) |\bar{n}^\phi-\bar{\hat{n}}^\phi|+2|\bar{n}^\theta-\bar{\hat{n}}^\theta|= \\=\frac{6C\cosh(x)}{r}|n^t-\hat{n}^t|+\frac{6C}{r} \sum_{j\neq \theta, \phi} |n^j-\hat{n}^j| +2\sin^2(\theta) |\bar{n}^\phi-\bar{\hat{n}}^\phi|+2|\bar{n}^\theta-\bar{\hat{n}}^\theta|.
	\end{multline}
	
	Using Lemma \ref{lem:angularvelocitiesfornulllines} we see that for any $\eps $ there exists $r$ such that $|\bar{n}|^2_{S^2},|\bar{\hat{n}}|^2_{S^2}<\eps^2$, i.e., in our charts $|n^\phi|<2 \eps,|\hat{n}^\phi|<2 \eps$ and $|n^\theta |<\eps,|\hat{n}^\theta|<\eps $. By Lemma \ref{lem:betaalpha} we estimate $|n^t|^2,|\hat{n}^t|^2<\frac{1+\eps^2}{\alpha_r \cosh^2(x)}$. Thus
	\[|-\cosh^2(x)X^t(n^t-\hat{n}^t)|\leq \frac{6C}{r}\big(6\eps+2\sqrt{\frac{1+\eps^2}{\alpha_r}}\big)+12\eps .\]

	Because $n$ is future pointing and $\hat{n}$ is past pointing we see that $|n^t-\hat{n}^t|=n^t+|\hat{n}^t|\geq \frac{2}{\sqrt{\beta_r}\cosh(x)}$, so
	$|X^t|\leq \frac{1}{\cosh(x)} \big(\frac{c}{r}+c' \eps \big)$ for some $c,c'$. 
	Thus for any $\eps >0$ there exists $r$ such that \beq \label{eq:Xtest} |X^t|\leq \frac{1}{\cosh(x)} \eps. \eeq
	
	Next we derive a similar estimate for $|X^x|$: Proceeding as before but looking at $\go(X,n+\hat{n})$ yields
	\beq |-\cosh^2(x)X^t(n^t+\hat{n}^t)+2X^x|\leq \big(\frac{c}{r}+c' \eps \big) \eeq
	for some $c,c'$. We now need to estimate $|n^t+\hat{n}^t|$. Since $\frac{1}{\sqrt{\beta_r}\cosh(x)}\leq |n^t|,|\hat{n}^t| \leq \frac{1+\eps}{\sqrt{\alpha_r}\cosh(x)}$ we have $|n^t+\hat{n}^t|=|n^t-|\hat{n}^t||\leq \frac{1}{\cosh(x)} (\frac{1+\eps}{\sqrt{\alpha_r}}-\frac{1}{\sqrt{\beta_r}})\leq \frac{1}{\cosh(x)} \eps$ for $r$ large. Combining this  with \eqref{eq:Xtest} we see $|\cosh^2(x)X^t(n^t+\hat{n}^t)|<\eps^2$ and hence for any $\eps>0$ we can find $r$ such that also \beq \label{eq:Xxest} |X^x|\leq \eps. \eeq
	Note that these estimates also ensure that $S$ is $\go$-spacelike.
	
	We now use this to estimate $K$: Let $p\in S$, let $X,Y$ be a $\go $-orthogonal basis for $T_pS$ with $|X^x|^2+|\bar{X}|^2_{S^2}=1$ and denote the Riemann tensor for $(S,g|_S)$ by $R_S$. Then \beq K(X,Y)=\frac{g(R_S(X,Y)Y,X)}{g(X,X)g(Y,Y)-g(X,Y)^2}=\frac{g(R(X,Y)Y,X)}{g(X,X)g(Y,Y)-g(X,Y)^2}\eeq 
	because $S$ is totally geodesic in $M$. 
	
	We start by estimating $K(X,Y)-\mathring{K}(X,Y)$ (where 

	$\mathring{K}(X,Y)=\frac{\go(\mathring{R}(X,Y)Y,X)}{\go(X,X)	\go(Y,Y)}$). First note that by \eqref{eq:Xtest} 
	\beq |\go(X,X)-1|=|\go(X,X)-|X^x|-|\bar{X}|^2_{S^2}|=\cosh^2(x
	)|X^t|^2<\eps\eeq and the same for $|\go(Y,Y)-1|$, i.e., $X$ and $Y$ are close to being $\go$-othonormal. From 
	$|\mathring{R}_{ijkl}-R_{ijkl}|\leq \frac{C\cosh^{\#t}(x)}{r}$ (see Remark \ref{rem:decayincoords}) and using \eqref{eq:Xtest},\eqref{eq:Xxest} for $X,Y$ and $|X^i|,|Y^i|\leq 2$ for $i=\theta,\phi$ (which follows from our choice of charts and  $\bar{X},\bar{Y}$ having unit $g_{S^2}$-norm) we see that \beq |\go(\mathring{R}(X,Y)Y,X)-g(R(X,Y)Y,X)|\leq \frac{c}{r} \eeq for some $c>0$. Similarly, using  $|h_{ij}|\leq \frac{C\cosh^{\#t}(x)}{r}$, we get  $|g(X,Y)|=|g(X,Y)-\go(X,Y)|<\frac{c}{r}$ and \beq 1+\frac{c}{r}+\eps \leq |g(X,X)|,|g(Y,Y)|\leq 1-\frac{c}{r}-\eps \eeq (note that $\go(X,X),\go(Y,Y)\in (1-\eps, 1+\eps)$). Putting these estimates together shows that indeed for any $\eps$ there exists $r$ such that 
	\[|K(X,Y)-\mathring{K}(X,Y)|\leq \eps\]
	as long as $S\sse M_1(r)\cup M_2(r)$.
	
	To estimate $\mathring{K}(X,Y)$ note that because $\go=g_{AdS_2}+g_{S^2}$, $K_{S^2}=1$ and \eqref{eq:Xtest},\eqref{eq:Xxest} we have \begin{multline}
	 |\go(\mathring{R}(\bar{X},\bar{Y})\bar{Y},\bar{X})-1|=|\go(\bar{X},\bar{X})\go(\bar{Y},\bar{Y})-\go(\bar{X},\bar{Y})^2-1|\leq \\ \leq |\go(X,X)\go(Y,Y)-\go(X,Y)^2-1|+c\eps. \end{multline}
	 So for any $\eps >0 $ we can find $r$ such that
	 \beq |\go(\mathring{R}(\bar{X},\bar{Y})\bar{Y},\bar{X})-1|<\eps. \eeq
	Finally,
	\beq 
	|\mathring{K}(X,Y)-1|= \big|\go(\mathring{R}(X,Y)Y,X)-1\big| \leq  \big|\go(\mathring{R}(\bar{X},\bar{Y})\bar{Y},\bar{X})-1\big| +c\eps < (c+1)\eps  
\eeq 
	and we are done.
\end{proof}

\begin{Theorem}\label{thm:round2spheres}
	The family $\{S_{u,v}\}_{(u,v)\in Q}$, where $Q =\{(u,v)\in \R^2 : u<v\;\mathrm{and}\; u_\infty >v_\infty \}$,
gives a continuous foliation of $M$ by totally geodesic round $2$-spheres.
\end{Theorem}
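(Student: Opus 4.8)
The plan is to establish the three assertions --- that each $S_{u,v}$ with $(u,v)\in Q$ is a round $2$-sphere, that the family partitions $M$, and that the decomposition is a continuous foliation --- in that order, the first being where the asymptotics genuinely enter and the other two being largely formal. I would first dispose of the partition property. If $p\in S_{u_1,v_1}\cap S_{u_2,v_2}$ then $p$ lies in $N_{u_1}\cap N_{u_2}$ and in $\hat N_{v_1}\cap\hat N_{v_2}$, so $u_1=u_2$ and $v_1=v_2$ because both families foliate $M$ (Theorem \ref{thm:noholes} and its time-dual); moreover each $S_{u,v}$ with $(u,v)\in Q$ is a single $2$-sphere rather than a union of components, by Proposition \ref{prop:isometric}. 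For covering, any $p\in M$ lies in $N_{u_p}\cap\hat N_{v_p}=S_{u_p,v_p}$, which is then non-empty, so $(u_p,v_p)\in Q$ by Lemma \ref{lem:Suvnonempty}. Thus $\{S_{u,v}\}_{(u,v)\in Q}$ is a partition of $M$ into smooth, totally geodesic (Proposition \ref{prop:isometric}), spacelike $2$-spheres, all mutually isometric.

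The heart of the proof is to upgrade mutual isometry to roundness of radius $1$. Since all the spheres are isometric, it suffices by Proposition \ref{prop:gaussCurvature} to produce, for each $r$, a non-empty $S_{u,v}$ contained in $M(r)$. To do this I would use that inside $M_1(r)$ the achronal hypersurface $N_u$ is a graph $t=T_u(x,\omega)$ over $\{x\le -r\}\times S^2$ and $\hat N_v$ a graph $t=\hat T_v(x,\omega)$ (Proposition \ref{prop:homeomS2}). Since every null generator of $N_u$ has past $t$-limit equal to $u$ (Theorem \ref{thm:existenceofNullHyp}), and since by Lemma \ref{lem:betaalpha} together with Lemma \ref{lem:nulllinseforgbeta} any $g$-null curve in $M_1(r)$ has slope pinched by $\tfrac{1}{\sqrt{\beta_a}\cosh x}<|dt/dx|<\tfrac{1}{\sqrt{\alpha_a}\cosh x}$ (with the $x$-monotonicity of such generators coming from Lemma \ref{lem:noturningback} and Corollary \ref{cor:xparam}), integration yields $\underline\Phi(x)\le T_u(x,\omega)-u\le\overline\Phi(x)$ uniformly in $\omega$, where $\underline\Phi(x)=\tfrac{1}{\sqrt{\beta_a}}\big(\tfrac\pi2+\arctan(\sinh x)\big)$ and $\overline\Phi(x)=\tfrac{1}{\sqrt{\alpha_a}}\big(\tfrac\pi2+\arctan(\sinh x)\big)$ both tend to $0$ as $x\to-\infty$; the same two-sided bound holds for $v-\hat T_v(x,\omega)$ (using the analogues of Lemma \ref{lem:noturningback} and Corollary \ref{cor:xparam}). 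Consequently a point $q_0$ with $x(q_0)$ very negative lies on $S_{u,v}$, $u=u_{q_0}$, $v=v_{q_0}$, with $\delta:=v-u$ satisfying $2\underline\Phi(x(q_0))\le\delta\le 2\overline\Phi(x(q_0))$; and for any $q\in S_{u,v}$ with $x(q)\le -a$ one has $2\underline\Phi(x(q))\le\big(T_u-u\big)\big|_q+\big(v-\hat T_v\big)\big|_q=\delta$, i.e.\ $x(q)\le\underline\Phi^{-1}(\delta/2)$. Since $S_{u,v}$ is connected and contains $q_0$, the intermediate value theorem then forces $S_{u,v}\subseteq M_1\big(|\underline\Phi^{-1}(\delta/2)|\big)$ as soon as $\underline\Phi^{-1}(\delta/2)<-a$, and $|\underline\Phi^{-1}(\delta/2)|\to\infty$ as $x(q_0)\to-\infty$. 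Feeding this into Proposition \ref{prop:gaussCurvature} and using mutual isometry, the common isometry class of the spheres has Gauss curvature in $[1-\eps,1+\eps]$ for every $\eps>0$, hence identically $1$; so each $S_{u,v}$, $(u,v)\in Q$, is a round $2$-sphere of radius $1$.

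It remains to see that the decomposition is a continuous foliation. The map $p\mapsto u_p$ is continuous on $M$ (proved just before this subsection) and $p\mapsto v_p$ is continuous by the dual construction, so $p\mapsto(u_p,v_p)$ is continuous with level sets exactly the leaves. Around any $p_0$ I would complement $(u_p,v_p)$ by two ambient coordinates transverse to the spacelike leaves --- a pair cutting out a $2$-surface that meets every nearby sphere in a single point --- and set $\psi(p)=(u_p,v_p,y^1(p),y^2(p))$; $\psi$ is injective since equal $(u_p,v_p)$ places $p_1,p_2$ on a common sphere where the $y$'s separate them, and invariance of domain then makes $\psi$ a continuous foliation chart, exactly as in the proof that $\{N_u\}$ is a continuous foliation. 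The step I expect to be the main obstacle is the production of spheres $S_{u,v}\subseteq M(r)$ with $r\to\infty$: this is precisely where the decay conditions must be invoked, through the uniform control of the graphing functions $T_u,\hat T_v$ near $x=-\infty$ and the connectedness argument pinning down the $x$-extent of $S_{u,v}$; the partition statement is formal given Theorem \ref{thm:noholes} and its dual, and continuity of the foliation reduces to the already-established continuity of $p\mapsto u_p$.
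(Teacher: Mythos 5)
Your proposal follows the same overall strategy as the paper: reduce to showing that for every $r$ there is a non-empty $S_{u,v}\subseteq M(r)$, then invoke Proposition \ref{prop:gaussCurvature} and the mutual isometry of Proposition \ref{prop:isometric} to conclude $K\equiv 1$, and finally establish continuity. Your packaging is slightly different — you pick a far-out point $q_0$ and use the sphere $S_{u_{q_0},v_{q_0}}$ through it, whereas the paper fixes $v$ and takes $u_0(v,r)\nearrow v$ — but these are two parametrizations of the same idea. For continuity you build a chart $\psi(p)=(u_p,v_p,y^1,y^2)$; the paper instead shows $p\mapsto T_pS_{u_p,v_p}$ is a continuous distribution and appeals to Frobenius. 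Both routes are fine.

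There is, however, a genuine (if repairable) gap in the step you flagged as the crux. You assert that \emph{any} $g$-null curve in $M_1(r)$ has slope pinched as $\tfrac{1}{\sqrt{\beta_a}\cosh x}<|dt/dx|<\tfrac{1}{\sqrt{\alpha_a}\cosh x}$, deducing this from $\go_{\alpha}\prec g\prec\go_{\beta}$. The lower bound is correct: $g$-null implies $\go_\beta$-timelike, so $\beta\cosh^2 x\,|\dot\gamma^t|^2>|\dot\gamma^x|^2+|\dot{\bar\gamma}|^2_{S^2}\geq|\dot\gamma^x|^2$. But the upper bound does \emph{not} follow in the same way: $g$-null only forces $\go_\alpha$-spacelike, giving $\alpha\cosh^2 x\,|\dot\gamma^t|^2<|\dot\gamma^x|^2+|\dot{\bar\gamma}|^2_{S^2}$, so a large $S^2$-velocity could make $|dt/dx|$ arbitrarily bigger than $\frac{1}{\sqrt{\alpha}\cosh x}$. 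Your upper bound $T_u-u\leq\overline\Phi$ — which you need, because without it you cannot conclude that $\delta=v_{q_0}-u_{q_0}$ is small — is therefore not justified as written. The fix is to invoke Lemma \ref{lem:angularvelocitiesfornulllines}: along the generators of $N_u$ (null \emph{lines}) parametrized by $x$, the angular velocity satisfies $|\dot{\bar\gamma}|_{S^2}<\eps$ on $M_1(r(\eps))$, which restores the pinch up to a harmless factor $\sqrt{1+\eps^2}$. The paper avoids this subtlety altogether by routing the estimate through Lemma \ref{lem:S2inIpaftertimetau} (which controls the $S^2$-excursion globally via the small time $\tau_r$) followed by Lemma \ref{lem:betaalpha} and the $\go_{\beta_r}$ null-curve formula of Lemma \ref{lem:nulllinseforgbeta}; this gives an upper bound with a $\tau_r$-correction rather than the $\eps$-correction above. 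Once Lemma \ref{lem:angularvelocitiesfornulllines} is cited for the upper slope bound, your argument goes through.
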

\begin{proof}
We start by	showing that $S_{u,v}$ is isometric to $S^2$. Let $r_0$ be large enough for Lemma~\ref{lem:noturningback} to apply. We will show that for any $(u,v)\in Q$ and $r>a$ there exists $u_0\equiv u_0(v,r)$ such that $[u,u_0]\times \{v\}\sse Q$ and $S_{u_0,v}\cap M_1(r_0)\sse M_1(r)$.  For any past null generator $\eta_u :(-\infty,x_p] \to M$ of $N_u$ starting in a point $p=(t_p,x_p,\omega_p)\in M_1(r_0)$ with $|x_p|=r$  we have $\eta_u(s)\in J^-((t_p+\tau_r,x_p, \omega(\eta_u(s)))$ by Lemma~\ref{lem:S2inIpaftertimetau}. So, since such a generator must be contained in $M_1(r)$, we get $\eta_u(s)\in I^-_{\go_{\beta_r}}((t_p+\tau_r,x_p, \omega(\eta_u(s)))$ by Lemma~\ref{lem:betaalpha}, and hence by Lemma~\ref{lem:nulllinseforgbeta} \beq t(\eta_u(s))<\frac{2}{\sqrt{\beta_r}}(\tan^{-1}(e^s)-\tan^{-1}(e^{-r}))+t_p+\tau_r \eeq
	if $\eta_u$ is parametrized with respect to the $x$-coordinate. Letting $s\to -\infty $ we get \beq t_p\geq u+\frac{2}{\sqrt{\beta_r}} \tan^{-1}(e^{-r})+\tau_r. \eeq
	A similar argument applied to $\hat{\eta}_v:(-\infty,x_p]\to M$, using that $\hat{\eta}_v(s)\in J^+((t_p-\tau_r,x_p, \omega(\hat{\eta}_v(s)))$, shows
	\beq t_p\leq v-\frac{2}{\sqrt{\beta_r}}\tan^{-1}(e^{-r})-\tau_r. \eeq
	So if $p\in S_{u,v}$, then \beq \frac{4}{\sqrt{\beta_r}}\tan^{-1}(e^{-r})+2\tau_r\leq v-u. \eeq
	Hence by choosing $u_0(v,r)<v$ as close to $v$ as necessary it follows that $S_{u_0,v}\cap M_1(r_0)\sse M_1(r)$. That $[u,u_0]\times \{v\}\sse Q$ is clear from $u\mapsto u_\infty $ being increasing, so $\bar{u}_\infty >u_\infty >v_\infty$ for all $\bar{u}>u$.
	
	Now connectedness of $S_{u_0(r),v}$ implies that even $S_{u_0(r),v}\sse M_1(r)$ for any $r>r_0$. Then by Prop.~\ref{prop:gaussCurvature} the Gauss curvature $K_{u_0(r),v}\to 1$ uniformly on $S_{u_0(r),v}$ as $r\to \infty$. But because all the $S_{u_0(r),v}$ are isometric to $S_{u,v}$, their Gauss curvatures (in corresponding points) have to be equal, so $K_{u,v}=1$. Together with $S_{u,v}$ being homeomorphic to $S^2$ this shows that $S_{u,v}$ is isometric to the round $2$-sphere.
	
	It remains to show that $\{S_{u,v}\}_{(u,v)\in Q }$ is a continuous foliation. This follows from the Frobenius theorem if we can show that $p\mapsto T_pS_{u_p,v_p}$ is a continuous distribution (it clearly is integrable, because it consists of tangent spaces to (smooth) submanifolds). Since $T_pS_{u_p,v_p}=\mathrm{span}\{n_p,\hat{n}_p\}^\perp $, where $n_p$ and $\hat{n}_p$ denote the future pointing null tangents (normed w.r.t.~some Riemannian background metric) to $N_{u_p}$ and $\hat{N}_{v_p}$ in $p$, it is sufficient to show continuity of $p\mapsto n_p$ and $p\mapsto \hat{n}_p$. Let $p_k\to p_0$ and let $\eta_n$ be the unique null geodesic generators of $N_{u_{p_k}}$ with $\dot{\eta}_k(0)=n_{p_k}$. Then the $\eta_k$ accumulate to a null line $\eta_0 $ passing through $p_0$ with $\eta \sse N_{u_{p_0}}$. Hence $n_{p_k}=\dot{\eta}_k(0)\to \dot{\eta}_0(0)=n_{p_0}$.
	\end{proof}

\section{Asymptotically $AdS_2\times S^2$ spacetimes with parallel Ricci tensor}
\label{sec:parallel}

In this section we will use the assumption of $\nabla \Ric =0$ to first obtain a general local splitting result, see Thm.~\ref{thm:localsplitting}, and finally a full rigidity result, see Thm.~\ref{thm:globalsplitting}. For $k>0$ we denote by $AdS_2(k)$ and $dS_2(k)$ two dimensional anti-de Sitter space with scalar curvature $-2k$ and  two dimensional de Sitter space with scalar curvature $2k$, respectively. Similarly $S^2(k)$ and $H^2(k)$ denote the two dimensional sphere with scalar curvature $2k$ and two dimensional hyperbolic space with scalar curvature $-2k$.

\begin{Theorem}\label{thm:localsplitting} Let $(M,g)$ be a (four dimensional, connected) spacetime with $\nabla \Ric=0$. If $R=0$ and $\Ric $ is non-degenerate, then there exists $k>0$ such that any $p\in M$ has a neighbourhood $U$ that is isometric to an open subset $V$ of $AdS_2(k)\times S^2(k)$ or $dS_2(k)\times H^2(k)$. 
\end{Theorem}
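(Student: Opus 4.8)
The plan is to use the structure theory of manifolds with parallel Ricci tensor. Since $\nabla \Ric = 0$, the Ricci endomorphism $\mathrm{Ric}^\sharp: TM \to TM$ (obtained by raising an index) is parallel, hence its eigenvalues are constant on the connected manifold $M$ and the corresponding eigendistributions are parallel. Because $\Ric$ is assumed non-degenerate and $R = 0$, the sum of the eigenvalues (weighted by multiplicity) vanishes, so the eigenvalues cannot all have the same sign; since $\Ric$ is symmetric with respect to a Lorentzian metric, the eigenvalue analysis must be done carefully (the endomorphism need not be diagonalizable over $\R$ a priori), but I expect to argue that in the relevant case it is diagonalizable with a timelike and a spacelike eigenspace. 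Concretely, I would show that there is a splitting $TM = V_1 \oplus V_2$ into parallel, mutually orthogonal, non-degenerate distributions, where $V_1$ is a timelike $2$-plane distribution (signature $(-,+)$) on which $\Ric = -k\, g|_{V_1}$ and $V_2$ is a spacelike $2$-plane distribution on which $\Ric = k\, g|_{V_2}$ for a single constant $k$ (the condition $R = 0$ forcing the two eigenvalues to be negatives of one another, and a sign normalization; if instead the Lorentzian signature sat on the other factor one gets the $dS_2 \times H^2$ case).

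Next, I would invoke the local de Rham–Wu decomposition theorem for (pseudo-)Riemannian manifolds: since $V_1$ and $V_2$ are parallel, complementary, non-degenerate distributions, every $p \in M$ has a neighbourhood $U$ isometric to a product $U_1 \times U_2$, where $(U_1, g_1)$ is a Lorentzian surface with $\Ric_{g_1} = -k\, g_1$ and $(U_2, g_2)$ is a Riemannian surface with $\Ric_{g_2} = k\, g_2$. In dimension two the Einstein condition $\Ric = \lambda g$ is equivalent to constant Gauss curvature $\lambda$ (here scalar curvature $2\lambda$), so $(U_1,g_1)$ has constant curvature $-k < 0$ and $(U_2,g_2)$ has constant curvature $k > 0$. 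By the classification of constant-curvature surfaces, $(U_1,g_1)$ is locally isometric to $AdS_2(k)$ and $(U_2,g_2)$ is locally isometric to $S^2(k)$; shrinking $U$ if necessary, we get that $U$ is isometric to an open subset $V$ of $AdS_2(k) \times S^2(k)$. If the timelike eigenspace had been the one with eigenvalue $+k$ instead, the same argument yields an open subset of $dS_2(k) \times H^2(k)$.

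The main obstacle I anticipate is the linear-algebra step: a symmetric bilinear form on a Lorentzian vector space need not be diagonalizable over $\R$ — the Ricci endomorphism could in principle have a complex-conjugate pair of eigenvalues or a non-trivial Jordan block (the "four Segre types"). I would rule out the bad types using $\nabla \Ric = 0$ together with $R = 0$ and non-degeneracy: a non-trivial Jordan block forces a parallel null line, which combined with the trace-zero and non-degeneracy constraints in dimension four leads to a contradiction (or at least to a degenerate $\Ric$); similarly a genuinely complex pair is incompatible with the real eigenvalue structure forced by $R=0$ and non-degeneracy. Once one shows $\Ric^\sharp$ is diagonalizable with real eigenvalues summing to zero, non-degeneracy forces exactly two distinct eigenvalues $\pm k$ with multiplicity two each (a priori one could worry about eigenvalues $\{a, -a, b, -b\}$, but the parallel eigendistributions then give a finer product splitting which is still locally a product of constant curvature surfaces — however, the requirement that one factor be Lorentzian and the overall signature be $(-,+,+,+)$ pins the timelike eigenspace down to be $2$-dimensional, forcing $b = -a$ after relabelling, i.e. eigenvalues $\{-k,-k,k,k\}$). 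The rest is the standard de Rham–Wu splitting plus two-dimensional Einstein geometry, both of which are routine.
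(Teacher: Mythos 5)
Your overall plan --- extract a pair of parallel, orthogonal, non-degenerate rank-$2$ distributions from the parallel Ricci tensor, apply the local de Rham--Wu decomposition, and use that $2$-dimensional factors are automatically Einstein --- matches the paper's architecture, and both arguments converge on Wu's theorem, the $\dim L=\dim P=2$ count, and $\lambda_L+\lambda_P=0$ from $R=0$. The difference is in how the non-degenerate parallel decomposition is secured: the paper invokes \cite[Lemma~3.1]{Seno08}, which gives the clean dichotomy that either the holonomy is non-degenerately reducible or there exists a covariantly constant null vector field $X$, and the latter is immediately excluded since $\nabla X=0$ forces $R(\cdot,\cdot)X=0$, hence $\Ric(X,\cdot)=0$, contradicting non-degeneracy.

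Your proposed route to that same step has a genuine gap, precisely at the place you flag as the ``main obstacle.'' The claim that a Jordan block ``forces a parallel null line, which \dots\ leads to a contradiction (or at least to a degenerate $\Ric$)'' does not go through as stated. A parallel rank-one null distribution need not be spanned by a parallel null vector field: one only gets $\nabla_Y v=\omega(Y)v$ for some $1$-form $\omega$, and making $\omega$ exact requires $R(\cdot,\cdot)v=0$, which is exactly what one is trying to establish. Moreover, a parallel null line $\mathrm{span}(v)$ with $\Ric^\sharp v=\lambda v$, $\lambda\neq 0$, does \emph{not} make $\Ric$ degenerate, since $\Ric(v,\cdot)=\lambda\,g(v,\cdot)\neq 0$. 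Likewise, nothing in $R=0$ plus non-degeneracy forbids a complex-conjugate eigenvalue pair at the pointwise linear-algebra level; such pairs genuinely occur for $g$-self-adjoint operators in Lorentzian signature. The missing idea is to split over \emph{generalized} eigenspaces rather than eigenspaces: for a parallel, $g$-self-adjoint $\Ric^\sharp$ these are parallel and mutually $g$-orthogonal, hence each non-degenerate, so Wu applies to that decomposition \emph{before} any Segre-type classification. The pathological cases are then excluded by the resulting product geometry, not by linear algebra: a $1$-dimensional factor is Ricci-flat (giving a zero eigenvalue and contradicting non-degeneracy), and a $2$-dimensional factor has scalar $\Ric^\sharp$ because every $2$-manifold is Einstein, which simultaneously rules out Jordan blocks and complex pairs there. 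This also repairs your argument against eigenvalues $\{a,-a,b,-b\}$ --- the real obstruction is not a signature count but that four distinct eigenvalues would force four $1$-dimensional, hence Ricci-flat, factors.
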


\begin{proof}
	First note that $\Ric $ cannot be proportional to the metric because $R=0$ but $\Ric\neq 0$ because it is non-degenerate. So \cite[Lemma~3.1]{Seno08} applies showing that for any open simply connected domain $(D,g)\sse (M,g)$ either the holonomy group is non-degenerately reducible or there exists
a covariantly constant null vector field $X$. But by the definition of $\Ric $ one clearly has $\Ric(X,Y)=0$ for any vector field $Y$ if $\nabla X=0$. So the existence of a covariantly constant vector field contradicts the  non-degeneracy of $\Ric$. Hence the holonomy group of $(D,g)$ is non-degenerately reducible. 
	
Now \cite[Prop.~3]{Wu} 
gives that any point $p$ in $M$ has a neighbourhood $U$ that is isometric to a direct product, say $U=L\times P$, where $L$ is Lorentzian and $P$ is Riemannian. First note that $\Ric_L$ and $\Ric_P$ are  non-degenerate (as bilinear forms  on $TL\times TL$, respectively $TP\times TP$): By the direct product structure $\Ric(X,Y)=0$ for $X\in TL$ and $Y\in TP$ so if $\Ric_L$ or $\Ric_P$ were degenerate, then so would be $\Ric $. Thus, $\dim(L)>1$ and $\dim(P)>1$, so the only possibility is $\dim(L)=2=\dim(P)$.
Neither $L$ nor $P$  splits, since any further splitting would give a one-dimensional factor contradicting non-degeneracy of $\Ric$. So both $\Ric_L$ and $\Ric_P$ have to be proportional to the respective metrics on $L$ and $P$, i.e.,  $\Ric_L=\lambda_L g_L$ and $\Ric_P=\lambda_P g_P$ with $\lambda_L+\lambda_P=0$. Setting $k:= |\lambda_P|=|\lambda_L|$, non-degeneracy of $\Ric$ implies $k>0$. So we have shown that for any $p\in M$ there exists a $k$ and a neighbourhood $U$ that is isometric to an open subset $V$ of $AdS_2(k)\times S^2(k)$ (if $\lambda_P>0$) or $dS_2(k) \times H^2(k)$ (if $\lambda_P<0$). Clearly $\lambda_P$, and thus $k$, is unique and locally constant, hence constant.
\end{proof}

\begin{remark}\label{rem:localsplitting}
	It is actually sufficient to assume that there exists a point $p_0$ such that $\Ric_{p_0}$ is non-degenerate and a sequence $p_n$ such that  $R_{p_n}\to 0$. This is obvious from the fact that $\nabla \Ric =0$ implies $\nabla R=0$, so $R=\mathrm{const.}$, and that if $\Ric_{p_0}(X_{p_0},.)=0$ then $\Ric(X,.)=0$ for any $X$ that is the parallel transport of $X_{p_0}$ along any curve.
\end{remark}

If $(M,g)$ is asymptotically $AdS_2\times S^2$, then $\lambda_P=1$ and the structure obtained in the previous section is consistent with this local product structure.

\begin{Corollary}\label{cor:localsplitting}
	Let $(M,g)$ be asymptotically $AdS_2\times S^2$ (in the sense of Def.~\ref{def:assumptions}) and assume that the null energy condition holds and that $\nabla \Ric =0$. Then any $p\in M$ has a neighbourhood $U$ that is isometric to an open subset $V\equiv L\times P$ of $AdS_2\times S^2$ (with metric $\go$). Further, the tangent space $T_qL$ is spanned by the vectors $n_q,\hat{n}_q$ and $T_qP=T_qS_{u_q,v_q}$ for all $q\in V$. 
\end{Corollary}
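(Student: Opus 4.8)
\noindent\textit{Proof strategy.} The plan is to invoke Theorem~\ref{thm:localsplitting} via Remark~\ref{rem:localsplitting}, to pin down the two constants and the signature using the curvature asymptotics of Remark~\ref{rem:decayincoords}, and finally to identify the two parallel factor distributions with the foliations constructed in Section~\ref{sec:main}.

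First I would verify the hypotheses of Theorem~\ref{thm:localsplitting} in the weakened form of Remark~\ref{rem:localsplitting}. In a $\go$-orthonormal frame $\mathring{\Ric}=\mathrm{diag}(1,-1,1,1)$, so $\mathring{\Ric}$ is non-degenerate; by the estimate $|\mathring{\Ric}_{ij}-\Ric_{ij}|\le C\cosh^{\#t}(x)/|x|$ the tensor $\Ric$ is then non-degenerate at every point with $|x|$ large, and $|R|=|\mathring R-R|\le C/|x|\to0$ along any sequence escaping to infinity, so (as $\nabla R=0$ forces $R$ constant) $R\equiv0$. Theorem~\ref{thm:localsplitting} then provides, for each $p$, a neighbourhood $U\cong L\times P$ with $\Ric_L=\lambda_L g_L$, $\Ric_P=\lambda_P g_P$, $\lambda_L=-\lambda_P$, $k:=|\lambda_P|>0$. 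To pin down $k$ and the case, I would observe that the endomorphism $\Ric^\sharp$ (defined by $g(\Ric^\sharp X,Y)=\Ric(X,Y)$) is parallel, so the coefficients of its characteristic polynomial are constant on the connected manifold $M$; comparing far out with $\mathring{\Ric}^\sharp$, whose eigenvalues are $-1,-1,1,1$, shows that this polynomial equals $(\lambda-1)^2(\lambda+1)^2$ everywhere. On $U$ the eigenvalues of $\Ric^\sharp$ are $\lambda_L$ and $\lambda_P$, each doubled, so $\{\lambda_L,\lambda_P\}=\{-1,1\}$ and $k=1$; and since the distribution $\mathcal L:=\ker(\Ric^\sharp+\mathrm{id})$ is, far out, $C/|x|$-close to the Lorentzian plane $\mathrm{span}(\partial_t,\partial_x)$ (spectral gap $2$), its induced metric has constant Lorentzian signature, so it cannot be the Riemannian factor $TP$; hence $\lambda_P=1$. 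Consequently $U$ is isometric to an open $V\subset AdS_2(1)\times S^2(1)=AdS_2\times S^2$, the parallel distributions $\mathcal P:=\ker(\Ric^\sharp-\mathrm{id})$ (spacelike) and $\mathcal L=\mathcal P^\perp$ (Lorentzian) are globally defined, and on $U$ they restrict to $TP$ and $TL$.

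It then remains to show $\mathcal P_q=T_qS_{u_q,v_q}$ for $q\in U$, where $u_q,v_q$ are the labels with $q\in N_{u_q}\cap\hat N_{v_q}=:S_{u_q,v_q}=:\Sigma$ (a totally geodesic round $2$-sphere through $q$, by Proposition~\ref{prop:isometric} and Theorem~\ref{thm:round2spheres}). For a loop $c\subset\Sigma$ based at $q$, let $P_c\in O(T_qM)$ be the ambient parallel transport around $c$: since $\Sigma$ is totally geodesic, $P_c$ preserves $T_q\Sigma$ and acts there as the intrinsic holonomy, i.e.\ rotation through the enclosed area $\mathcal A(c)$; since $\mathcal P,\mathcal L$ are parallel, $P_c$ also preserves $\mathcal P_q$ and $\mathcal L_q$, and $P_c|_{\mathcal L_q}\in O(1,1)$, every element of which has real eigenvalues. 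Choosing a small $c$ with $\mathcal A(c)\notin\pi\mathbb{Z}$, $P_c|_{T_q\Sigma}$ has neither a real eigenvalue nor an invariant line; so $\dim(\mathcal P_q\cap T_q\Sigma)$ can be neither $1$ (that would be a $P_c$-invariant line) nor $0$ (then $T_q\Sigma$ is a graph over $\mathcal L_q$ and $P_c|_{T_q\Sigma}$ would be conjugate to $P_c|_{\mathcal L_q}\in O(1,1)$), hence $T_q\Sigma=\mathcal P_q$ and $\mathcal L_q=T_q\Sigma^\perp$. Finally, $n_q$ and $\hat n_q$, being the null directions of the degenerate hyperplanes $T_qN_{u_q}\supseteq T_q\Sigma$ and $T_q\hat N_{v_q}\supseteq T_q\Sigma$, are orthogonal to $T_q\Sigma$ and hence lie in $\mathcal L_q$; they are linearly independent (a common null direction would be a null vector tangent to the spacelike $\Sigma$), so $\mathrm{span}\{n_q,\hat n_q\}=\mathcal L_q$. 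As $\mathcal P|_U=TP$ and $\mathcal L|_U=TL$, this gives the statement.

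The curvature bookkeeping in the first two steps is routine. The crux is the last step: ruling out that the totally geodesic round spheres sit ``tilted'' relative to the parallel splitting $TM=\mathcal L\oplus\mathcal P$. The holonomy argument above does this using only that $\Sigma$ is totally geodesic with positive curvature, and I expect the main care to be needed there --- in particular in not quietly assuming any a priori relation between $\Sigma$ and the leaves of $\mathcal P$. A more computational alternative is that along a null generator $\eta$ of $N_u$ the projection $\dot\eta_{\mathcal P}$ is parallel, so $g(\dot\eta_{\mathcal P},\dot\eta_{\mathcal P})$ is a constant which --- were it positive --- would make $\dot\eta_{\mathcal L}$ timelike and force the local $AdS_2$-projection of $\eta$ to be a timelike geodesic, whose $x$-coordinate stays bounded, contradicting Lemma~\ref{lem:inextmaxnullrunstoinfty}; this too isolates the non-tilting as the key point.
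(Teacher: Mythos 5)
Your proof is correct, but it takes a genuinely different route from the paper's in the two substantive steps. First, you pin down $\lambda_L=-1,\lambda_P=1$ early via the constancy of the characteristic polynomial of the parallel endomorphism $\Ric^\sharp$ (matched against the spectrum $\{-1,-1,1,1\}$ of $\mathring\Ric^\sharp$ at infinity, using the estimate \eqref{eq:curvatureasymptotics}, together with the signature argument for $\ker(\Ric^\sharp+\mathrm{id})$). The paper instead leaves $k$ undetermined until the very end, and deduces $\lambda_P=1$ only after identifying $T_qP$ with $T_qS_{u_q,v_q}$ and invoking Theorem~\ref{thm:round2spheres}. Second --- and this is the crux --- to identify the parallel factors with the foliation, the paper argues directly with the NEC: since $n$ is tangent to a complete null line, $\Ric(n,n)=0$ (Raychaudhuri, as in Lemma~\ref{lem:angularvelocitiesfornulllines}); combined with $g(n,n)=0$ and $\Ric=\lambda_L g_L+\lambda_P g_P$ this forces $g_P(n,n)=0$, i.e.\ $n,\hat n\in TL$, and then $T_qS_{u_q,v_q}=\mathrm{span}\{n_q,\hat n_q\}^\perp=T_qP$. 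That is a two-line computation. You instead prove $T_qS_{u_q,v_q}=\mathcal P_q$ by a holonomy argument: parallel transport around a small loop in the totally geodesic round sphere $\Sigma$ preserves both $T_q\Sigma$ (acting as a rotation by the enclosed area, with no real eigenvalues for a generic small loop) and the parallel distributions $\mathcal L_q,\mathcal P_q$ (acting as a boost on the Lorentzian $\mathcal L_q$, with real eigenvalues), and the incompatibility rules out every tilted position of $T_q\Sigma$ relative to $\mathcal L_q\oplus\mathcal P_q$. Your argument is valid but longer; its interest is that it does not invoke $\Ric(n,n)=0$ at this stage, relying only on the fact that the leaves $S_{u,v}$ are totally geodesic with nonzero curvature. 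The ``computational alternative'' you sketch at the end is essentially an attempt to reconstruct the paper's argument through causality, but as stated (``the local $AdS_2$-projection of $\eta$ ... whose $x$-coordinate stays bounded'') it conflates the purely local product chart with the global coordinate $x$ on $M_1\cup M_2$; the clean version is the paper's: combine $g(n,n)=0$ with $\Ric(n,n)=0$ to get $g_P(n,n)=0$. Both your main argument and the paper's establish the corollary; the paper's is shorter, yours isolates more explicitly that the geometric content is the non-tilting of the round spheres with respect to the parallel splitting.
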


\begin{proof}
	Clearly $R_{p_n}\to 0$ as $x(p_n) \to \infty$ by the asymptotics (\ref{eq:curvatureasymptotics}). Also, there must exist a point $p$ where $\Ric_p$ is non-degenerate: Else we can find a sequence $p_n\in M_2$ with $x(p_n)\to \infty $ and vectors $X_n\in T_{p_n}M$ with $\Ric(X_n,.)=0$. We may assume that these $X_n$ are normed to $\cosh^2(x(p_n))|X_n^t|^2+|X_n^x|^2+|\bar{X}_n|^2_{S^2}=1$, so setting $Y_n:=X^t_n \partial_t-X^x_n \partial_x+\bar{X}_n$ we have $\mathring{\Ric}(X_n,Y_n)=1$ and $|\Ric (X_n,Y_n)-\mathring{\Ric}(X_n,Y_n)|\leq \frac{C}{|x(p_n)|}$. This contradicts $\Ric(X_n,.)=0$ for large enough $x(p_n)$.
	
	Thus, by Remark \ref{rem:localsplitting}, we can apply Theorem~\ref{thm:localsplitting}, to get $U\cong L\times P$. We have that $n,\hat{n}\sse TL$: If not, then $0=g_L(n,n)+g_P(n,n)$ and $g_P(n,n)\neq 0$, so $-g_L(n,n)=g_P(n,n)>0$ because $g_P$ is Riemannian. So $\Ric(n,n)=-k g_L(n,n)+kg_P(n,n)\neq 0$, contradicting $\Ric(n,n)=0$ (which follows from the NEC and $n,\hat{n}$ being tangent to null lines). Thus $T_qL$ is spanned by $n_q,\hat{n}_q$ and $T_qS_{u_q,v_q}=\mathrm{span}\{n_q,\hat{n}_q\}^\perp =T_qP$. Finally, because $S_{u,v}$ is isometric to the round $2$-sphere by Theorem~\ref{thm:round2spheres}, we must have $\lambda_P=1$, so $L\times P\sse AdS_2\times S^2$.
\end{proof}

Finally, the fact that the spheres $S_{u,v}$ are isometric to $S^2$ and hence geodesically complete allows us to globalize this splitting:

\begin{Theorem}\label{thm:globalsplitting}
	Let $(M,g)$ be asymptotically $AdS_2\times S^2$ (in the sense of Def.~\ref{def:assumptions}) and assume that the null energy condition holds and that $\nabla \Ric =0$.
 Then $M$ is isometric to $AdS_2\times S^2$.
\end{Theorem}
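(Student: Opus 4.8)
The plan is to promote the \emph{local} product structure of Corollary~\ref{cor:localsplitting} to a \emph{global} isometric product, in three stages: extract a global parallel splitting of $TM$; globalize it to a metric product $M\cong B\times S^2$; and identify the two-dimensional Lorentzian factor $B$ with the complete $AdS_2$. By Corollary~\ref{cor:localsplitting}, near every point $M$ is isometric to an open subset of $(AdS_2\times S^2,\go)$ in a way that carries the distribution $\mathcal V$ tangent to the spheres $S_{u,v}$ onto the $S^2$-factor and its orthogonal complement $\mathcal H=\mathrm{span}\{n,\hat n\}$ onto the $AdS_2$-factor; hence $\mathcal V$ and $\mathcal H$ are globally defined smooth complementary distributions, each integrable and parallel (totally geodesic), with leaves the round unit $2$-spheres $S_{u,v}$ of Theorem~\ref{thm:round2spheres} and certain $2$-dimensional totally geodesic Lorentzian surfaces, respectively. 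I also note $M$ is simply connected: by the continuous foliation theorem of Section~\ref{sec:main}, $M$ is topologically $\R\times N_{u_0}$, and $N_{u_0}\cong\R\times S^2$ since, by Lemma~\ref{lem:Suvnonempty}(iv), each null generator of $N_{u_0}$ meets a fixed $S_{u_0,v_0}\approx S^2$ exactly once; as $\R$ is contractible, $\pi_1(M)=1$.

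Second, I globalize to a product. Because the leaves of $\mathcal V$ are \emph{compact} and $\mathcal H$ is a parallel complement, the $\mathcal V$-foliation exhibits $M$ as an $S^2$-bundle over the (Hausdorff) leaf space $B:=M/\mathcal V$, a smooth $2$-manifold, with $\mathcal H$ acting as a flat Ehresmann connection whose holonomy consists of isometries of the round unit $S^2$. Since $M$ and the fibre are simply connected, $B$ is simply connected, so this holonomy is trivial, the bundle is trivial, and $M\cong B\times S^2$ smoothly. Moreover the standard totally-geodesic-foliation computation already used in the proof of Proposition~\ref{prop:isometric} shows that flowing along the $\mathcal H$-leaves identifies the $\mathcal V$-leaves \emph{isometrically}, so the metric splits as $g=g_B\oplus g_{S^2}$ with $g_{S^2}$ the round unit metric. (Equivalently one may cite a Wu-type global de~Rham decomposition, the compactness of the spherical leaves supplying the completeness $M$ is not assumed to have.)

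Third, I identify $B$. By Corollary~\ref{cor:localsplitting}, $(B,g_B)$ is a simply connected Lorentzian surface \emph{locally isometric to} the exact $AdS_2$ factor $(\R^2,-\cosh^2x\,dt^2+dx^2)$ of $\go$, so a developing (analytic continuation) argument yields a local isometry $D\colon B\to AdS_2$. It carries the two null foliations of $B$ (the $\eta_u$'s and $\hat\eta_v$'s) to those of $AdS_2$, intertwining the parameters $u,v$ because $D$ is asymptotically the identity where $g\to\go$; since the $(u,v)$-coordinates are injective on both $B$ and $AdS_2$, $D$ is injective. For surjectivity, note that the complete null lines $\eta_u,\hat\eta_v$ lie in $\mathcal H$-leaves, hence are complete geodesics of $B$ (as $\mathcal H$-leaves are totally geodesic in $M$ and these lines are complete by Lemma~\ref{lem:maxnulliscomplete}); so $D$ extends along them to complete null geodesics of the two null families of $AdS_2$, and the Section~2 decay forces $t\to u$ at one end and $t\to u+\pi$ at the other, i.e.\ $u_\infty=u+\pi$ (and the dual relation for $v$). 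Thus the admissible region $Q=\{u<v,\ u_\infty>v_\infty\}$ onto which $(u,v)$ maps $M$ coincides with the corresponding region of $AdS_2$, so $D(B)=AdS_2$; hence $g_B$ is the $AdS_2$ metric and $M\cong AdS_2\times S^2$.

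I expect the main obstacle to be the third stage: ruling out that $B$ is merely a \emph{proper} open subset of $AdS_2$, with no a priori global completeness of $M$ to lean on. What makes it work is the combination of completeness of the null lines (Lemma~\ref{lem:maxnulliscomplete}) with the fact, from Proposition~\ref{prop:existenceofnulllines} and its time/space-dual, that their asymptotic $t$-limits exhaust all of $\R$, together with the Section~2 decay estimates pinning these limits to the exact $AdS_2$ values $u$ and $u+\pi$; it is this last point that forces the admissible $(u,v)$-region, and hence $B$, to be all of $AdS_2$ rather than a shrunken copy. Stage two, while routine in outline, also leans essentially on the compactness of the spherical leaves to replace the missing global completeness.
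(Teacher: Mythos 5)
Your Stages 1 and 2 track the paper's proof in spirit: from Corollary~\ref{cor:localsplitting} one obtains two smooth, complementary, totally geodesic foliations, one notes the spherical leaves are complete and $M$ is simply connected, and one invokes a global de~Rham--type splitting (the paper cites \cite[Cor.~2]{PR}) to get $M\cong L\times P$ with $P$ the round $S^2$. Your Ehresmann-connection/bundle argument is a reasonable alternative route to the same product decomposition, though less economical than directly citing the splitting theorem, and you are right that compactness of the spherical leaves is what substitutes for the missing global completeness of $M$.

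The genuine gap is in Stage 3. From the split you have a simply connected two-dimensional Lorentzian factor $B$ of constant curvature $-1$, hence a developing local isometry $D\colon B\to AdS_2$; but $D$ is a priori neither injective nor surjective, and your appeal to the asymptotics does not pin it down. The decay \eqref{eq:decaydef} gives only $|h|\lesssim 1/|x|$, which does not force $u_\infty-u=\pi$: the $t$-increment along a generator could a priori differ from $\pi$ by a convergent but nonzero integral, and the constant-curvature constraint does not obviously annihilate that correction in the coordinates of Definition~\ref{def:assumptions}. The claim that ``$D$ is asymptotically the identity'' is likewise unsupported --- $D$ is determined only up to post-composition with an isometry of $AdS_2$, and matching it to the asymptotic chart is essentially what one needs to prove; in effect $u_\infty=u+\pi$ is a \emph{consequence} of the theorem rather than an admissible input to its proof. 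The paper's Stage 3 sidesteps all of this: the Lorentzian factor $L$ is null geodesically complete, since its null geodesics are exactly the generators of the $N_u$'s and $\hat N_v$'s, which are complete by Lemma~\ref{lem:maxnulliscomplete}; and any non-empty, open, null geodesically complete subset $U\subseteq AdS_2$ equals $AdS_2$, because for $p\notin U$ every null geodesic through $p$ must stay in $AdS_2\setminus U$ (otherwise a maximal sub-segment lying in $U$ would be an incomplete inextendible null geodesic of $U$), and any two points of $AdS_2$ can be joined by a chain of null geodesic segments. That argument needs no coordinate matching, no injectivity discussion, and no evaluation of $u_\infty$; you should substitute it for your Stage 3.
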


\begin{proof} From the local splitting in Cor.~\ref{cor:localsplitting} we see that the foliation $F:=\{S_{u,v}\}_{(u,v)\in Q}$ from Theorem~\ref{thm:round2spheres} must be smooth. Further, the distribution $q\mapsto \mathrm{span}\{n_q,\hat{n}_q\}= T_qS_{u_q,v_q}^\perp \sse T_qM$ must be smooth as well and hence by the Frobenius theorem give rise to a smooth foliation $K$ with leaves perpendicular to the leaves of $F$. Also,  from the local product structure we immediately see that both of these foliations are totally geodesic (i.e., their leaves are totally geodesic).

 For $F$ we know even more: Note that the leaves are exactly the spheres $S_{u,v}$ which are totally geodesic submanifolds isometric to $(S^2,d\Omega^2)$ by Theorem \ref{thm:round2spheres} and hence even geodesically complete. Finally, note the $M$ is simply connected because it is homeomorphic to $\R^2\times S^2$ (for any $x_0<-a$ the flow $\Phi^n:\R \times (\R\times \{x=x_0\}\times S^2)\to M$ of $n$ is a homeomorphism). 

So we may apply \cite[Cor.~2]{PR} to obtain that $M$ is globally isometric to a product $L\times P$ such that $K$ and $F$ correspond to the canonical foliations of the product $L\times P$. Since $P$ is a leaf of $F$, we see that $P=(S^2,d\Omega^2)$. And since $L$ is a leaf of $K$ it must be isometric to a non-empty open subset $U$ of $(AdS_2,g_{AdS_2})$. Further $L$ is null geodesically complete because the only null geodesics in $Q$ are null geodesic generators of the achronal null hypersurfaces $N_u$ and $\hat{N}_v$, hence complete by Lemma \ref{lem:maxnulliscomplete}. So all that remains is to show that any null geodesically complete non-empty open subset $U$ of $AdS_2$ must already be all of $AdS_2$: For any $p\in AdS_2\setminus U$ all null geodesics emanating from $p$ must also lie in $AdS_2\setminus U$. So if $U\neq AdS_2$ then $AdS_2\setminus U=AdS_2$ because any two points in $AdS_2$ can be connected by a curve consisting solely of null geodesic segments.\end{proof}

\appendix
\section{Asymptotics for the curvature}
In this appendix we give some details on the derivation of \eqref{eq:christofflasymptotics}, \eqref{eq:curvatureasymptotics} from \eqref{eq:decaycoordinates}, \eqref{eq:decay'coordinates}. Throughout this appendix we use $C$ to denote a running constant.

In general, if two pairs of functions $\mathring{f}_1,f_1$ and $\mathring{f}_2,f_2$ satisfy $|f_1-\fo_1 |\leq \frac{C}{|x|} |\fo_1(x)|$ and $|f_2-\fo_2 |\leq \frac{C}{|x|} |\fo_2(x)|$ on $\R \setminus [-a,a]$ then $|f_1|\leq C |\fo_1|$, $|f_2|\leq C|\fo_2|$ and  \beq \label{eq:generalproductest} |\fo_1 \fo_2 - f_1 f_2|\leq \frac{C}{|x|} |\fo_1(x)\fo_2(x)| \, \; \mathrm{on} \; \R\setminus [-a,a]. \eeq Using this, \eqref{eq:decaycoordinates} and the form of $\go $ (note that $\sin(\theta)$ is bounded away from zero in the charts we use) allows us to estimate
\begin{multline}
|\det(\go)-\det(g)|\leq |\det(\go)-\prod_{i=1}^4 g_{ii}|+\sum_{\sigma\neq \text{id}} \prod_{i=1}^4 |g_{i\sigma(i)}|\leq \\
\leq \frac{C}{|x|} \cosh^2(x) + \frac{C\cosh^2(x)}{|x|^2} (1+\frac{1}{|x|}+\frac{1}{|x|^2}) \leq \frac{C}{|x|} \cosh^2(x).
\end{multline}
From this we get
\beq \big|\frac{1}{\det(\go)}-\frac{1}{\det(g)}\big| \leq \frac{C}{|x|} \frac{1}{\cosh^2(x)}
\eeq
and using $A^{-1}=\frac{1}{det(A)} \text{adj}(A)$ this gives
\beq \label{eq:inversemetricasymptotics} |\go^{tt}-g^{tt}| \leq \frac{C}{|x|} \frac{1}{\cosh^2(x)},
\;\; |\go^{ti}-g^{ti}| \leq \frac{C}{|x|} \frac{1}{\cosh(x)} 
\;\; \text{and} \;\; |\go^{ij}-g^{ij}| \leq  \frac{C}{|x|}
\eeq
for $i,j\neq t$. Note that these imply
\beq \label{eq:inversemetricbounds} |g^{tt}| \leq  \frac{C}{\cosh^2(x)},
\;\; |g^{ti}| \leq \frac{C}{|x|} \frac{1}{\cosh(x)},
\;\;  |g^{ii}| \leq C\;\; \text{and} \;\;|g^{ij}| \leq \frac{C}{|x|} \; \mathrm{for}\; i\neq j.\eeq

Regarding the Christoffel symbols we note that 
\[\Gamma^l_{ij}=\frac{1}{2}\, g^{lk} \left( \partial_j g_{ki} + \partial_i g_{kj} - \partial_k g_{ij} \right).\]

Since only $\partial_x \go_{tt}$, $\partial_\theta \go_{\phi\phi}$ are non-zero, the estimates of all Christoffels not containing either of those derivatives follow from \eqref{eq:inversemetricbounds} and $|\partial_k g_{ij}|\leq \frac{C\cosh^{\#t}(x)}{|x|}$ for $(k,i,j)\neq (t,x,x),(\theta,\phi,\phi)$. So we have
\beq \label{eq:christoffelbounds} |\Gamma^k_{ij}| \leq  \frac{C}{|x|}\cosh^{\#t}(x)\eeq 
if $(k,i,j)\neq (t,t,x),(x,t,t),(\phi,\phi,\theta),(\theta,\phi,\phi)$.
The remaining Christoffels are $\Gamma^t_{tx}$, $\Gamma^x_{tt}$, $\Gamma^\phi_{\phi\theta}$ and $\Gamma^\theta_{\phi\phi}$. For these, the summands appearing in $|\mathring{\Gamma}-\Gamma |$ for which the $\mathring{g}$-part does not vanish can be estimated using $|\partial_x \go_{tt} -\partial_x g_{tt}|\leq \frac{C}{|x|}|\partial_x \go_{tt}|\leq \frac{C}{|x|}\cosh^2(x)$  (since $\cosh$ and $\sinh$ have the same behaviour at infinity) and $|\partial_\theta \go_{\phi\phi}-\partial_\theta g_{\phi\phi}|\leq \frac{C}{|x|}|\partial_\theta \go_{\phi\phi}|\leq \frac{C}{|x|}$ by \eqref{eq:decay'coordinates}, \eqref{eq:inversemetricasymptotics} and \eqref{eq:generalproductest}. This gives
\beq \label{eq:christofflbounds2} |\mathring{\Gamma}^k_{ij}-\Gamma^k_{ij}|\leq \frac{C}{|x|}\cosh^{\#t}(x)\;\; \mathrm{and}\;\;|\Gamma^k_{ij}|\leq C\cosh^{\#t}(x). \eeq for these four Christoffels.

For the components $R_{iklm}$ of the Riemann tensor we use
\beq R_{iklm}=\frac{1}{2}\left(
\partial_k\partial_l g_{im}
+ \partial_i\partial_m g_{kl}
- \partial_k\partial_m g_{il}
- \partial_i\partial_l g_{km} \right)
+g_{np} \left(
\Gamma^n{}_{kl} \Gamma^p{}_{im} - 
\Gamma^n{}_{km} \Gamma^p{}_{il} \right). \eeq
Again, if those products always contain at least one factor that is zero for $\go$, the desired estimates follows easily from the assumption on $\partial^2h$, $h$ and \eqref{eq:christoffelbounds},\eqref{eq:christofflbounds2}. The remaining two cases are $R_{xtxt}$ and $R_{\theta\phi\theta\phi}$ where $\go_{tt} \left(\mathring{\Gamma}^t_{tx} \mathring{\Gamma}^t_{xt} - 
\mathring{\Gamma}^t_{tt} \mathring{\Gamma}^t_{xx} \right)=\go_{tt} (\mathring{\Gamma}^t_{tx})^2 =\sinh^2(x)$ and $\go_{\phi\phi} \left(\mathring{\Gamma}^\phi_{\phi\theta} \mathring{\Gamma}^\phi_{\theta\phi} - \mathring{\Gamma}^\phi_{\phi\phi} \mathring{\Gamma}^\phi_{\theta\theta} \right)=\go_{\phi\phi} (\mathring{\Gamma}^\phi_{\phi\theta})^2=\cos^2(\theta)$, respectively. For these cases we again use \ref{eq:generalproductest} (and that $\sinh$ and $\cosh$ behave the same at infinity and that in our charts $\sin(\theta)$ is bounded away from zero).

Finally, the asymptotics for $\Ric$ and $R$ follow from \eqref{eq:inversemetricasymptotics},\eqref{eq:inversemetricbounds} and the asymptotics of $R_{iklm}$ using the same arguments.

\section{Weakening of the null energy condition}

In this appendix we wish to indicate how the results of this paper as summarized in Theorems \ref{thm:main} and \ref{thm:main2} continue to hold under the weaker integrated curvature condition \eqref{eq:averic}.  

\smallskip
The NEC enters into the proof of Theorem \ref{thm:main}  in two ways:
\ben
\item It is used in the proof of Lemma \ref{lem:angularvelocitiesfornulllines}.
\item It is used in results such as Theorem \ref{thm:existenceofNullHyp} which rely on the `null splitting theorem', Theorem IV.1 in \cite{G00}.
\een

The following is sufficient to ensure that Lemma \ref{lem:angularvelocitiesfornulllines} holds under the curvature condition \eqref{eq:averic}.

\begin{Proposition}
Assume $(M,g)$ satisfies the curvature condition \eqref{eq:averic}.  If $\eta:(-\infty,\infty) \to M$ is a complete null line then $\Ric(\eta'(s),\eta'(s)) = 0 $ for all $s \in \R$.
\end{Proposition}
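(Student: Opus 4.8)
The plan is to run the standard Raychaudhuri argument for the null geodesic congruence along $\eta$, but with the pointwise NEC replaced by the averaged condition \eqref{eq:averic}. First I would reparametrize $\eta$ affinely, so $\eta:(-\infty,\infty)\to M$ is a complete affinely parametrized null geodesic whose image is achronal. Suppose for contradiction that $\Ric(\eta'(s_0),\eta'(s_0))>0$ for some $s_0$. Since $\eta$ is both future- and past-complete, both restrictions $\eta|_{[s_0,\infty)}$ and $\eta|_{(-\infty,s_0]}$ are complete null rays, and applying \eqref{eq:averic} to each (after the obvious affine reparametrization to get a future, resp.\ past, complete null ray starting at $\eta(s_0)$) gives $\int_{s_0}^\infty \Ric(\eta',\eta')\,ds\ge 0$ and $\int_{-\infty}^{s_0}\Ric(\eta',\eta')\,ds\ge 0$. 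Combined with the strict positivity at $s_0$, one of these tails, say the future one, is strictly positive, so there exist $s_1<s_2$ with $\int_{s_1}^{s_2}\Ric(\eta',\eta')\,ds>0$ (and $s_1$ can be taken as large as we like, in particular past any fixed point).

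Next I would recall the Raychaudhuri equation for the expansion scalar $\hat\theta$ of the null congruence generated by $\eta'$ on a null hypersurface through $\eta$ (or equivalently, work with the expansion of a congruence of null geodesics emanating from $\eta(s_1)$ transverse to $\eta$): $\hat\theta' = -\tfrac12\hat\theta^2 - |\hat\sigma|^2 - \Ric(\eta',\eta')\le -\tfrac12\hat\theta^2 - \Ric(\eta',\eta')$. The key point is that the averaged condition still forces a conjugate point. The standard trick (as in the proof of the averaged-energy-condition singularity theorems, cf.\ Tipler and Borde) is: if $\hat\theta(s_1)\le 0$ and $\int_{s_1}^{s_2}\Ric(\eta',\eta')\,ds>0$ then $\hat\theta$ must blow up to $-\infty$ in finite affine time, producing a conjugate point, because the Riccati inequality $\hat\theta'\le -\tfrac12\hat\theta^2 - \Ric$ with a positive integrated source cannot be satisfied globally by a finite $\hat\theta$. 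To arrange $\hat\theta(s_1)\le 0$ one uses the freedom in choosing the initial cross-section: either start with a congruence focusing at $\eta(s_1-1)$ so that $\hat\theta\to +\infty$ there and thus $\hat\theta$ has already dropped below $0$, or, more simply, invoke that if $\hat\theta$ stays positive for all $s\ge s_1$ then $\int_{s_1}^{\infty}\Ric(\eta',\eta')\,ds$ would have to converge with the remaining terms absorbing it — and one plays the same game toward the past to force $\hat\theta\le 0$ at some point. In any case, a conjugate point along $\eta$ contradicts the achronality (maximality) of the null line $\eta$.

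The main obstacle I anticipate is the bookkeeping needed to guarantee $\hat\theta\le 0$ at a suitable starting parameter while retaining a strictly positive integral of $\Ric$ over a forward segment — this is exactly where the pointwise-NEC proof is trivial (any conjugate point suffices and $\hat\theta$ monotonicity is automatic) but the averaged version requires the two-sided argument: one uses past-completeness and $\int_{-\infty}^{s_0}\Ric\ge 0$ to show $\hat\theta$ cannot stay positive on all of $(-\infty,s_0]$, hence $\hat\theta(s_1)\le 0$ for some $s_1$, and then future-completeness plus the strictly positive forward integral forces the conjugate point. I would cite the relevant focusing lemma for the averaged null energy condition (e.g.\ \cite{Tipler78}, or Borde's refinement, which is already in the spirit of how \eqref{eq:averic} and \cite{Tipler78, EK94} are invoked in the main text) rather than re-deriving the Riccati comparison from scratch. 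Once the existence of a conjugate point is established, the contradiction with achronality of the null line is immediate, giving $\Ric(\eta'(s),\eta'(s))=0$ for every $s$.
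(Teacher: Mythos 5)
Your high-level strategy --- assume $\Ric(\eta'(s_0),\eta'(s_0))>0$, use the averaged condition \eqref{eq:averic} to obtain a strictly positive tail integral, and then force a conjugate point via a Raychaudhuri/Tipler-type focusing argument, contradicting achronality --- is in the right spirit, and the first two steps are fine (shifting the starting point past $s_0$ by a small $\delta$ and combining $\int_{s_0-\delta}^{s_0+\delta}\Ric>0$ with $\int_{s_0+\delta}^\infty \Ric\ge 0$ does give a strictly positive forward tail). However, the step you yourself flag as the ``main obstacle'' is a genuine gap and your proposed fix is incorrect. You suggest starting a congruence focused at $\eta(s_1-1)$, ``so that $\hat\theta\to +\infty$ there and thus $\hat\theta$ has already dropped below $0$.'' That does not follow: the Raychaudhuri inequality only makes $\hat\theta$ decrease, and for a conjugate-point-free null geodesic the expansion of a congruence emanating from a vertex can remain positive for all time (already in Minkowski space $\hat\theta(s)=(n-2)/s\to 0^+$; with indefinite-sign $\Ric$, as permitted here, it can even increase). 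There is in fact no canonical cross-section along a conjugate-point-free null line with $\hat\theta(s_1)<0$; arranging such a cross-section is equivalent to having already focused to the past, i.e.\ a past conjugate point, which you have assumed does not exist. Your alternative remark (``invoke that $\hat\theta$ stays positive $\Rightarrow$ the integral must converge with the remaining terms absorbing it'') is also not a proof --- positivity of $\hat\theta$ on a half-line together with $\int\Ric\ge 0$ does not yield a contradiction by itself.

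Filling this gap properly is exactly the content of \cite[Corollary~3.3]{EK94}, which is what the paper cites: for a complete null geodesic free of conjugate points one has $\int_{-\infty}^{\infty}\Ric(\eta',\eta')\,ds\le 0$, with equality if and only if $\Ric(\eta',\eta')\equiv 0$. The paper's proof then needs only two lines: the averaged condition applied to both half-rays gives $\int_{-\infty}^{\infty}\Ric\ge 0$, so equality holds, and the rigidity statement in \cite[Corollary~3.3]{EK94} gives $\Ric(\eta',\eta')\equiv 0$. This is both cleaner and more robust than the contradiction argument you sketch: it avoids the unresolved cross-section bookkeeping, it handles the borderline case $\int\Ric=0$ (where your Tipler-style lemma needs a \emph{strict} inequality against the initial $a=\hat\theta(s_1)/(n-2)$ that you cannot guarantee), and it yields the full pointwise vanishing in one step rather than only a contradiction with a single positive value. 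If you want to keep a self-contained focusing-style proof, you would essentially have to re-derive \cite[Corollary~3.3]{EK94} using an index-form argument with two-sided test functions exploiting completeness --- which is a genuinely different and longer route than the congruence sketch you give.
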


\proof This follows almost immedetiately from Corollary 3.3 in \cite{EK94}.  Since $\eta$ is a complete null line, it is free of conjugate points. Then, by \cite[Corollary 3.3]{EK94}, 
$$
\int_{-\infty}^{\infty} \Ric(\eta'(s),\eta'(s)) ds \le 0 \,.
$$
But then the curvature condtion \eqref{eq:averic} implies that we have equality in the above.  In this case, \cite[Corollary 3.3]{EK94}  further implies that $\Ric(\eta'(s),\eta'(s)) = 0 $ for all $s \in \R$.\qed

\smallskip
The NEC enters into the proof of \cite[Theorem IV.1]{G00} in only one place, namely through Lemma IV.2.  The following proposition shows that this lemma remains valid under the curvature condition \eqref{eq:averic}. 

\begin{Proposition}\label{prop:support} 
Suppose $S$ is  an achronal $C^0$ future null hypersurface in $(M,g)$ whose null generators  are future geodesically complete.  If along each null generator $\eta: [0, \infty) \to \R$ the Ricci curvature satisfies \eqref{eq:averic} then $S$ has null mean curvature $\theta\ge 0$ in the sense of support hypersurfaces.\end{Proposition}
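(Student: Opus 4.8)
The plan is to follow the proof of \cite[Lemma~IV.2]{G00}, using past light cones as lower support hypersurfaces, but to replace the pointwise Raychaudhuri monotonicity available under the NEC by an integrated Riccati estimate adapted to \eqref{eq:averic}. Concretely, fix $p\in S$ and let $\mu\colon[0,\infty)\to M$ be a null generator of $S$ with $\mu(0)=p$, affinely parametrized; it is future complete by hypothesis, and since $S$ is achronal, $\mu$ is an achronal null geodesic, hence free of conjugate points, and $p$ is not a past cut point of $\mu(L)$ for any $L>0$. Consequently, for every $L>0$ the past light cone $W_L:=\partial J^-(\mu(L))$ is a smooth null hypersurface in a neighbourhood of $p$; its null generator through $p$ is $\mu|_{[0,L]}$, so it is tangent to $S$ at $p$, and since $W_L\subseteq J^-(\mu(L))\subseteq\overline{I^-(\mu(L))}$ lies on the past side of $S$ near $p$ (by achronality of $S$, exactly as in the NEC case), $W_L$ is a lower support hypersurface for $S$ at $p$. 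It therefore suffices to show that its null mean curvature $\theta_L:=\theta_{W_L}(p)$ satisfies $\limsup_{L\to\infty}\theta_L\ge 0$; then, given $\eps>0$, a cone $W_L$ with $L$ large witnesses $\theta\ge-\eps$ at $p$.

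Let $\rho_L:=\tfrac12\theta$ be the half-expansion of the congruence of generators of $W_L$ along $\mu$, a smooth function on $[0,L)$ with $\rho_L(\lambda)\to-\infty$ as $\lambda\to L^-$ (the generators focus at $\mu(L)$) and $\rho_L(0)=\tfrac12\theta_L$. Raychaudhuri's equation in dimension four, after discarding the nonpositive shear term, gives $\rho_L'\le-\rho_L^2-q$ on $[0,L)$ with $q:=\tfrac12\Ric(\mu',\mu')$. By \eqref{eq:averic} applied to $\mu$ and to each of its future-complete subrays $\mu|_{[\lambda_0,\infty)}$ (again achronal, future-inextendible null rays), the tail $G(\lambda):=\int_\lambda^\infty q\,ds$ is everywhere $\ge 0$; it is $C^1$ with $G'=-q$ and $G(\infty)=0$, hence bounded, and we fix $M_0\ge 1$ with $0\le G\le M_0$ (if $G\equiv 0$ we are in the borderline NEC case and $\theta_L\ge-2/L$ by the classical estimate). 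Setting $\sigma_L:=\rho_L-G$ we obtain $\sigma_L'=\rho_L'+q\le-\rho_L^2\le 0$, so $\sigma_L$ is non-increasing on $[0,L)$ while still $\sigma_L\to-\infty$ as $\lambda\to L^-$.

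Now suppose, for contradiction, that $\limsup_L\theta_L<0$, i.e.\ $\rho_L(0)\le-\delta$ for some $\delta>0$ and all large $L$. Since $G(0)\ge 0$ and $\sigma_L$ is non-increasing, $\sigma_L\le-\delta$ on all of $[0,L)$; since $G(\lambda)\to 0$, there is $\lambda_1$ (independent of $L$) so that $\rho_L=\sigma_L+G\le-\delta/2$, hence $\rho_L^2\ge\delta^2/4$, on $[\lambda_1,L)$. Integrating $\sigma_L'\le-\rho_L^2$ then gives $\sigma_L(\lambda)\le-\delta-\tfrac{\delta^2}{4}(\lambda-\lambda_1)$ on $[\lambda_1,L)$, so there is a constant $\lambda_2$, independent of $L$, with $\sigma_L(\lambda_2)\le-2M_0$ whenever $L>\lambda_2$. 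But on $\{\sigma_L\le-2M_0\}$ one has $|\rho_L|=|\sigma_L+G|\ge\tfrac12|\sigma_L|$, so $(-1/\sigma_L)'=\sigma_L^{-2}\sigma_L'\le-\rho_L^2/\sigma_L^2\le-\tfrac14$; as $-1/\sigma_L$ is positive on $[\lambda_2,L)$ and tends to $0$ as $\lambda\to L^-$, this forces $L\le\lambda_2+2/M_0$, contradicting $L\to\infty$. Hence $\limsup_L\theta_L\ge 0$, which is the assertion.

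The main obstacle is precisely the loss of pointwise focusing monotonicity: under the NEC the expansion is non-increasing and one gets $\theta_L(p)\ge-2/L$ for free, whereas under \eqref{eq:averic} the curvature term can transiently defocus $W_L$. The device that rescues the argument is the shifted quantity $\sigma_L=\rho_L-G$, which absorbs the sign-indefinite curvature into the nonnegative, vanishing-at-infinity tail $G$ and so restores monotonicity; one then only has to rule out $\sigma_L$ starting too negative, which would make $W_L$ focus at a bounded affine parameter and hence bound $L$. A minor point to keep in mind is that \eqref{eq:averic} must be read so that $\int_0^\infty\Ric(\mu',\mu')\,ds$ is finite, which is what makes $G$ well defined and bounded; in the setting of Definition~\ref{def:assumptions} this is guaranteed by the curvature decay recorded in \eqref{eq:curvatureasymptotics}.
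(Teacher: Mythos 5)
Your proof is correct, but takes a genuinely different route from the paper's. The paper sidesteps the Riccati shift entirely and instead invokes an ODE comparison lemma from \cite{Gal81}: for the scalar Jacobi problem $x''+p(s)x=0$, $x(0)=1$, $x'(0)=a$, the hypothesis $\int_0^\infty p>a$ forces $x$ to vanish at some finite $r_*$. Taking $p=\tfrac1{n-2}\Ric(\eta',\eta')$ and $a=-\eps$, the cone $\partial J^-(\eta(r))$ with $r>r_*$ supplies the lower support hypersurface; if its expansion at $p$ satisfied $\theta(0)<-(n-2)\eps$, the substitution $y'/y=\theta/(n-2)$, $y(0)=1$, produces a solution of the same type of IVP with larger potential and smaller initial slope, hence dominated by $x$ and so vanishing on $[0,r_*]$, contradicting smoothness of $\theta$ on $[0,r)$. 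You replace this with the direct Riccati estimate $\rho_L'\le-\rho_L^2-q$ and the shift $\sigma_L=\rho_L-G$, $G(\lambda)=\int_\lambda^\infty q$, regaining monotonicity and then forcing the focal point of $W_L$ to occur at bounded affine parameter if $\rho_L(0)\le-\delta$ uniformly in $L$. The trade-off: your argument is elementary and self-contained (no appeal to the external lemma), but, as you yourself flag, it needs $\int_0^\infty\Ric(\mu',\mu')\,ds$ to actually converge so that $G$ is finite and bounded, and it uses \eqref{eq:averic} on every future subray of the generator in order to get $G\ge 0$. The paper's argument needs only $\int_0^\infty p>-\eps$ (which is compatible with the improper integral being $+\infty$) and only along the generator issuing from $p$, so it is slightly more general -- relevant because Proposition~\ref{prop:support} is stated for an arbitrary spacetime, whereas in the asymptotically $AdS_2\times S^2$ setting the decay \eqref{eq:curvatureasymptotics} would indeed guarantee the convergence you rely on.
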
 

We refer the interested reader to \cite{G00} for the definitions of terms being used in the statement of this proposition.  The proof makes use of the following lemma which is proved in \cite[Section 3]{Gal81}.

\begin{Lemma} Consider the intial value problem
\begin{align}\label{ivp}
x'' + p(s) x &= 0  \nonumber\\
x(0) &= 1 \\
x'(0) &= a \nonumber
\end{align}
If $p \in C^\infty([0,\infty))$ satisfies 
\beq
\int_0^{\infty} p(s) ds > a
\eeq
then the unique solution to \eqref{ivp} has a zero on $[0,\infty)$.
\end{Lemma}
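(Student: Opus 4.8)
The plan is to argue by contradiction using a Riccati substitution. Suppose the solution $x$ has no zero on $[0,\infty)$; since $x(0)=1>0$ and $x$ is continuous (indeed smooth, by bootstrapping $x''=-px$ with $p\in C^\infty$), this forces $x(s)>0$ for all $s\ge 0$. Then $v:=x'/x$ is a well-defined smooth function on $[0,\infty)$ with $v(0)=x'(0)/x(0)=a$, and from $x'=vx$ together with $x''=-px$ one obtains the Riccati equation $v'=-p-v^2$. Integrating from $0$ to $T$ gives the identity
\[
v(T)+\int_0^T v(t)^2\,dt \;=\; a-\int_0^T p(t)\,dt ,\qquad T\ge 0 .
\]
Writing $I(T):=\int_0^T v(t)^2\,dt$ (so that $I\ge 0$, $I(0)=0$, and $I'=v^2$), this reads $\int_0^T p = a-v(T)-I(T)$. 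Hence it suffices to exhibit, for each $\varepsilon>0$, arbitrarily large values of $T$ at which $v(T)+I(T)\ge-\varepsilon$: then $\int_0^T p\le a+\varepsilon$ for such $T$, so $\int_0^\infty p\le a+\varepsilon$, and letting $\varepsilon\to 0$ contradicts $\int_0^\infty p>a$.

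The key step is the claim that for every $\varepsilon>0$ there exist arbitrarily large $T$ with $v(T)^2\le(I(T)+\varepsilon)^2$. I would prove this by contradiction: if $v(T)^2>(I(T)+\varepsilon)^2$, i.e.\ $I'(T)>(I(T)+\varepsilon)^2$, for all $T\ge T_\varepsilon$, then
\[
T-T_\varepsilon<\int_{T_\varepsilon}^T\frac{I'(t)}{(I(t)+\varepsilon)^2}\,dt=\frac{1}{I(T_\varepsilon)+\varepsilon}-\frac{1}{I(T)+\varepsilon}\le\frac{1}{\varepsilon},
\]
using $I\ge 0$, which is impossible once $T$ is large. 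Granting the claim, at such a $T$ one has $|v(T)|\le I(T)+\varepsilon$, hence $v(T)+I(T)\ge-|v(T)|+I(T)\ge-\varepsilon$, which is precisely what the reduction in the previous paragraph requires; this completes the argument.

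I expect the main obstacle to be this claim, which is the rigorous substitute for the naive intuition that ``once $v$ turns negative the $v^2$ term drives $v$ to $-\infty$ in finite time'' — an intuition not directly usable here because $p$ is not assumed nonnegative, so the term $-p$ cannot simply be discarded in the Riccati (in)equation. The point is that genuine finite-time blow-up of $v$ is not needed: it suffices to produce a sequence of times along which $v^2$ \emph{fails} to dominate $(I+\varepsilon)^2$, and the elementary differential-inequality estimate above guarantees such times exist. One minor additional remark worth recording is that the argument in fact yields $\liminf_{T\to\infty}\int_0^T p\le a$, so no convergence of the improper integral beyond what is implicit in the statement ``$\int_0^\infty p>a$'' needs to be invoked.
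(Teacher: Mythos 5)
Your Riccati-substitution argument is correct: setting $v=x'/x$ on the (hypothetical) zero-free interval, integrating $v'=-p-v^2$, and ruling out eventual dominance $v^2>(I+\varepsilon)^2$ via the elementary estimate $\int I'/(I+\varepsilon)^2\le 1/\varepsilon$ does establish $\liminf_{T\to\infty}\int_0^T p\le a$, contradicting the hypothesis. The paper does not reproduce a proof but cites \cite{Gal81}, Section~3, where this oscillation lemma is proved by exactly the same Riccati substitution together with a finite-time blow-up/comparison estimate, so your route is essentially the one the paper relies on.
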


\proof[Proof of Proposition \ref{prop:support}]  Given $p \in S$, let $\eta: [0, \infty) \to \R$ be a null generator of $S$ starting at $\eta(0) = p$.   For any $\e > 0$,    we have, 
\beq
\int_0^{\infty} \Ric(\eta'(s), \eta'(s)) ds  > -(n-2) \e \,. 
\eeq
By the lemma, the unique solution $x = x(s)$ to the initial value problem \eqref{ivp}, with 
\beq\label{data}
p(s) = \frac1{n-2} \Ric(\eta'(s),\eta'(s)) \quad \text{and}\quad a = -  \e
\eeq
satisfies 
$x(r_*) = 0$ for some  $r_* \in (0,\infty)$.  We may assume $r_*$ is the first zero of $x(s)$.

Fix $r > r_*$.  As in the proof of \cite[Lemma IV.2]{G00},  by considering $\d J^-(\eta(r))$ we obtain a smooth null hypersurface $S_r$  defined in a neighborhood of $\eta|_{[0,r)}$ such that $S_r$ is a past support hypersurface for $S$ at $p$.

Let $\theta = \theta(s)$ be the null expansion of $S_r$ along $\eta|_{[0,r)}$;
$\theta$ satisfies the Raychaudhuri equation \cite[(II.4)]{G00}.  Let y = y(s) be defined by the substitution,
$$
\frac{y'}{y} = \frac1{n-2} \theta(s)  \,
$$
with $y(0) = 1$.  A standard computation shows that $y$ satisfies the IVP \eqref{ivp} with
\beq
p(s) = \frac1{n-2}\left(\Ric(\eta',\eta') + \s^2 \right) \quad \text{and}\quad a =  \frac1{n-2} \theta(p)  \,.
\eeq

Suppose $\theta(0) < -(n-2) \e$.  By a basic ODE comparison result 
we have $y(s) \le x(s)$ (up to the first zero of $y$), where $x(s)$ is the solution to \eqref{ivp}$+$\eqref{data}. In particular $y(s)$ must go to zero somewhere on $[0, r_*]$.  This implies that  $\theta$ is not defined everywhere on this interval, which is a contradiction since $\theta= \theta(s)$ is smooth on $[0,r)$.  
Thus we must have $\theta(0) \ge -(n-2) \e$.  Since $\e$ is arbitrary, this proves the proposition.\qed

\smallskip
With regard to Theorem \ref{thm:main2}, the additional arguments of Section \ref{sec:parallel}, beyond those of  Section \ref{sec:main}, show that it is sufficient for the NEC, $\Ric(X,X) \ge 0$, to hold for vectors $X$ tangent to null rays.   But this follows trivially from \eqref{eq:averic}, since, under the assumption that $\Ric$ is covariant constant, the integrand is constant.

\bibliographystyle{amsplain}
\bibliography{ads2}

\end{document}